\DeclareMathAlphabet{\mathcalligra}{T1}{calligra}{m}{n}
\DeclareFontShape{T1}{calligra}{m}{n}{<->s*[2.2]callig15}{}
\newcommand{\R}{\ensuremath{\mathbb{R}}}
\newcommand{\RP}{\ensuremath{\mathbb{R}_+}}
\newcommand{\scrp}{\ensuremath{\mathcal{p}}}
\newcommand{\scrq}{\ensuremath{\mathcal{q}}}
\newcommand{\scrm}{\ensuremath{\mathcal{m}}}
\newcommand{\scru}{\ensuremath{\mathcal{u}}}
\newcommand{\scrv}{\ensuremath{\mathcal{v}}}
\newcommand{\scrw}{\ensuremath{\mathcal{w}}}
\newcommand{\scrP}{\ensuremath{\mathscr{P}}}
\newcommand{\scrPhat}{\ensuremath{\hat{\mathscr{P}}}}
\newcommand{\scrC}{\ensuremath{\mathscr{C}}}
\newcommand{\scrR}{\ensuremath{\mathscr{R}}}
\newcommand{\scrK}{\ensuremath{\mathscr{K}}}
\newcommand{\scrV}{\ensuremath{\mathscr{V}}}
\newcommand{\scrM}{\ensuremath{\mathscr{M}}}
\newcommand{\scrT}{\ensuremath{\mathscr{T}}}
\newcommand{\scrH}{\ensuremath{\mathscr{H}}}
\newcommand{\scrQ}{\ensuremath{\mathscr{Q}}}
\newcommand{\supp}{\ensuremath{\mathrm{supp} \,}}
\newcommand{\Cone}{\ensuremath{\mathrm{Cone} \,}}
\newcommand{\coneP}{\ensuremath{\textnormal{Cone\,(\scrP)}}}
\newcommand{\theory}{\ensuremath{(\Sigma,\scrP)}}
\newcommand{\deltam}{\ensuremath{\Delta\scrm}}
\newcommand{\process}{\ensuremath{(\Delta\scrm,\scrq)}}
\newcommand{\MSigma}{\ensuremath{\scrM(\Sigma)}}
\newcommand{\MSigmaZ}{\ensuremath{\scrM^{\circ}(\Sigma)}}
\newcommand{\MSigmaPl}{\ensuremath{\scrM_{+}(\Sigma)}}
\newcommand{\MSigmaPlOne}{\ensuremath{\scrM_{+}^1(\Sigma)}}
\newcommand{\VSigma}{\ensuremath{\scrV(\Sigma)}}
\newcommand{\etao}{\ensuremath{\eta^{\,0}}}
\newcommand{\To}{\ensuremath{T^{\,0}}}
\theoremstyle{plain}
\newtheorem{theorem}{Theorem}[section]
\newtheorem{lemma}[theorem]{Lemma}
\newtheorem{proposition}[theorem]{Proposition}
\newtheorem{corollary}[theorem]{Corollary}
\theoremstyle{definition}
\newtheorem{definition}[theorem]{Definition}
\newtheorem{example}[theorem]{Example}
\theoremstyle{remark}
\newtheorem{rem}[theorem]{Remark}
\newcounter{property}
\begin{document}

\author{Martin Feinberg\thanks{The William G. Lowrie Department of Chemical \& Biomolecular Engineering and Department of Mathematics, The Ohio State University, 151 W. Woodruff Avenue, Columbus, OH 43210 USA.  E-mail: feinberg.14@osu.edu.}
\and Richard B. Lavine\thanks{Department of Mathematics, University of Rochester, Rochester, NY 14627 USA.  E-mail: rdlavine@frontiernet.net
}}

\title{Entropy and Thermodynamic Temperature in Nonequilibrium Classical Thermodynamics as Immediate Consequences of the Hahn-Banach Theorem: II. Properties}

\date{\emph{\today}}
\maketitle
\begin{abstract} 
In a companion article it was shown in a certain precise sense that, for any thermodynamical theory that respects the Kelvin-Planck Second Law, the Hahn-Banach Theorem immediately ensures the existence of a pair of continuous functions of the local material state --- a specific entropy (entropy per mass) and a thermodynamic temperature --- that together satisfy the Clausius-Duhem inequality for every process. There was no requirement that the local states considered be states of equilibrium.  This article addresses questions about properties of the entropy and thermodynamic temperature functions so obtained: To what extent do such temperature functions provide a faithful reflection of ``hotness"? In precisely which Kelvin-Planck theories is such a temperature function essentially unique, and, among those theories, for which is the entropy function also essentially unique? What is a thermometer for a Kelvin-Planck theory, and, for the theory, what properties does the existence of a thermometer confer? In all of these questions, the Hahn-Banach Theorem again plays a crucial role.
\end{abstract}

\newpage
\tableofcontents
\newpage
\numberwithin{equation}{section}

\section{Introduction}

In a companion article \cite{feinberg-lavineEntropy1} we showed in a certain precise sense that, for any thermodynamical theory in which all processes comply with the Kelvin-Planck Second Law of Thermodynamics, there must exist a pair of continuous functions of the local material state --- a specific entropy (entropy per mass) and a thermodynamic temperature --- that together satisfy the Clausius-Duhem inequality for every process. That this is so is an \emph{immediate} consequence of the Hahn-Banach Theorem. For \emph{existence} of these  functions \emph{there is no reliance at all on the presence of special processes, such as very slow reversible ones (e.g., Carnot cycles)}.

In this article the situation will be different. Once again, the Hahn-Banach Theorem will be the central tool,\footnote{The Hahn-Banach Theorem will usually exert itself in the guise of Lemma \ref{lemma:FirstBigLem} or \ref{lem:CDOpposite}.} but this time in examining, for any given  Kelvin-Planck theory,  properties of the Clausius-Duhem entropy-temperature pairs that the theory admits. In almost every theorem contained here we show that, in sharp contrast to the existence theorem, the presence of special processes within the theory is not only sufficient to ensure that the temperature and entropy functions have specific properties \emph{but also necessary}. 

The difference resides in the fact that there is  an inverse relationship between the supply of processes the theory contains and the supply of entropy-temperature pairs that  satisfy the Clausius-Duhem inequality for all such processes. The larger the supply of processes, the smaller the set of Clausius-Duhem entropy-temperature pairs, and vice-versa.  Thus, if the set of entropy-temperature pairs for a given Kelvin-Planck theory is to have a particular property (e.g., essential uniqueness), then the set of processes extant in the theory must be sufficiently large as to ensure that the theory's set of entropy-temperature pairs is suitably narrow. Stipulating the required breadth of the process supply will often amount to specifying that it \emph{must} contain an abundance of processes of a very particular kind (e.g., Carnot cycles).

The brilliant 19$^{\text{th}}$ century pioneers invoked an abundance of  idealized reversible Carnot cycles (traversing only equilibrium states) to deduce, almost simultaneously, \emph{both} the existence \emph{and} the uniqueness of entropy and thermodynamic temperature functions (defined on those states). It is understandable, then, that  the classical arguments might lead to a conflation of, on one hand, existence and uniqueness and, on the other hand, necessary and sufficient conditions for each of these. Although an abundance of Carnot cycles visiting equilibrium states might be \emph{sufficient}, as argued by the pioneers, for the \emph{existence} of entropy and thermodynamic temperature functions, the presence of those cycles is \emph{not necessary}, as we showed in  \cite{feinberg-lavineEntropy1} (under substantially weaker conditions than in \cite{feinberg1986foundations}), nor is there a necessity to restrict the domains of those functions to equilibrium states. Here, among other things, we will develop far more fully ideas in \cite{feinberg1983thermodynamics} and \cite{feinberg1986foundations} to the effect that \emph{particular properties} of the temperature and entropy functions (e.g., uniqueness) actually \emph{require}, remarkably, the presence of something like those  specific processes the pioneers imagined.
\subsection{How This Article is Structured}
	After providing a synopsis of \cite{feinberg-lavineEntropy1} in Section \ref{sec:Synopsis},  we examine in Section \ref{sec:Hotness} the relationship between temperature and hotness. In particular, for a Kelvin-Planck theory we posit definitions, \emph{stated solely in terms of the processes extant within the theory}, of what it means for one local material state to be of the same hotness as another  state and for one to be hotter than another.  We then show that these relations are \emph{precisely} reflected in the set of Clausius-Duhem temperature scales the theory admits. Not only does each temperature scale reflect the hotness relation faithfully, but also if one state has a higher temperature than another on each Clausius-Duhem temperature scale, then the set of processes in the theory must be sufficiently structured as to establish that the first state is indeed hotter than the second. In this implication the Hahn-Banach Theorem plays a crucial role. 
	
	Sections \ref{sec:TempUniqueness} and \ref{sec:EntropyUniqueness} are largely devoted to uniqueness questions. Uniqueness of a Clausius-Duhem temperature scale for a Kelvin-Planck theory is taken up in Section \ref{sec:TempUniqueness}, where we show, among other things, that for such a scale to be essentially unique, not only is it sufficient that the set of processes extant in the theory be abundantly rich in Carnot cycles \emph{but also necessary}. For a Kelvin-Planck theory having an essentially unique Clausius-Duhem temperature scale, we ask in Section \ref{sec:EntropyUniqueness} about circumstances under which its companion Clausius-Duhem specific-entropy function is also essentially unique. Among other things, we show that for essential uniqueness of entropy on the entire state space domain, it is not only sufficient that any two states be connected by a reversible process \emph{but also necessary}. Here again, the Hahn-Banach Theorem is crucial.

	In Section 6 we take cognizance of the fact that two very different bodies---one perhaps a metal rod and the other a liquid solution exhibiting chemical reactions and diffusion, embraced within two very different Kelvin-Planck theories---can exchange heat with each other. With this in mind, we study in Section 6  properties of  a ``conjoined" Kelvin-Planck theory that subsumes smaller distinct ones. Our special focus is on  conjunctions of separate Kelvin-Planck theories of two  different materials,  wherein the first material can serve as a (suitably defined) \emph{thermometer} for the second. In that case, we study how the larger conjoined theory can impart to the second Kelvin-Planck theory  additional ``hotter than" relations and uniqueness properties that were not intrinsic to it.

	Section 7 contains concluding remarks. With an eye toward clarifying and softening distinctions that are sometimes drawn between ``equilibrium" and "non-equilibrium"  thermodynamics, we  review, among other things, what the theorems in this article tell us about the (sometimes conflated) necessary and sufficient conditions for the very separate questions of existence and uniqueness of Clausius-Duhem entropy-temperature pairs.

\section{Synopsis of Part I} \label{sec:Synopsis}

 A \emph{thermodynamical theory} in \cite{feinberg-lavineEntropy1} is an abstraction of  just those features of a  material (or collection of materials) that bear upon statements of the Second Law and, ultimately,  upon statements of the Clausius-Duhem inequality. A particular theory is indicated by a pair of sets, \theory, with $\Sigma$ denoting the theory's \emph{state space} and \scrP\ denoting the set of \emph{processes} experienced by those material bodies the theory is deemed to describe.

Elements of $\Sigma$ are understood to represent  the possible \emph{local} states that might be exhibited in bodies as they experience physical processes captured in \scrP.  The state space will often take the form of a subset of $\mathbb{R}^n$.  In a theory of particular gas, for example, the states might consist of pairs  $[\,p,v] \in \mathbb{R}^2$, where $p$ is the local pressure and $v$ is the local specific volume (the reciprocal of the density). In a theory of a reacting mixture of $n$ chemical species, the states might consist of vectors $[\,c_1, c_2,\dots, c_n, \theta] \in \mathbb{R}^{n+1}$, where $c_i$ is the local molar concentration of the $i^{th}$ species and $\theta$ is the local temperature on some empirical temperature scale. For reasons given in \cite{feinberg-lavineEntropy1} we assume that $\Sigma$ is a compact Hausdorff space.

	A process in \scrP\, is specified by a pair of objects, the \emph{change of condition} for the process, usually denoted \deltam, and the \emph{heating measure} for the process, usually denoted \scrq. For the purposes of interpretation, imagine a body experiencing a physical process that initiates at time $t_i$ and terminates at a final time $t_f$. We begin by indicating what we mean by the body's \emph{condition} at a given instant and then the body's change of condition over the course of the process.
	
	At each instant $t$ during the course of the process, the $condition$ of the body, $\scrm_t$, is a positive Borel measure on $\Sigma$ with the following meaning: for each Borel set $\Lambda \subset \Sigma$, $\scrm_t(\Lambda)$ is the mass of that part of the body consisting of material in local states residing in $\Lambda$. Note that $\scrm_t(\Sigma)$ is the mass of the entire body at time $t$. For the process, the change of condition is given by the signed Borel measure $\deltam := \scrm_{t_f} - \scrm_{t_i}$. From mass conservation it follows that $\deltam(\Sigma) = 0$.
	
	During the course of the process the body experiencing it will exchange heat with the body's exterior. The heating measure \scrq\ for the process is again a signed Borel measure on $\Sigma$ with the following interpretation: For each Borel set $\Lambda \subset \Sigma$, $\scrq(\Lambda)$ is the net amount of heat absorbed from the body's exterior, over the course of the entire process,  by material which, \emph{at the time of absorption}, is in states within $\Lambda$. Note that $\scrq(\Sigma)$ is the net amount of heat absorbed by the body between the inception of the process and its end.
	
	A member of the process set \scrP\ for the theory \theory\  is then identified with an element of the form \scrp := \process. We denote by \MSigma\  the vector space of regular signed measures on $\Sigma$, taken with its weak-star topology and by \MSigmaZ\  the linear subspace of \MSigma\  consisting of all its members which, like \deltam,  take the value $0$ on $\Sigma$; the topology of  \MSigmaZ\  is the one it inherits as a subset of \MSigma. 
	
	We hereafter regard \scrP\ to be a subset of the vector space $\VSigma := \MSigmaZ \oplus \MSigma$, taken with the product topology. By \coneP\  we mean the set of all non-negative multiples of members of \scrP. In the appendix of  \cite{feinberg-lavineEntropy1} we gave reasons to presume that, in theories that describe natural   physical processes, the closure of \coneP\  should be convex. The discussion so far is summarized in the following definition.

\begin{samepage}
\begin{definition} A \textbf{thermodynamical theory} consists of a (compact) Hausdorff set $\Sigma$, called the \emph{state space} of the theory, and  a set $\scrP \subset \VSigma$ such that 
\begin{equation}
\hat{\scrP} := \textrm{cl}\,(\Cone(\scrP))
\end{equation}
is convex. Elements of \scrP\  are the \emph{processes} of the theory,
\end{definition}
\end{samepage}

	 The Kelvin-Planck version of the Second Law asserts, in effect, that it is impossible in a cyclic process for the body experiencing the process to merely absorb heat from its exterior; it must also emit heat to the exterior, in a manner that is qualitatively different from the heat absorption.

	We say that a thermodynamical theory \theory\ is a \emph{Kelvin-Planck theory} if, in a certain precise sense, it complies with the Kelvin-Planck requirement. We regard a process $\scrp = \process$\ in \scrP\ to be \emph{cyclic} if $\deltam = 0$ --- that is, if the condition of the body experiencing the process is the same at the process's beginning and end.  By \MSigmaPl\  we mean the the subset of \MSigma\  consisting of measures that are non-negative on every Borel set. By $(0,\MSigmaPl)$ we mean the set of all members of \VSigma\ of the form $(0,\nu)$, with $\nu \in \MSigmaPl$. We take the  Kelvin-Planck stricture to require   that \scrP\ meet  $(0,\MSigmaPl)$ at most in $(0,0)$; that is, if $(0,\scrq) \in \scrP$ is a cyclic process such that the heating measure \scrq\  is positive on some Borel set in $\Sigma$, then \scrq\ should be negative on some other Borel set. In fact, for reasons explained in \cite{feinberg-lavineEntropy1}, we  also require a little more:
	
\begin{definition} A \textbf{Kelvin-Planck theory} is a thermodynamical theory \theory\ such that
\begin{equation}
\label{eq:KPSecLaw}
\hat{\scrP}\ \cap\ (0,\,\MSigmaPl) = (0,0).
\end{equation}
\end{definition}

\medskip
Equation \eqref{eq:KPSecLaw} amounts to a requirement that  that no nonzero element of the forbidden cone (0,\,\MSigmaPl) is  approximated by vectors  of \VSigma\ that point along directions associated with members of \scrP. 

	The Hahn-Banach Theorem almost immediately leads to the existence, for a Kelvin-Planck theory, of continuous specific-entropy and thermodynamic temperature functions of state that together comply with the Clausius-Duhem inequality for all processes the theory contains. The version\footnote{See Theorem 1.7 in \cite{brezis2011functional}, Theorem 21.12 in \cite{choquet1969lectures} or Corollary 14.4 in \cite{kelley1963linear}} of the Hahn-Banach Theorem we employ is given below.
	
\begin{theorem}{\emph{\textbf{(Hahn-Banach)}}}\label{thm:HahnBanach} Let $V$ be a Hausdorff locally convex topological vector space, and let $A$ and $B$ be non-empty disjoint closed convex subsets of $V$, with $B$ compact. There is a  continuous linear function $f: V \to\, \mathbb{R}$ and a number $\gamma \in \mathbb{R}$ such that
\begin{equation}
f(a)\  <\  \gamma,\  \forall\  a \in A\nonumber
\end{equation}
and
\begin{equation}
f(b)\  >\  \gamma,\  \forall\  b \in B\nonumber.
\end{equation}
In particular, if $A$ is a cone, then
\begin{equation}
f(a)\  \leq\  0,\  \forall\  a \in A\nonumber
\end{equation}
and
\begin{equation}
f(b)\  >\  0,\  \forall\  b \in B\nonumber.
\end{equation}
\end{theorem}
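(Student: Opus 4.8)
The plan is to reduce the two-set separation to the separation of the origin from a single closed convex set, and then to feed that into the analytic form of the Hahn-Banach theorem (extension of a linear functional dominated by a Minkowski gauge), which is the genuine engine behind the geometric statement. First I would form the difference set
\[
C := A - B = \{\,a - b : a \in A,\ b \in B\,\}.
\]
Three observations make $C$ the right object. It is convex, being the difference of two convex sets. It satisfies $0 \notin C$, since $a - b = 0$ would force $a = b \in A \cap B = \emptyset$. And, crucially, it is \emph{closed}: writing $C = A + (-B)$ with $-B$ compact (a continuous image of $B$) and $A$ closed, one invokes the standard fact that in a topological vector space the sum of a compact set and a closed set is closed. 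This is the single place where compactness of $B$ is indispensable; without it $C$ need only be convex, and only a non-strict separation survives.

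Next I would separate the origin from $C$. Because $C$ is closed and $0 \notin C$, its complement is an open neighborhood of $0$, and local convexity of $V$ supplies an open convex balanced neighborhood $W$ of $0$ with $W \cap C = \emptyset$. Applying the first geometric form of Hahn-Banach to the disjoint convex sets $W$ (open) and $C$ produces a nonzero continuous linear functional $g : V \to \R$ and $\beta \in \R$ with $g(w) < \beta \le g(c)$ for all $w \in W$, $c \in C$; since $0 \in W$ gives $g(0) = 0 < \beta$, we have $\beta > 0$ and $g(c) \ge \beta$ on $C$. Unwinding $c = a - b$ and setting $f := -g$ (and using linearity, $g(a)-g(b)=g(c)\ge\beta$) yields the uniform gap
\[
f(a) + \beta \le f(b), \qquad \forall\, a \in A,\ b \in B.
\]

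It remains to extract $\gamma$. Since $f$ is continuous and $B$ is compact, $\inf_{b \in B} f(b)$ is attained and finite, and the gap forces $\sup_{a \in A} f(a) \le \inf_{b \in B} f(b) - \beta$. Hence the open interval $(\sup_A f,\ \inf_B f)$ is nonempty, and any $\gamma$ in it gives $f(a) < \gamma$ for all $a \in A$ and $f(b) > \gamma$ for all $b \in B$, as required. For the cone refinement, suppose $A$ is a cone. For $a \in A$ and $t > 0$ we have $t\,f(a) = f(ta) < \gamma$; letting $t \to \infty$ forces $f(a) \le 0$, so $\sup_A f \le 0$. A nonempty closed cone contains the origin (let $t \downarrow 0$ in $ta \in A$ and use closedness), whence $f(0) = 0$ gives $\sup_A f = 0$ and therefore $\gamma > 0$. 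The inequality $f(b) > \gamma$ then upgrades to $f(b) > 0$, while $f(a) \le 0$ holds throughout $A$.

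The main obstacle is the production of the separating functional itself in the generality of a locally convex space rather than a normed one: one cannot simply inflate $A$ and $B$ by $\varepsilon$-balls. The work is hidden in the first geometric form, which in turn rests on the analytic Hahn-Banach theorem applied to the Minkowski gauge of the convex neighborhood $W$ --- and it is precisely local convexity that guarantees a supply of such gauges, hence of continuous linear functionals, rich enough to carry out the separation. Everything else (closedness of $C$, attainment of the infimum, and the cone scaling argument) is routine once that step is in hand.
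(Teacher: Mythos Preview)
Your argument is correct and follows the standard route: reduce to separating $0$ from the closed convex set $C=A-B$ (closedness coming from closed + compact), separate with the open-set form of Hahn--Banach via a convex neighborhood $W$ of $0$, and then extract the strict gap $\gamma$ using compactness of $B$; the cone refinement by scaling is handled cleanly, including the observation that a closed cone contains $0$ so that $\sup_A f = 0$ and hence $\gamma>0$.

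The paper itself does not prove this theorem at all: it is stated as background and simply referenced to Brezis, Choquet, and Kelley--Namioka. So there is no ``paper's proof'' to compare against; your proposal supplies exactly the kind of standard proof those references contain.
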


\medskip

	Unlike classical arguments for the existence of continuous entropy and thermodynamic functions of state, the following theorem --- the most important in this two-part series --- requires nothing in the way of special processes or the idea of equilibrium states. The proof \cite{feinberg1986foundations,feinberg-lavineEntropy1} follows from the Hahn-Banach Theorem directly. In the theorem statement, $\mathbb{R}_+$ denotes the strictly positive real numbers.

\begin{samepage}
\begin{theorem}[Existence of Entropy and Thermodynamic Temperature]
\label{thm:ExistTempEnt}
 For a thermodynamical theory \theory\ the following are equivalent:
\begin{enumerate}[(i)]
\item \theory\ is a Kelvin-Planck theory.
\item There exist functions $\eta \in \textnormal{C}(\Sigma,\mathbb{R})$ and $T  \in \textnormal{C}(\Sigma,\mathbb{R}_+)$  such that
\begin{equation}\label{eq:MainThmCDIneq}
\int_{\Sigma}\eta\: d(\deltam)\  \geq\  \int_{\Sigma}\frac{d\scrq}{T}, \quad \forall\  (\deltam,\scrq) \in \scrP.
\end{equation}
\end{enumerate}
\end{theorem}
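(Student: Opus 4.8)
The plan is to prove the equivalence by establishing the two implications separately, letting the Hahn-Banach Theorem (Theorem \ref{thm:HahnBanach}) carry the substantive direction $(i)\Rightarrow(ii)$, while $(ii)\Rightarrow(i)$ is a short positivity argument. For $(ii)\Rightarrow(i)$ I would assume $\eta$ and $T$ as in \eqref{eq:MainThmCDIneq} and introduce the linear functional $g(\deltam,\scrq) := \int_\Sigma \eta\, d(\deltam) - \int_\Sigma T^{-1}\, d\scrq$ on \VSigma. Since both integrands are continuous, $g$ is weak-star continuous, so the hypothesis $g \geq 0$ on \scrP\ propagates to every nonnegative multiple of a process and, by continuity, to the closed cone \scrPhat. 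Then for any $(0,\nu) \in \scrPhat$ with $\nu \in \MSigmaPl$ I would evaluate $g(0,\nu) = -\int_\Sigma T^{-1}\, d\nu \geq 0$; because $T^{-1} > 0$ and $\nu \geq 0$, this forces $\nu = 0$, which is precisely \eqref{eq:KPSecLaw}.

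For the principal direction $(i)\Rightarrow(ii)$, the idea is to separate the closed convex cone $A := \scrPhat$ from a compact convex slice of the forbidden cone. I would take $B := \{0\}\times\MSigmaPlOne$, the pairs whose change of condition is $0$ and whose heating measure is a probability measure. Since $\Sigma$ is compact Hausdorff, \MSigmaPlOne\ is weak-star compact and convex, so $B$ is compact, convex, and nonempty, while $A$ is a nonempty closed convex cone by the standing definition of a thermodynamical theory. The Kelvin-Planck condition \eqref{eq:KPSecLaw} gives $A \cap (\{0\}\times\MSigmaPl) = \{(0,0)\}$, and since every member of $B$ has second component of unit mass, $(0,0)\notin B$; hence $A$ and $B$ are disjoint. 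Applying the cone form of Theorem \ref{thm:HahnBanach} then furnishes a weak-star continuous linear functional $f$ on \VSigma\ with $f \leq 0$ on $A$ and $f > 0$ on $B$.

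The next step is to represent $f$ concretely. Because the dual of \MSigma\ under its weak-star topology is $C(\Sigma)$ (and likewise for \MSigmaZ), the functional must have the form $f(\deltam,\scrq) = -\int_\Sigma \eta\, d(\deltam) + \int_\Sigma \psi\, d\scrq$ for some $\eta,\psi \in C(\Sigma,\R)$, with $\eta$ determined only up to an additive constant, exactly as expected since \deltam\ has total mass zero. Testing $f > 0$ against the point masses $\delta_x \in \MSigmaPlOne$ yields $\psi(x) > 0$ for every $x \in \Sigma$, so that $T := 1/\psi$ lies in $C(\Sigma,\RP)$. Finally, $f \leq 0$ on $A \supseteq \scrP$ reads $\int_\Sigma \eta\, d(\deltam) \geq \int_\Sigma \psi\, d\scrq = \int_\Sigma T^{-1}\, d\scrq$ for every process, which is \eqref{eq:MainThmCDIneq}.

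I expect the conceptual crux to be arranging the separation so that Hahn-Banach delivers exactly the structure required. Two points demand care. First, the compact set $B$ must sit inside the forbidden cone yet be bounded away from its apex, so that disjointness from $A$ follows from \eqref{eq:KPSecLaw}; normalizing to probability measures accomplishes this and simultaneously secures weak-star compactness. Second, the abstract separating functional must be identified as a genuine pair of integrals against continuous functions, which rests on the weak-star duality $\MSigma^{*} = C(\Sigma)$ --- precisely the reason \MSigma\ was given its weak-star topology in the first place. By contrast, the passage from an integrated inequality to the pointwise positivity $\psi > 0$, and hence to the strict positivity of $T$ everywhere on $\Sigma$, is immediate from the presence of the Dirac measures among the probability measures and requires no separate regularity argument.
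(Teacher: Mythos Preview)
Your proposal is correct and follows essentially the same approach as the paper's proof (given in the companion article \cite{feinberg-lavineEntropy1}): separate the closed convex cone $\scrPhat$ from the compact convex set $(0,\MSigmaPlOne)$ via the Hahn-Banach Theorem, represent the separating functional through the weak-star duality $\MSigma^{\,*}=C(\Sigma)$, and read off $\psi>0$ by testing against Dirac measures. The reverse implication is handled exactly as you indicate; this is precisely the separation argument later reused in the proof of Lemma~\ref{lemma:FirstBigLem}.
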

\end{samepage}

\begin{definition}[\emph{Entropy, Thermodynamic Temperature}] 
\label{def:CDpair}
Let \theory\ be a Kelvin-Planck theory. An element $(\eta,T)$ of $\textnormal{C}(\Sigma,\mathbb{R}) \times \textnormal{C}(\Sigma,\mathbb{R}_+)$ that satisfies \eqref{eq:MainThmCDIneq} is a \textbf{Clausius-Duhem pair} for the theory. A function $T \in \textnormal{C}(\Sigma,\mathbb{R}_+)$ is a \textbf{Clausius-Duhem temperature scale} for the theory if there exists $\eta \in \textnormal{C}(\Sigma,\mathbb{R})$ such that $(\eta,T)$ is a Clausius-Duhem pair. In that case, $\eta(\cdot)$ is a \textbf{specific-entropy function} for the theory (corresponding to the Clausius-Duhem temperature scale $T(\cdot))$. The set of all Clausius-Duhem temperature scales for the Kelvin-Planck theory \theory\ is denoted $\scrT_{CD}(\Sigma,\scrP)$ or merely $\scrT_{CD}$ when the theory under consideration is apparent.
\end{definition} 

\begin{rem}\label{rem:CD-clP} It will be useful to record for future use an observation made in the proof \cite{feinberg-lavineEntropy1} of Theorem \ref{thm:ExistTempEnt}: If $(\eta,T)$ is a Clausius-Duhem pair for the Kelvin-Planck theory \theory\ --- that is, if it satisfies the inequality in \eqref{eq:MainThmCDIneq} for all members of \scrP\ --- then it also satisfies that inequality for all members of \scrPhat.
\end{rem}

\begin{rem}[\emph{Reversible members of} \scrPhat]
\label{rem:Reversible}
 Note that if\, $\process$ and $-\process$ are both members of \scrP\  (or, more generally, of \scrPhat) then for each Clausius-Duhem pair $(\eta,T)$ we must actually have the equality
\begin{equation}\label{eq:MainThmCDEQUAL}
\int_{\Sigma}\eta\: d(\deltam)\ =  \int_{\Sigma}\frac{d\scrq}{T}.
\end{equation}
\end{rem}
\medskip

\begin{rem}[\emph{Essential uniqueness}] \label{rem:CDPairNonUnique} It is not difficult to see that if, for a thermodynamical theory, $(\eta(\cdot),T(\cdot))$ is a Clausius-Duhem pair then, for any choice of $\alpha \in \RP$ and $\beta \in \R$,
\begin{equation}
\nonumber
(\frac{1}{\alpha}\eta(\cdot) + \beta,\alpha T(\cdot))
\end{equation}
is also a Clausius-Duhem pair. In particular, if $T(\cdot)$ is a Clausius-Duhem temperature scale for a thermodynamical theory, then so is any positive multiple of $T(\cdot)$.  

	However, there might be still other Clausius-Duhem temperature scales that are not of this kind. For this reason, we say that, for a thermodynamical theory, a Clausius-Duhem temperature scale $T(\cdot)$ is \emph{essentially unique} if every other Clausius-Duhem temperature scale for the theory is a positive multiple of $T(\cdot)$. Similarly, if $\eta(\cdot)$ is a Clausius-Duhem specific-entropy function corresponding to a particular Clausius-Duhem temperature scale $T(\cdot)$, we say that  $\eta(\cdot)$ is \emph{essentially unique} if any other Clausius-Duhem entropy scale corresponding to  $T(\cdot)$ differs from  $\eta(\cdot)$ by at most a constant. 
		
	Among other things, we will take up uniqueness questions in the remainder of this article.
\end{rem}
\medskip

\section[Hotness and Its Reflection in Temperature Scales]{Hotness and Its Reflection in  Thermodynamic Temperature Scales} \label{sec:Hotness}

	Theorem \ref{thm:ExistTempEnt} asserts the equivalence of a version of the Kelvin-Planck Second Law and the existence of functions-of-state pairs, consisting of a specific entropy and a thermodynamic temperature, which together satisfy the Clausius-Duhem inequality for all processes the theory contains. Note, however, that as yet there has been no notion of ``hotness" posited for a particular state (as distinct from its temperature), nor has there been posited a meaning for the idea that one state is ``hotter than" another.  
	
	Yet, \emph{hotness} and \emph{hotter than} are notions generally regarded to be inextricable from thermodynamics itself. (Indeed, Clausius's formulation of the Second Law, unlike the Kelvin-Planck formulation, takes ``hotter than" as a primitive idea: \emph{Heat can never pass from a colder to a warmer body without some other change, connected therewith, occurring at the same time}\cite{clausius1854veranderte,pippard1964elements}.)
	
		Temperature is of course supposed to provide a faithful reflection of hotness (the more fundamental notion), and so it should be with the Clausius-Duhem temperature scales derived from the Kelvin-Planck Second Law. To determine whether this is indeed the case, it will be necessary to first posit for a Kelvin-Planck theory \theory\  means by which two states in $\Sigma$ can be judged to be of the same hotness or by which one can be judged hotter than the other. \emph{We take the view that, to the extent that a Kelvin-Planck theory \theory\  can be deemed self-contained, such judgments should derive from examination of the processes described by \scrP\  (or of the elements in its extension \scrPhat).} This view will broaden somewhat in Section \ref{sec:Conjoined} when we consider conjoined thermodynamical theories and the idea of a thermometer.
\medskip

\begin{rem}[\emph{Clausius-Duhem vs$.$ Clausius temperature scales}] \label{rem:ClausVsCDTempScales} In the next few sections we will examine the relationship between hotness and its reflection in Clausius-Duhem temperature scales. Similar questions were addressed in \cite{feinberg1983thermodynamics}, where our concern was with what we called \emph{Clausius temperature scales}. A Clausius temperature scale is one that satisfies the \emph{Clausius inequality}, which is the Clausius-Duhem inequality \eqref{eq:MainThmCDIneq}  \emph{restricted to cyclic processes}. In that case, the left side of \eqref{eq:MainThmCDIneq} reduces to zero ($\Delta \scrm = 0$).  Neither the entropy nor the change of condition plays a role. 

	For this reason, the set of Clausius temperature scales can, for a Kelvin-Planck theory, be different from its set of Clausius-Duhem temperature scales (Definition \ref{def:CDpair}). The delicate relationship between the two sets is discussed in Appendix \ref{app:ClausiusVsCD}. Although theorems in the coming subsections resemble some in \cite{feinberg1983thermodynamics} about Clausius scales, it should be kept in mind that here they are about sets of thermodynamic temperature scales  different from those in \cite{feinberg1983thermodynamics}.
\end{rem}

\subsection{Hotness as Revealed by Processes}

	In deciding for a Kelvin-Planck theory \theory\ the relative hotnesses of two states in $\Sigma$ we will rely heavily on the \emph{cyclic} processes contained in \scrP\ or, more generally, on elements of the form $(0,\scrq)$ in \scrPhat. This is because, after the cycle,  the condition of the body suffering the process is left unchanged and, as a result,  the relationship between heat and work is especially simple. This is explained in the following remark.

\begin{rem} Although the First Law of Thermodynamics plays no formal role here, one aspect of it will help guide our consideration of hotness. If, in a thermodynamical theory \theory, \process\  is a process, then $\scrq(\Sigma)$ is the \emph{net} amount of heat absorbed by the body experiencing the process over the entire course of the process. If the process is cyclic (i.e., if $\deltam=0$), then the First Law indicates that the work done \emph{by} that body during the process is identical to the net heat received, $\scrq(\Sigma)$. Thus, if $\scrq(\Sigma)$ is positive then the body \emph{does} work. If $\scrq(\Sigma) = 0$ then the body does no work, nor is any work done on it. 
\end{rem}
\smallskip
The picture usually invoked for a cyclic process is that of a device (e.g., an engine or a refrigerator) in which there are nontrivial temporal variations in the condition of the body suffering the process, with its final condition restored to what it was at the beginning.
	
	This will be the cyclic-process picture that we will often have in mind. However, there is another one, more closely connected with common physical experience, that  will help motivate the mathematical expression of heat transfer from hot to cold. This other picture is that of a body in a \emph{steady} condition, such as the one envisioned in time-invariant solutions of the familiar heat conduction equation, with heat flux at the body's boundary. Because there is no change of condition over time, $\Delta\scrm = 0$ so that over any fixed time interval the process is, in our sense, cyclic.
	
\begin{example}{\emph{(A Simple Steady Heat Transfer Process)}}
\label{ex:SimpleRodHeatTransfer}
 Imagine a very slim cylindrical tube filled with a sample of the material under consideration, a material having state space $\Sigma$. The tube is insulated along its extent but uninsulated at its ends.  The two ends are immersed in different environments that ensure the permanent and laterally uniform presence of state $\sigma'$ at one end and state $\sigma$ at the other. In the picture envisioned, the sample exhibits no change over time (i.e, the sample's condition $\scrm \in \MSigmaPl$  is time-invariant),  with steady heat inflow at the $\sigma'$ end and steady heat outflow at the $\sigma$ end, the rates of heat flow at both ends being identical. Because the sample has a steady condition and there is no \emph{net} heat receipt, the sample experiences no work.

	In the context of this picture we can associate, with a fixed time interval, a cyclic process $\process \in \VSigma$, with $\deltam = 0$ and $\scrq = \alpha\delta_{\sigma'} -  \alpha\delta_{\sigma}$, where $\delta_{\sigma'}$ and $\delta_{\sigma}$ are Dirac measures in \MSigma\ concentrated at $\sigma'$ and $\sigma$. The positive number $\alpha$ is the amount of heat absorbed at the $\sigma'$ end and emitted at the $\sigma$ end during the stipulated time interval. Note that $\scrq(\Sigma) =0$, which is consistent with the absence of work.
\end{example}

\begin{rem} 
\label{rem:ApproxHeatTrans}
The process considered in Example \ref{ex:SimpleRodHeatTransfer} is an idealized one, for it requires, among other things, means to maintain material at the tube ends in permanent and transversely uniform states $\sigma'$ and $\sigma$, and without any temporal change in the condition of the material along the tube's extent. Although, for a thermodynamical theory \theory, such a process might not actually be represented among members of \scrP\  (the true processes), the idealization might well be a member of  $\scrPhat := \textrm{cl}\,(\Cone(\scrP))$. That is, if   \mbox{$(0,\alpha(\delta_{\sigma'} - \delta_{\sigma}))$} is not among the true processes, it might be approximated arbitrarily closely by true processes (or  multiples of them). In such a case, the presence in $\Cone(\scrP)$\  of those approximations would give the same sense of passage of heat from $\sigma'$ to $\sigma$ as would the idealized example.
\end{rem}
\medskip

	To the extent that the direction of heat transfer in a workless cyclic process should guide our conception of  relative hotness in a theory \theory, the presence in \scrP\ (or \scrPhat) of the process in Example \ref{ex:SimpleRodHeatTransfer} would compel us to say that state $\sigma$ is not hotter than state $\sigma'$. However, we refrain from asserting that $\sigma'$ is hotter than $\sigma$: In view of Remark \ref{rem:ApproxHeatTrans}, it might happen that \scrPhat\  also contains a (reverse) element of the form $(0,\bar{\scrq})$, with $\bar{\scrq} = \alpha\delta_{\sigma} -  \alpha\delta_{\sigma'}$. (See Remark \ref{rem:SameHotnessMotivation} below.) In fact, such a possibility provides a basis for asserting that two different states are of the \emph{same} hotness.  

\begin{definition}
\label{def:SameHotness}
  For a thermodynamical theory \theory, two states $\sigma \in \Sigma$ and $\sigma' \in \Sigma$\, \textbf{are of the same hotness} (denoted  $\sigma \sim \sigma'$) if both $(0, \delta_{\sigma} -  \delta_{\sigma'})$ and $(0, \delta_{\sigma'} -  \delta_{\sigma})$ are members of $\hat{\scrP}$. The equivalence relation $\sim$ induces a partition of $\Sigma$ into equivalence classes called the \textbf{hotness levels} of the thermodynamical theory \theory. We denote by \scrH\  the set of hotness levels induced in $\Sigma$ by \scrP, and we give \scrH\  the quotient topology it inherits from $\Sigma$.
\end{definition}
\begin{rem}
\label{rem:SameHotnessMotivation}
	There is no requirement in Definition \ref{def:SameHotness} that $(0, \delta_{\sigma}- \delta_{\sigma'})$ and \mbox{$(0, \delta_{\sigma'}- \delta_{\sigma})$} be members of \scrP, corresponding to the true processes, only that they be well-approximated by positive multiples of \scrP's members. Such approximations might arise if certain members of  \scrP\  correspond to mathematical encodings of physical processes closely resembling the one described in Example \ref{ex:SimpleRodHeatTransfer},  having small  departures from $\sigma'$ and $\sigma$ at the two cylinder ends, some  giving rise to heat flow in one direction and others inducing heat flow in the opposite direction.
\end{rem}

\begin{rem}[\emph{A Dynamical Variant of Example \ref{ex:SimpleRodHeatTransfer}}] The heat transfer process described in Example \ref{ex:SimpleRodHeatTransfer} was a simple toy one in which the body suffering the process was in a steady condition (although not in what is usually regarded as a thermodynamic equilibrium). The example was intended to provide motivation for the idea,  given in Definition \ref{def:SameHotness}, that two states are of the same hotness. Lest it be thought that an element of the form $(0,\alpha(\delta_{\sigma'} -  \delta_{\sigma}))$, in particular with $\Delta \scrm = 0$, can arise in \scrPhat\  only from consideration of physical processes having a temporally unchanging condition, we provide in Appendix \ref{app:TransientPassiveHeatTransfer} a different and more physically robust example in which such steadiness is  never present.
\end{rem}
\medskip
	The following theorem indicates that, for a Kelvin-Planck theory \theory, not only is it true that two states of the same hotness have the same value on each Clausius-Duhem temperature scale \emph{but also that if they are not distinguished by any such scale, then \scrPhat\ must contain the elements stipulated in Definition \ref{def:SameHotness}}.	 For this, the Hahn-Banach Theorem will play a central role.
\begin{samepage}	
\begin{theorem}
\label{thm:EqualHotness}
 Let $\sigma \in \Sigma$ and $\sigma' \in \Sigma$ be two states of the Kelvin-Planck theory \theory. The following are equivalent:
\begin{enumerate}[(i)]
\item $\sigma$ and $\sigma'$ are of the same hotness.
\item $T(\sigma) = T(\sigma')$ for every Clausius-Duhem temperature scale $T \in \scrT_{CD}$.
\end{enumerate}
\end{theorem}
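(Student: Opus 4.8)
The plan is to handle the two implications separately; the forward direction (i)$\Rightarrow$(ii) is a short computation, while (ii)$\Rightarrow$(i) is where the Hahn--Banach Theorem does the real work.

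For (i)$\Rightarrow$(ii): assume $\sigma \sim \sigma'$, so that both $(0,\delta_{\sigma}-\delta_{\sigma'})$ and $(0,\delta_{\sigma'}-\delta_{\sigma})$ lie in \scrPhat. Fix any Clausius--Duhem pair $(\eta,T)$. By Remark~\ref{rem:CD-clP} it satisfies \eqref{eq:MainThmCDIneq} on all of \scrPhat, and since the element $(0,\delta_{\sigma}-\delta_{\sigma'})$ and its negative both belong to \scrPhat, Remark~\ref{rem:Reversible} forces equality in \eqref{eq:MainThmCDIneq} for that element. As its change-of-condition part is $0$, the left-hand side vanishes, leaving $\tfrac{1}{T(\sigma)}-\tfrac{1}{T(\sigma')}=0$; since $T>0$ this gives $T(\sigma)=T(\sigma')$. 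Because $(\eta,T)$ was arbitrary, (ii) holds.

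For (ii)$\Rightarrow$(i) I would argue by contradiction. Suppose (ii) holds but $\sigma \not\sim \sigma'$; by the symmetry of Definition~\ref{def:SameHotness} under interchange of $\sigma$ and $\sigma'$ we may assume $b := (0,\delta_{\sigma}-\delta_{\sigma'}) \notin \scrPhat$. Now \scrPhat\ is a closed convex cone and $\{b\}$ is compact and convex, so the cone form of Theorem~\ref{thm:HahnBanach} supplies a continuous linear $f:\VSigma\to\R$ with $f\le 0$ on \scrPhat\ and $f(b)>0$. Exactly as in the existence proof of Theorem~\ref{thm:ExistTempEnt}, continuity in the (product) weak-star topology forces $f$ to take the form $f(\deltam,\scrq)=-\int_\Sigma h\,d(\deltam)+\int_\Sigma w\,d\scrq$ for some $h,w\in \textnormal{C}(\Sigma,\R)$. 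The condition $f\le 0$ on \scrPhat\ then reads $\int_\Sigma h\,d(\deltam)\ge\int_\Sigma w\,d\scrq$ for all $\process\in\scrPhat$ --- the shape of a Clausius--Duhem inequality with $w$ in the role of $1/T$ --- while $f(b)>0$ says precisely $w(\sigma)>w(\sigma')$.

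The obstacle is that $w$ need not be positive, so $(h,w)$ is not yet a Clausius--Duhem pair. To repair this I would invoke Theorem~\ref{thm:ExistTempEnt} to fix a genuine Clausius--Duhem pair $(\eta_0,T_0)$. Since $\Sigma$ is compact, $1/T_0$ is bounded below by some $m>0$ and $w$ is bounded, so for all small enough $\lambda>0$ the function $1/T_0+\lambda w$ is strictly positive; set $T_\lambda:=(1/T_0+\lambda w)^{-1}\in \textnormal{C}(\Sigma,\RP)$ and $\eta_\lambda:=\eta_0+\lambda h$. Adding the inequality \eqref{eq:MainThmCDIneq} for $(\eta_0,T_0)$ (valid on \scrPhat\ by Remark~\ref{rem:CD-clP}) to $\lambda$ times $\int_\Sigma h\,d(\deltam)\ge\int_\Sigma w\,d\scrq$ shows that $(\eta_\lambda,T_\lambda)$ satisfies \eqref{eq:MainThmCDIneq} throughout \scrPhat, so $T_\lambda\in\scrT_{CD}$. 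Applying (ii) to both $T_0$ and $T_\lambda$ gives $1/T_0(\sigma)=1/T_0(\sigma')$ and $1/T_\lambda(\sigma)=1/T_\lambda(\sigma')$; subtracting and using $1/T_\lambda=1/T_0+\lambda w$ yields $\lambda\,(w(\sigma)-w(\sigma'))=0$, contradicting $w(\sigma)>w(\sigma')$. Hence $\sigma\sim\sigma'$, establishing (i). The crux of the whole argument is exactly this positivity repair: the separating functional hands us a reciprocal-temperature $w$ of indefinite sign, and it is the compactness of $\Sigma$ together with a reference pair from Theorem~\ref{thm:ExistTempEnt} that upgrades it into an honest Clausius--Duhem temperature scale separating $\sigma$ from $\sigma'$.
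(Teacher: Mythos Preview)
Your proof is correct, but the route in (ii)$\Rightarrow$(i) differs from the paper's. The paper packages the Hahn--Banach work into two lemmas (Lemmas~\ref{lemma:FirstBigLem} and~\ref{lem:CDOpposite}): rather than separating \scrPhat\ from the singleton $\{b\}$, it separates \scrPhat\ from the convex hull of $\{b\}\cup(0,\MSigmaPlOne)$. Because that compact set contains $(0,\MSigmaPlOne)$, the separating functional is automatically strictly positive on each $(0,\delta_\sigma)$, so the resulting $w$ is strictly positive from the outset and \emph{is} already the reciprocal of a Clausius--Duhem temperature scale --- no repair step is needed. Your approach instead separates from the bare singleton, accepts a $w$ of indefinite sign, and then restores positivity by forming $1/T_0+\lambda w$ with a reference pair $(\eta_0,T_0)$ and small $\lambda$. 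Both arguments are sound; the paper's has the advantage that Lemmas~\ref{lemma:FirstBigLem} and~\ref{lem:CDOpposite} are reused verbatim in several later theorems (on ``hotter than'', temperature uniqueness, and entropy uniqueness), while yours is pleasantly self-contained and makes the role of compactness of $\Sigma$ in the positivity step very explicit.
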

\end{samepage}
	Two lemmas will facilitate the proof that $(ii)$ implies $(i)$. $\MSigmaPlOne$\ denotes the set of $\scrm \in \MSigmaPl$ such that $\scrm(\Sigma) = 1$.

\begin{lemma}
\label{lemma:FirstBigLem}
 Let \theory\ be a Kelvin-Planck theory, let $(\scrv,\scrw)$ be an element of \VSigma, and let $\scrK(\scrv, \scrw)$ be the convex hull of  $(\scrv,\scrw) \cup (0,\MSigmaPlOne)$; that is, let
\begin{equation}
\label{eq:KvwDef}
\scrK(\scrv, \scrw) := \{\lambda(\scrv,\scrw) + (1-\lambda)(0,\scru) : \lambda \in [0,1], \scru \in \MSigmaPlOne\}.
\end{equation}
If $\scrK(\scrv, \scrw)$ is disjoint from \scrPhat\ then there is for the theory a Clausius-Duhem pair $(\eta,T)$ such that
\begin{equation}
\int_{\Sigma}\eta\, d\scrv - \int_{\Sigma}\frac{d\scrw}{T} < 0.
\end{equation}
\end{lemma}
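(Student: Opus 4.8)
The plan is to read the Clausius-Duhem pair directly off the separating functional furnished by the Hahn-Banach Theorem (Theorem \ref{thm:HahnBanach}), applied to the two sets $\scrPhat$ and $\scrK(\scrv,\scrw)$ inside the locally convex Hausdorff space $\VSigma$. Since $\scrPhat = \textrm{cl}\,(\Cone(\scrP))$ is by construction a closed convex cone, and since it is assumed disjoint from $\scrK(\scrv,\scrw)$, the only hypothesis of the cone version of the theorem still to be verified is that $\scrK(\scrv,\scrw)$ is compact and convex. Convexity is immediate from the defining formula \eqref{eq:KvwDef}. For compactness I would first note that $\MSigmaPlOne$, the probability measures on the compact Hausdorff space $\Sigma$, is weak-star compact: by the Riesz representation theorem $\MSigma = \textnormal{C}(\Sigma)^*$, the set $\MSigmaPlOne$ is a weak-star closed subset of the unit ball (the conditions $\scrm \geq 0$ and $\scrm(\Sigma)=1$ are each weak-star closed), and Banach-Alaoglu applies. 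Hence $(0,\MSigmaPlOne)$ is compact in $\VSigma$, and $\scrK(\scrv,\scrw)$, being the continuous image of the compact set $[0,1]\times\MSigmaPlOne$ under the affine map $(\lambda,\scru)\mapsto \lambda(\scrv,\scrw)+(1-\lambda)(0,\scru)$, is itself compact.

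With these checks done, I would invoke the cone version of Theorem \ref{thm:HahnBanach} with $A=\scrPhat$ and $B=\scrK(\scrv,\scrw)$, obtaining a continuous linear functional $f:\VSigma\to\R$ with $f(a)\leq 0$ for all $a\in\scrPhat$ and $f(b)>0$ for all $b\in\scrK(\scrv,\scrw)$. The next step is to identify $f$ concretely. Because the topological dual of $\MSigma$ in its weak-star topology is $\textnormal{C}(\Sigma)$, and $\VSigma=\MSigmaZ\oplus\MSigma$ carries the product topology, $f$ must act as $f(\mu,\nu)=\int_\Sigma \phi\,d\mu + \int_\Sigma \psi\,d\nu$ for suitable $\phi,\psi\in\textnormal{C}(\Sigma,\R)$; here $\phi$ is determined only up to an additive constant, but this is harmless because every $\mu$ occupying the first slot satisfies $\mu(\Sigma)=0$, so $\int_\Sigma\phi\,d\mu$ is unaffected.

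It then remains to translate the two separation inequalities into the assertions of the lemma, with attention to signs. Testing $f>0$ against the points $(0,\delta_\sigma)\in(0,\MSigmaPlOne)$ forces $\psi(\sigma)>0$ for every $\sigma\in\Sigma$, so I may set $T:=1/\psi\in\textnormal{C}(\Sigma,\RP)$. Setting $\eta:=-\phi\in\textnormal{C}(\Sigma,\R)$, the inequality $f\leq 0$ on $\scrP\subset\scrPhat$ reads $\int_\Sigma\eta\,d(\deltam)\geq\int_\Sigma d\scrq/T$ for every process $(\deltam,\scrq)\in\scrP$, which is precisely \eqref{eq:MainThmCDIneq}; thus $(\eta,T)$ is a Clausius-Duhem pair for the theory. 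Finally, testing $f>0$ against the point $(\scrv,\scrw)$ itself (the case $\lambda=1$ in \eqref{eq:KvwDef}) gives $-\int_\Sigma\eta\,d\scrv + \int_\Sigma d\scrw/T>0$, that is, $\int_\Sigma\eta\,d\scrv - \int_\Sigma d\scrw/T<0$, the desired conclusion.

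I expect the only genuinely delicate point to be the compactness of $\scrK(\scrv,\scrw)$, which rests entirely on the weak-star compactness of the probability measures; the identification of $f$ with the pair $(\phi,\psi)$ via the duality between $\MSigma$ and $\textnormal{C}(\Sigma)$, and the sign bookkeeping that turns $(\phi,\psi)$ into $(\eta,T)=(-\phi,1/\psi)$, are then routine.
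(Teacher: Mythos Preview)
Your proposal is correct and follows essentially the same approach as the paper: establish that $\scrK(\scrv,\scrw)$ is compact and convex, apply the cone version of the Hahn-Banach separation theorem to $\scrPhat$ and $\scrK(\scrv,\scrw)$, and then read off the Clausius-Duhem pair from the separating functional. The paper's own proof is terser---it cites Choquet for compactness of the convex hull of two compact convex sets and then simply refers back to the Hahn-Banach argument used in the proof of Theorem~\ref{thm:ExistTempEnt} in the companion article---but the content is the same; you have merely unpacked those steps explicitly.
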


\begin{proof} Because the sets $\{(\scrv,\scrw)\}$ and $(0\MSigmaPlOne)$ are both convex and compact, the convex hull of their union, $\scrK(\scrv, \scrw)$, is also convex and compact (\cite{choquet1969lectures}, \S 19.5). Moreover, by hypothesis $\scrK(\scrv, \scrw)$ is disjoint from \scrPhat. To get the desired result, we need only repeat the Hahn-Banach separation argument in the proof \cite{feinberg-lavineEntropy1} of Theorem \ref{thm:ExistTempEnt}, $(i) \Rightarrow (ii)$, with $\scrK(\scrv, \scrw)$ replacing $(0,\MSigmaPlOne)$.
\end{proof}

\begin{lemma}
\label{lem:CDOpposite}
 Let \theory\ be a Kelvin-Planck theory for which $(\eta^{\,\circ},T^{\,\circ})$ is a Clausius-Duhem pair, and let $(\scrv,\scrw) \in \VSigma$ be such that
\begin{equation}
\label{eq:intCD0}
\int_{\Sigma}\eta^{\,\circ}\, d\scrv - \int_{\Sigma}\frac{d\scrw}{T^{\,\circ}} = 0.
\end{equation}
If $(\scrv,\scrw)$ is not a member of \scrPhat, then there is another Clausius-Duhem pair $(\eta,T)$  such that
\begin{equation}
\label{eq:CDStrictIneq}
\int_{\Sigma}\eta\, d\scrv - \int_{\Sigma}\frac{d\scrw}{T} < 0.
\end{equation}
\end{lemma}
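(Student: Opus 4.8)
The plan is to deduce this from Lemma \ref{lemma:FirstBigLem}. That lemma manufactures a Clausius-Duhem pair satisfying exactly the strict inequality \eqref{eq:CDStrictIneq} we want, provided its hypothesis holds --- namely that the convex hull $\scrK(\scrv,\scrw)$ of $(\scrv,\scrw) \cup (0,\MSigmaPlOne)$ is disjoint from \scrPhat. So the entire task is to upgrade the single assumption $(\scrv,\scrw) \notin \scrPhat$ into the stronger statement that \emph{no} point of $\scrK(\scrv,\scrw)$ lies in \scrPhat. Recalling the parametrization \eqref{eq:KvwDef}, a typical point is $\bigl(\lambda\scrv,\ \lambda\scrw + (1-\lambda)\scru\bigr)$ with $\lambda \in [0,1]$ and $\scru \in \MSigmaPlOne$.

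The extreme case $\lambda = 1$ is immediate: every such point equals $(\scrv,\scrw)$, which is outside \scrPhat\ by hypothesis. The work lies in the range $\lambda \in [0,1)$, which I would handle by contradiction. Suppose a point $\bigl(\lambda\scrv,\ \lambda\scrw + (1-\lambda)\scru\bigr)$ with $\lambda \in [0,1)$ belonged to \scrPhat. Since $(\eta^{\,\circ},T^{\,\circ})$ is a Clausius-Duhem pair, Remark \ref{rem:CD-clP} guarantees that it satisfies the inequality \eqref{eq:MainThmCDIneq} on every member of \scrPhat, in particular on this point. Writing that inequality out and using the linearity of both integrals in the measures, the $\scrv$- and $\scrw$-terms collect into $\lambda\bigl(\int_{\Sigma}\eta^{\,\circ}\,d\scrv - \int_{\Sigma} d\scrw/T^{\,\circ}\bigr)$, which vanishes by the hypothesis \eqref{eq:intCD0}. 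What survives is the relation $0 \geq (1-\lambda)\int_{\Sigma} d\scru/T^{\,\circ}$.

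Here is the crux, and the one step that does genuine work: because $T^{\,\circ}$ is continuous and strictly positive and $\scru$ is a nonzero nonnegative measure (indeed $\scru(\Sigma) = 1$), the integral $\int_{\Sigma} d\scru/T^{\,\circ}$ is strictly positive, while $1 - \lambda > 0$; the surviving relation is therefore impossible. Hence no point with $\lambda \in [0,1)$ lies in \scrPhat\ either, and together with the $\lambda = 1$ case this yields $\scrK(\scrv,\scrw) \cap \scrPhat = \emptyset$. Invoking Lemma \ref{lemma:FirstBigLem} then delivers the Clausius-Duhem pair $(\eta,T)$ satisfying \eqref{eq:CDStrictIneq}, completing the proof.

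I expect the main obstacle --- really the only nontrivial point --- to be the recognition that disjointness of the full hull $\scrK(\scrv,\scrw)$ cannot simply be read off from the hypothesis but must be assembled from the two structural facts at hand: the equality \eqref{eq:intCD0}, which is precisely what annihilates the $(\scrv,\scrw)$-contribution along the whole segment, and the strict positivity of $T^{\,\circ}$ together with $\scru(\Sigma) = 1$, which forecloses the leftover inequality. The Kelvin-Planck hypothesis itself enters only indirectly, through the existence of the reference pair $(\eta^{\,\circ},T^{\,\circ})$ and the extension to \scrPhat\ recorded in Remark \ref{rem:CD-clP}.
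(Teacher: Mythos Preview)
Your proposal is correct and follows essentially the same route as the paper's own proof: parametrize $\scrK(\scrv,\scrw)$, use \eqref{eq:intCD0} to annihilate the $(\scrv,\scrw)$-contribution, invoke strict positivity of $T^{\,\circ}$ and $\scru(\Sigma)=1$ to rule out the $\lambda<1$ points from \scrPhat, and then apply Lemma~\ref{lemma:FirstBigLem}. The only cosmetic difference is that the paper computes the Clausius-Duhem functional directly and observes the wrong sign, whereas you phrase the same computation as a contradiction.
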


\begin{proof} We first show that $\scrK(\scrv,\scrw)$ intersects \scrPhat\ at most in $(\scrv,\scrw)$. Note that each element $(\scrv^*,\scrw^*)$ in  $\scrK(\scrv,\scrw)$ is of the form
\begin{equation}
\label{eq:v*w*}
(\scrv^*,\scrw^*) = \lambda (\scrv, \scrw) + (0, [1-\lambda] \scru)
\end{equation}
with $\lambda \in [0,1]$ and $\scru \in \MSigmaPlOne$. For $\lambda < 1$, it follows from \eqref{eq:intCD0}, \eqref{eq:v*w*}, the positivity of \scru, and the positivity of $T^{\circ}$ that
\begin{equation}
\label{eq:WrongWay1}
\int_{\Sigma}\eta^{\,\circ}\, d\scrv^* - \int_{\Sigma}\frac{d\scrw^*}{T^{\,\circ}} < 0, \quad \forall\, (\scrv^*,\scrw*) \in \scrK(\scrv,\scrw).
\end{equation}
Because $(\eta^{\,\circ},T^{\,\circ})$ is a Clausius-Duhem pair for \theory, \eqref{eq:WrongWay1} indicates that no member of $\scrK(\scrv,\scrw)$ can be a member of  \scrPhat, except perhaps for $(\scrv,\scrw)$ itself (i.e., corresponding to $\lambda =1$). Thus if, as in the  hypothesis, $(\scrv,\scrw)$ is not a member of \scrPhat, then  $\scrK(\scrv,\scrw)$ and \scrPhat\ are disjoint. In this case, Lemma \ref{lemma:FirstBigLem} ensures the existence of a Clausius-Duhem pair $(\eta,T)$ such that the (strict) inequality \eqref{eq:CDStrictIneq} is satisfied.
\end{proof}

\begin{proof}[Proof of Theorem \ref{thm:EqualHotness}] First we will show that $(ii)$ implies $(i)$. Suppose, then, that $\sigma$ and $\sigma'$ are not distinguished by any Clausius-Duhem temperature scale. We want to show that both $(0, \delta_{\sigma} -  \delta_{\sigma'})$ and $(0, \delta_{\sigma'} -  \delta_{\sigma})$ are members of $\hat{\scrP}$. Let $(\eta^{\,\circ},T^{\,\circ})$ be a Clausius-Duhem pair for \theory, and let $(\scrv,\scrw) = (0, \delta_{\sigma} -  \delta_{\sigma'})$, Note that
\begin{equation}
\int_{\Sigma}\eta^{\,\circ}\, d\scrv - \int_{\Sigma}\frac{d\scrw}{T^{\,\circ}} = \frac{1}{T^{\circ}(\sigma')}-\frac{1}{T^{\circ}(\sigma)} = 0.
\end{equation}
Now suppose that $(\scrv,\scrw) = (0, \delta_{\sigma} -  \delta_{\sigma'})$ is not a member of \scrPhat. From Lemma \ref{lem:CDOpposite} it follows that there is another Clausius-Duhem pair $(\eta,T)$ such that
\begin{equation}
\int_{\Sigma}\eta\, d\scrv - \int_{\Sigma}\frac{d\scrw}{T} = \frac{1}{T(\sigma')}-\frac{1}{T(\sigma)} < 0.
\end{equation}
This, however, contradicts the supposition that no Clausius-Duhem temperature scale distinguishes $\sigma$ from $\sigma'$. The proof that $(0, \delta_{\sigma'} -  \delta_{\sigma})$ is a member of \scrPhat\ is similar.

	To prove that $(i)$ implies $(ii)$ we suppose that both $(0, \delta_{\sigma'} -  \delta_{\sigma})$ and $(0, \delta_{\sigma} -  \delta_{\sigma'})$ are members of \scrPhat. From Remark \ref{rem:Reversible} it follows that, for every Clausius-Duhem pair, the equality \eqref{eq:MainThmCDEQUAL} obtains, with \process\ taken to be $(0, \delta_{\sigma} -  \delta_{\sigma'})$. From this it follows that $T(\sigma) = T(\sigma')$ for all $T \in \scrT_{CD}$.
\end{proof}
\medskip

	Because, for a particular Clausius-Duhem temperature scale, all states in a hotness level have the same temperature, it makes sense to speak of the ``temperature of a hotness level" relative to that scale. In this sense, Theorem \ref{thm:EqualHotness} enables us to think about temperature as a function of hotness rather than as a function of state.
	
\begin{definition} Let $T:\Sigma \to \RP$ be a Clausius-Duhem temperature scale for a Kelvin-Planck theory with hotness levels \scrH. By $T_{*}: \scrH \to \RP$ we mean the \textbf{Clausius-Duhem temperature scale on \scrH\ induced by \emph{T}} the following way: For $h \in \scrH$ let $\sigma \in \Sigma$ be any member of  $h$; then 
\begin{equation}
T_*(h) := T(\sigma).
\end{equation}
The set of all Clausius-Duhem temperature scales induced on \scrH\  by members of $\scrT_{CD}$ will be denoted by $\scrT_{CD*}$.
\end{definition}

\begin{rem} Before closing this sub-section, we provide for a Kelvin-Planck theory \theory\  a few facts, which, among others, were proved (in a slightly different setting) as Lemma 6.3 in \cite{feinberg1983thermodynamics}:
\begin{enumerate}[(i)]
\item Every $T_{*} \in \scrT_{CD*}$ is continuous.
\item \scrH\  is compact and Hausdorff.
\item Every hotness level, viewed as a subset of $\Sigma$, is compact.
\end{enumerate}
\end{rem}

\subsection[A ``Hotter Than" Relation and Its Reflection in Temperature]{A ``Hotter Than" Relation and Its Reflection in Clausius-Duhem Temperature Scales}

	In this section we will introduce for a Kelvin-Planck theory \theory\ a way of making precise the idea that one hotness level is ``hotter than" a different one. 
	
	We begin by defining a \emph{passive heat transfer} from one hotness level to another. In physical terms, this is a cyclic process, perhaps resembling the one in Example \ref{ex:SimpleRodHeatTransfer}, in which there is no work and in which heat is absorbed only by material in states of identical hotness and is emitted, in equal amount, only by material in states having a different common hotness. By the \emph{support} of a measure $\nu \in \MSigmaPl$, denoted $\supp \nu$, we mean the complement in $\Sigma$ of the largest open set of $\nu$-measure zero.
 
\begin{definition}
\label{def:PassiveHT}
 In a thermodynamical theory \theory, a \textbf{passive heat transfer from hotness level h to hotness level $\mathbf{h}^{\prime}$} is an element \process\ of \scrPhat\   such that $\deltam = 0$ and 
\begin{equation}
\scrq = \mu - \mu^{\prime}, 
\end{equation} 
where $\mu$ and $\mu'$ are members of $\MSigmaPl$ such that
\begin{equation}
\label{eq:passivehtreq}
\supp \mu  \subset h, \quad \supp \mu' \subset h', \quad \mu(h) = \mu'(h') > 0.
\end{equation}
\end{definition}
\medskip

To motivate our definition\footnote{In \cite{feinberg1983thermodynamics} there were four subtly different notions of ``hotter than," each with different consequences.  To the extent they can be compared, Definition \ref{def:HotterThan2}, while different in appearance, is closest in spirit to ``hotter than in the second sense" in \cite{feinberg1983thermodynamics}.}  of \emph{hotter than}, we let \theory\  be a Kelvin-Planck theory with distinct hotness levels $h'$ and $h$. Consider an inventor who believes that, through ingenious design, the set of processes indicated in \scrP\  can be expanded to a larger set \emph{that includes a hitherto unknown passive heat transfer from $h$ to $h'$}. We say that $h'$ is hotter than $h$ \emph{if no such expansion is possible without violating the Kelvin-Planck Second Law.}

\begin{definition}
\label{def:HotterThan2}
 Let \theory\ be a Kelvin-Planck theory with hotness levels $h'$ and $h$. Then \textbf{$\mathbf{h}^{\prime}$ is hotter than h} (denoted $h' \succ h$) if $h'\neq h$ and if a thermodynamical theory $(\Sigma,\scrP^{\dagger})$ violates the Kelvin-Planck Second Law --- that is, $(\Sigma,\scrP^{\,\dagger})$ is not a Kelvin-Planck theory --- whenever $\widehat{\scrP^{\,\dagger}}$ contains \scrP\  and also a passive heat transfer from $h$ to $h'$.
\end{definition}

	The preceding definition makes no mention of temperature. The following theorem asserts, for a Kelvin-Planck theory with hotness levels $h'$ and $h$, that $h'$ is hotter than $h$ in the sense of Definition \ref{def:HotterThan2} \emph{precisely} when $h'$ has a higher temperature than $h$ on \emph{every} Clausius-Duhem temperature scale.

\begin{theorem}
\label{thm:HotterThan2AndTemp}
 Let \theory\ be a Kelvin-Planck theory, and let $\scrT_{CD*}$ be its set of Clausius-Duhem temperature scales (on \scrH). If $h'$ and $h$ are distinct hotness levels, the following are equivalent:
\begin{enumerate}[(i)]
\item $h'$ is hotter than $h$.
\item $T_{*}(h') > T_{*}(h),\quad \forall T_{*} \in \scrT_{CD*}$.  
\end{enumerate}
\end{theorem}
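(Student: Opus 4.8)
The plan is to prove both implications by exploiting the value of the Clausius–Duhem inequality on a passive heat transfer. If $(\eta,T)$ is any Clausius–Duhem pair for \theory\ and $(0,\mu-\mu')$ is a passive heat transfer from $h$ to $h'$, then since $T$ is constant on each hotness level (Theorem \ref{thm:EqualHotness}) and $\supp\mu\subset h$, $\supp\mu'\subset h'$ with $\mu(h)=\mu'(h')=:m>0$, Remark \ref{rem:CD-clP} gives
\begin{equation}
0 \;=\; \int_\Sigma \eta\,d(0) \;\geq\; \int_\Sigma \frac{d(\mu-\mu')}{T} \;=\; m\Big(\frac{1}{T_*(h)}-\frac{1}{T_*(h')}\Big),
\nonumber
\end{equation}
which is equivalent to $T_*(h')\le T_*(h)$. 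Thus the mere presence of a passive heat transfer from $h$ to $h'$ in the extension of a theory forces $T_*(h')\le T_*(h)$ on every admissible scale. Both directions revolve around this identity; it is worth noting in advance that the Hahn–Banach Theorem enters only through the \emph{hard} direction of Theorem \ref{thm:ExistTempEnt} (``Kelvin–Planck $\Rightarrow$ existence of a Clausius–Duhem pair''), whereas the converse (``existence of a pair $\Rightarrow$ Kelvin–Planck'') is elementary.

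For $(ii)\Rightarrow(i)$ I would argue by contradiction. Assume $T_*(h')>T_*(h)$ for every $T_*\in\scrT_{CD*}$, and let $(\Sigma,\scrP^{\dagger})$ be \emph{any} thermodynamical theory with $\widehat{\scrP^{\dagger}}\supset\scrP$ whose extension contains a passive heat transfer from $h$ to $h'$. Were $(\Sigma,\scrP^{\dagger})$ Kelvin–Planck, Theorem \ref{thm:ExistTempEnt} would furnish a Clausius–Duhem pair $(\eta,T)$; since $\scrP\subset\widehat{\scrP^{\dagger}}$ and (Remark \ref{rem:CD-clP}) the inequality holds throughout $\widehat{\scrP^{\dagger}}$, this pair is also a Clausius–Duhem pair for \theory, so $T$ is constant on the hotness levels $h,h'$ of \theory\ and induces some $T_*\in\scrT_{CD*}$. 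Evaluating the displayed identity on the adjoined passive heat transfer yields $T_*(h')\le T_*(h)$, contradicting the hypothesis. Hence no qualifying $\scrP^{\dagger}$ can be Kelvin–Planck, which is precisely the assertion that $h'$ is hotter than $h$. This is exactly where Hahn–Banach is indispensable, as it is the existence half of Theorem \ref{thm:ExistTempEnt} that supplies the pair $(\eta,T)$ certifying the contradiction.

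For $(i)\Rightarrow(ii)$ I would prove the contrapositive by explicitly building a benign Kelvin–Planck extension. Suppose some $T_*\in\scrT_{CD*}$, arising from a Clausius–Duhem pair $(\eta,T)$, satisfies $T_*(h')\le T_*(h)$. Choose representatives $\sigma\in h$, $\sigma'\in h'$, and let $\scrC$ be the closed convex cone generated by $\scrPhat\cup\{(0,\delta_{\sigma}-\delta_{\sigma'})\}$. Putting $\scrP^{\dagger}:=\scrC$, the pair $(\Sigma,\scrP^{\dagger})$ is a thermodynamical theory (indeed $\widehat{\scrP^{\dagger}}=\scrC$ is convex by construction), it contains \scrP, and $(0,\delta_{\sigma}-\delta_{\sigma'})$ is a passive heat transfer from $h$ to $h'$ lying in $\widehat{\scrP^{\dagger}}$. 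I would then check that $(\eta,T)$ is still a Clausius–Duhem pair for $\scrC$: the functional $(\scrv,\scrw)\mapsto\int_\Sigma\eta\,d\scrv-\int_\Sigma d\scrw/T$ is linear and weak-star continuous, is nonnegative on \scrPhat, and on the new generator equals $1/T_*(h')-1/T_*(h)\ge 0$ precisely because $T_*(h')\le T_*(h)$; hence it is nonnegative on the entire generated closed convex cone. By the elementary direction of Theorem \ref{thm:ExistTempEnt} the existence of this pair makes $(\Sigma,\scrC)$ Kelvin–Planck, so $h'$ is \emph{not} hotter than $h$. The main obstacle—and the step deserving care—is exactly this construction: the extension must simultaneously satisfy the structural axiom that $\widehat{\scrP^{\dagger}}$ be convex and remain Kelvin–Planck after the ``wrong-way'' transfer $(0,\delta_{\sigma}-\delta_{\sigma'})$ has been adjoined. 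Taking $\scrP^{\dagger}$ to be the closed convex cone settles convexity automatically, while the sign computation $1/T_*(h')-1/T_*(h)\ge 0$ shows the adjunction preserves the inequality satisfied by $(\eta,T)$; the delicate point is that this works if and only if $T_*(h')\le T_*(h)$, which is the hinge linking the strict temperature inequality in $(ii)$ to the impossibility of such an extension in $(i)$.
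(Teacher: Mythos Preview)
Your proof is correct and follows essentially the same route as the paper's: both directions are argued via contrapositives, with the Hahn--Banach content entering through Theorem~\ref{thm:ExistTempEnt} applied to the enlarged theory. The only cosmetic difference is in the construction of $\scrP^{\dagger}$ for $(i)\Rightarrow(ii)$: the paper takes $\scrP^{\dagger}$ to be the full half-space $\{(\Delta\scrm,\scrq):\int\bar{\eta}\,d\Delta\scrm\ge\int d\scrq/\bar{T}\}$, which is automatically Kelvin--Planck and contains both $\scrP$ and the passive heat transfer, whereas you build the closed convex cone generated by $\scrPhat$ and the single adjoined element and then verify the same functional is nonnegative on it---either choice works.
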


\begin{proof} To show the equivalence of (i) and (ii) we will prove the equivalence of their negations:
\begin{enumerate}[$(i)^\prime$]
\item $h'$ is not hotter than $h$.
\item For \theory\  there is a Clausius-Duhem temperature scale $\bar{T}_{*}(\cdot)$ on \scrH\  such that  $\bar{T}_{*}(h') \leq \bar{T}_{*}(h)$.
\end{enumerate}

To prove that $\mathrm{(i)}^\prime$ implies  $\mathrm{(ii)}^\prime$ we first note that $\mathrm{(i)}^\prime$ requires the existence of a Kelvin-Planck theory $(\Sigma,\scrP^{\,\dagger})$  in which $\widehat{\scrP^{\,\dagger}}$ contains \scrP\  and also a passive heat transfer from $h$ to $h'$, say $\scrp^{\dagger} = (0, \mu - \mu^{\prime})$, where $\mu$ and $\mu'$ are measures in \MSigmaPl\   that satisfy \eqref{eq:passivehtreq}. From Theorem \ref{thm:ExistTempEnt} and Remark \ref{rem:CD-clP} there is a Clausius-Duhem pair $(\bar{\eta},\bar{T})$ for $(\Sigma,\scrP^{\,\dagger})$ such that
\begin{equation}\label{eq:CDOnPdagger}
\int_{\Sigma}\bar{\eta}\: d \scrv\  \geq\  \int_{\Sigma}\frac{d\scrw}{\bar{T}}, \quad \forall\  (\scrv,\scrw) \in \widehat{\scrP^{\dagger}}.
\end{equation}
In particular, the inequality in \eqref{eq:CDOnPdagger} obtains for all members of \scrPhat, so   $(\bar{\eta},\bar{T})$ is also a Clausius-Duhem pair for the original Kelvin-Planck theory \theory. Now let $\bar{T}_{*}(\cdot)$ be the \theory-Clausius-Duhem temperature scale on \scrH\  induced by  $\bar{T}(\cdot)$. Because $\scrp^{\dagger}$ is a member of $\widehat{\scrP^{\dagger}}$, 
\eqref{eq:CDOnPdagger} requires that

\begin{equation}
0 \geq \int_{\Sigma}\frac{d(\mu - \mu')}{\bar{T}} = \frac{\mu(h)}{\bar{T}_{*}(h)} - \frac{\mu(h')}{\bar{T}_{*}(h')}  = \mu(h)(\frac{1}{\bar{T}_{*}(h)} - \frac{1)}{\bar{T}_{*}(h')}).
\end{equation}
This in turn requires that $\bar{T}_{*}(h') \leq \bar{T}_{*}(h)$.

	To show that $\mathrm{(ii)}^\prime$ implies  $\mathrm{(i)}^\prime$, we will begin by supposing that $(\bar{\eta}(\cdot),\bar{T}(\cdot))$ is, for \theory, a Clausius-Duhem pair on $\Sigma$ that gives rise to the temperature scale $\bar{T}_*(\cdot)$ on \scrH\ posited in $\mathrm{(ii)}^\prime$. Moreover, we will let $\scrp^{\dagger} = (0, \mu - \mu^{\prime})$, where $\mu$ and $\mu'$ are members of \MSigmaPl\ satisfying \eqref{eq:passivehtreq}. Let
\begin{equation}
\scrP^{\dagger} := \{\process \in \VSigma:  \int_{\Sigma}\bar{\eta\,}d\,\Delta\scrm \geq \int_{\Sigma}\frac{d\scrq}{\bar{T}}\}.
\end{equation}
Then $(\Sigma,\scrP^{\dagger})$ is a Kelvin-Planck theory, with $\widehat{\scrP^{\dagger}}$ containing both \scrP\ and $\scrp^{\dagger}$. From this $\mathrm{(i)}^\prime$ follows.
\end{proof}

	The following is a corollary of Theorem \ref{thm:HotterThan2AndTemp}.

\begin{corollary} For a Kelvin-Planck theory with hotness levels \scrH, the hotter than relation $\succ$ gives a strict partial order on \scrH.
\end{corollary}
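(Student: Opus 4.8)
The plan is to reduce the entire statement to Theorem \ref{thm:HotterThan2AndTemp}, which re-expresses the relation $\succ$ in terms of the induced family of temperature scales $\scrT_{CD*}$, and then to inherit the order properties from the usual strict order $<$ on $\R$. Recall that a strict partial order is a relation that is irreflexive and transitive (asymmetry then follows automatically). First I would dispose of irreflexivity: Definition \ref{def:HotterThan2} builds in the clause $h' \neq h$, so $h \succ h$ can never hold, and irreflexivity is immediate without any appeal to Theorem \ref{thm:HotterThan2AndTemp}.

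For transitivity, suppose $h'' \succ h'$ and $h' \succ h$. By Theorem \ref{thm:HotterThan2AndTemp} these mean, in addition to the distinctness $h'' \neq h'$ and $h' \neq h$, that $T_*(h'') > T_*(h')$ and $T_*(h') > T_*(h)$ for every $T_* \in \scrT_{CD*}$. Transitivity of $<$ on $\R$ then gives $T_*(h'') > T_*(h)$ for every such $T_*$. To conclude $h'' \succ h$ via Theorem \ref{thm:HotterThan2AndTemp} I must also verify the remaining distinctness clause $h'' \neq h$.

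This distinctness is the one point requiring a little care, and it is where the only nontrivial input enters. If it were the case that $h'' = h$, then $T_*(h'') = T_*(h)$ for every temperature scale, which together with the chain $T_*(h'') > T_*(h') > T_*(h)$ would yield $T_*(h'') > T_*(h'')$, a contradiction, \emph{provided at least one temperature scale exists}. But $\scrT_{CD*}$ is nonempty: the theory is Kelvin-Planck, so Theorem \ref{thm:ExistTempEnt} supplies a Clausius-Duhem pair $(\eta,T)$, whose $T$ induces a member $T_*$ of $\scrT_{CD*}$. Hence $h'' \neq h$, and Theorem \ref{thm:HotterThan2AndTemp} yields $h'' \succ h$, completing transitivity.

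Since $\succ$ is irreflexive and transitive it is automatically asymmetric, so it is a strict partial order; alternatively one sees asymmetry directly, since $h' \succ h$ together with $h \succ h'$ would force $T_*(h') > T_*(h) > T_*(h')$ for any (existing) $T_*$. The main obstacle, such as it is, is merely remembering to invoke the nonemptiness of $\scrT_{CD*}$ in order to secure the distinctness requirement hard-wired into Definition \ref{def:HotterThan2}; everything else is a transcription of the order properties of $<$ through the equivalence furnished by Theorem \ref{thm:HotterThan2AndTemp}.
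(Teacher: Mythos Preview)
Your proof is correct and follows essentially the same approach as the paper, which simply notes that antisymmetry and transitivity are consequences of condition (ii) in Theorem~\ref{thm:HotterThan2AndTemp}. You have merely been more explicit about the details, in particular the irreflexivity built into Definition~\ref{def:HotterThan2} and the nonemptiness of $\scrT_{CD*}$ needed to secure the distinctness clause.
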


\begin{proof} Antisymmetry and transitivity are consequences of \emph{(ii)} in Theorem \ref{thm:HotterThan2AndTemp}.
\end{proof}

	Of course, it might happen that, for a particular Kelvin-Planck theory, \scrH \ is totally ordered by $\succ$.
\smallskip

\begin{rem}[\emph{Totally ordered hotness levels}] It is not generally true that the elements of a set endowed with a total order can be numbered (with real numbers) in such a way as to reflect that order faithfully. A contrived counter-example with a thermodynamic flavor is given in \cite{feinberg1983thermodynamics}. However, an argument along the lines of the proof of Theorem 8.3 in \cite{feinberg1983thermodynamics} gives the following information: \emph{If \scrH\  is the set of hotness levels for a Kelvin-Planck theory and if \scrH\  is totally ordered by $\succ$, then \scrH\ is homeomorphic and order-similar to a subset of the real line. In fact, every $T_* \in \scrT_{CD*}$ reflects the order precisely and provides a homeomorphism between \scrH\  and $T_*(\scrH)$. If $\Sigma$ is connected, then \scrH\  is homeomorphic and order-similar to a closed and bounded interval of the real line.}
\end{rem}

\medskip

	For a Kelvin-Planck theory \theory\  the ordering of the hotness levels in $\Sigma$ by ``$\succ$" can be adapted in an obvious way to make sense of the idea that one state in $\Sigma$ is hotter than another.
	
\begin{definition}\label{def:HotterThanStates} Let \theory\ be a Kelvin-Planck theory, and let $\sigma'$ and $\sigma$ be members of $\Sigma$. Then \textbf{state  $\sigma'$ is hotter than state $\sigma$}, denoted $\sigma' \succ \sigma$, if the hotness level containing $\sigma'$ is hotter than the hotness level containing $\sigma$.
\end{definition}  

The following is an easy corollary of Theorem \ref{thm:HotterThan2AndTemp}.

\begin{corollary}\label{cor:HotterThanStates}
 Let \theory\ be a Kelvin-Planck theory, and let $\scrT_{CD}$ be its set of Clausius-Duhem temperature scales (on $\Sigma$). If $\sigma'$ and $\sigma$ are states in $\Sigma$, the following are equivalent:
\begin{enumerate}[(i)]
\item $\sigma'$ is hotter than $\sigma$.
\item $T(\sigma') > T(\sigma),\quad \forall T \in \scrT_{CD}$.  
\end{enumerate}
\end{corollary}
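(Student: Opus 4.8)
The plan is to reduce the statement about states directly to Theorem \ref{thm:HotterThan2AndTemp} about hotness levels, using Definition \ref{def:HotterThanStates} as the bridge and Theorem \ref{thm:EqualHotness} to handle a single distinctness issue. I would begin by writing $h'$ for the hotness level containing $\sigma'$ and $h$ for the hotness level containing $\sigma$. The key bookkeeping observation is that, for every Clausius-Duhem temperature scale $T \in \scrT_{CD}$ and its induced scale $T_{*} \in \scrT_{CD*}$, the defining relation of $T_{*}$ gives $T(\sigma') = T_{*}(h')$ and $T(\sigma) = T_{*}(h)$; moreover, as $T$ ranges over $\scrT_{CD}$ its induced scale $T_{*}$ ranges over all of $\scrT_{CD*}$, by the very definition of the latter set. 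Consequently the condition ``$T(\sigma') > T(\sigma)$ for every $T \in \scrT_{CD}$'' is word-for-word the condition ``$T_{*}(h') > T_{*}(h)$ for every $T_{*} \in \scrT_{CD*}$.''

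For the direction $(i) \Rightarrow (ii)$ I would argue as follows. If $\sigma'$ is hotter than $\sigma$, then by Definition \ref{def:HotterThanStates} the level $h'$ is hotter than $h$; in particular $h' \neq h$, so Theorem \ref{thm:HotterThan2AndTemp} applies to the distinct levels $h'$ and $h$ and yields $T_{*}(h') > T_{*}(h)$ for all $T_{*} \in \scrT_{CD*}$. Translating this back through the identities $T(\sigma') = T_{*}(h')$ and $T(\sigma) = T_{*}(h)$ gives $(ii)$.

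For $(ii) \Rightarrow (i)$, the only point requiring care — and the single mild obstacle in what is otherwise a purely formal argument — is that Theorem \ref{thm:HotterThan2AndTemp} presupposes that $h'$ and $h$ be distinct, whereas hypothesis $(ii)$ mentions no hotness levels at all. I would dispose of this first: since $(ii)$ forces $T(\sigma') \neq T(\sigma)$ for at least one (in fact every) $T \in \scrT_{CD}$, the contrapositive of Theorem \ref{thm:EqualHotness} rules out $\sigma \sim \sigma'$, whence $h' \neq h$. With distinctness secured, the translated form of $(ii)$ is exactly condition $(ii)$ of Theorem \ref{thm:HotterThan2AndTemp}, so that theorem returns $h' \succ h$, and Definition \ref{def:HotterThanStates} then delivers that $\sigma'$ is hotter than $\sigma$, which is $(i)$. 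The whole corollary thus rests on nothing more than chasing the identifications $T \leftrightarrow T_{*}$ and $\sigma \leftrightarrow h$, with Theorem \ref{thm:EqualHotness} invoked once to guarantee that the two states live in genuinely different hotness levels.
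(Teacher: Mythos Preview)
Your proposal is correct and matches the paper's approach: the paper simply states that this is ``an easy corollary of Theorem \ref{thm:HotterThan2AndTemp}'' without writing out the details, and your argument is precisely the routine unwinding of Definition \ref{def:HotterThanStates} together with the identification $T(\sigma)=T_*(h)$. Your explicit invocation of Theorem \ref{thm:EqualHotness} to secure $h'\neq h$ in the direction $(ii)\Rightarrow(i)$ is a careful touch that the paper leaves implicit.
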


\subsection{Remarks on Variants of the ``Hotter Than" Relation.}

	To the extent that it is \emph{precisely} reflected in \emph{all} Clausius-Duhem temperature scales for a Kelvin-Planck theory (and because it resonates with the Clausius statement of the Second Law), the definition of \emph{hotter than} ($\succ$)  in the preceding subsection is an especially satisfying one.
	
	There is, however, a weaker but more tangible notion of \emph{hotter than} that can sometimes give information when $\succ$ does not, in particular when two hotness levels are not $\succ$-comparable. Although the context is different, a very similar analog of this weaker notion is discussed extensively in \cite{feinberg1983thermodynamics},\footnote{This weaker notion of \emph{hotter than} corresponds to ``hotter than in the first sense" ($_1{\succ}$) in \cite{feinberg1983thermodynamics}.}  so only a few remarks are provided here.

\begin{definition}
\label{def:HotterThan1}
 Let \theory\ be a Kelvin-Planck theory with hotness levels $h'$ and $h$. Then \textbf{$\mathbf{h}^{\prime}$ is weakly hotter than h} (denoted $h'\, _w{\succ}\; h$) if $h'\neq h$ and 
\scrPhat\  contains a member $(0,\scrq)$ with \scrq\ of the form
\begin{equation}
\label{eq:WorkingHeatTrans1}
\scrq = \mu^{\prime} - \mu + \nu, 
\end{equation} 
where $\mu'$, $\mu$, and $\nu$ are members of $\MSigmaPl$ such that
\begin{equation}
\label{eq:WorkingHeatTrans2}
\supp \mu'  \subset h', \quad \supp \mu \subset h, \quad \mu'(h') = \mu(h') > 0.
\end{equation}
(Here $\nu$ can be the zero measure.)  
\end{definition}

\medspace
	Viewed as a process, $(0,\scrq)$ is a cyclic one in which, over the course of the process, all heat \emph{emitted} from the body suffering the process emanates from material of hotness $h$ and in which there is at least as much heat \emph{absorbed} by material of hotness $h'$. The work done by the body suffering the process, $\scrq(\Sigma) = \nu(\Sigma)$, is not negative. (If $\nu=0$ and $(0,\scrq)$ is a member of \scrPhat\ then the process is a passive heat transfer from $h'$ to $h$, and no work is done.) 
	
	Proof of the following theorem is essentially the one given in \cite{feinberg1983thermodynamics}.
\bigskip

\begin{theorem}
\label{thm:HotterThan1AndTemp}
\samepage{
 Let \theory\ be a Kelvin-Planck theory, and let $\scrT_{CD*}$ be its set of Clausius-Duhem temperature scales (on \scrH). If $h'$ and $h$ are distinct hotness levels, the following are equivalent:
\begin{enumerate}[(i)]
\item $h'$ is weakly hotter than $h$.
\item $T_{*}(h') \geq T_{*}(h),\quad \forall T_{*} \in \scrT_{CD*}$.  
\end{enumerate}
}
\end{theorem}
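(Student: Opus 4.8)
The plan is to establish the two implications separately, obtaining $(i)\Rightarrow(ii)$ by a direct computation and $(ii)\Rightarrow(i)$ by contraposition using the Hahn-Banach machinery of Lemma \ref{lemma:FirstBigLem}.

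For $(i)\Rightarrow(ii)$, suppose $h'$ is weakly hotter than $h$, so that \scrPhat\ contains a cyclic element $(0,\scrq)$ with $\scrq=\mu'-\mu+\nu$ as in \eqref{eq:WorkingHeatTrans1}--\eqref{eq:WorkingHeatTrans2}. Fix any Clausius-Duhem pair $(\eta,T)$ with induced scale $T_{*}$. Applying the Clausius-Duhem inequality to this cyclic member of \scrPhat\ (Remark \ref{rem:CD-clP}) makes the left-hand side vanish, leaving $0\geq\int_{\Sigma}d\scrq/T$. Because $T$ is constant on each hotness level (Theorem \ref{thm:EqualHotness}) and $\supp\mu'\subset h'$, $\supp\mu\subset h$, the integrals of $d\mu'/T$ and $d\mu/T$ collapse to $\mu'(h')/T_{*}(h')$ and $\mu(h)/T_{*}(h)$; since $\nu\geq 0$ and $T>0$, the term $\int_{\Sigma}d\nu/T$ is nonnegative and may be discarded. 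Using the balance condition of \eqref{eq:WorkingHeatTrans2} and dividing by the common positive mass yields $1/T_{*}(h')\leq 1/T_{*}(h)$, i.e. $T_{*}(h')\geq T_{*}(h)$. As $(\eta,T)$ was arbitrary, $(ii)$ follows.

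For $(ii)\Rightarrow(i)$ I would argue the contrapositive: assuming $h'$ is not weakly hotter than $h$, I will produce a single Clausius-Duhem scale on which $T_{*}(h')<T_{*}(h)$. Choose representatives $\sigma'\in h'$ and $\sigma\in h$ and set $(\scrv,\scrw)=(0,\delta_{\sigma'}-\delta_{\sigma})$. The key step is to show that $\scrK(\scrv,\scrw)$, the convex hull of $(\scrv,\scrw)\cup(0,\MSigmaPlOne)$, is disjoint from \scrPhat. A generic element of $\scrK(\scrv,\scrw)$ is $(0,\lambda(\delta_{\sigma'}-\delta_{\sigma})+(1-\lambda)\scru)$ with $\lambda\in[0,1]$ and $\scru\in\MSigmaPlOne$. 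For $\lambda=0$ this lies in $(0,\MSigmaPl)$ and is nonzero, so the Kelvin-Planck condition \eqref{eq:KPSecLaw} forbids its membership in \scrPhat. For $\lambda\in(0,1]$, membership in \scrPhat\ would exhibit a weakly-hotter witness: writing $\mu'=\lambda\delta_{\sigma'}$, $\mu=\lambda\delta_{\sigma}$, and $\nu=(1-\lambda)\scru$, all three are nonnegative, $\supp\mu'\subset h'$, $\supp\mu\subset h$, and $\mu'(h')=\lambda=\mu(h)>0$, so $(0,\mu'-\mu+\nu)$ satisfies Definition \ref{def:HotterThan1} and contradicts the standing assumption. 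Hence $\scrK(\scrv,\scrw)$ meets \scrPhat\ nowhere, and Lemma \ref{lemma:FirstBigLem} supplies a Clausius-Duhem pair $(\eta,T)$ with $\int_{\Sigma}\eta\,d\scrv-\int_{\Sigma}d\scrw/T<0$; since $\scrv=0$ this reads $1/T(\sigma)-1/T(\sigma')<0$, i.e. $T(\sigma)>T(\sigma')$, so the induced scale satisfies $T_{*}(h')<T_{*}(h)$, the negation of $(ii)$.

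The main obstacle I anticipate is the disjointness verification in the converse direction --- specifically the bookkeeping that, for every $\lambda\in(0,1]$, a hypothetical intersection point really does furnish a decomposition meeting all of the support and mass-balance requirements of Definition \ref{def:HotterThan1}. The essential point is that the free nonnegative summand $\nu$ in that definition is exactly what lets the $(1-\lambda)\scru$ tail of a convex-hull element be absorbed harmlessly, regardless of where $\scru$ places its mass; this is precisely what makes the weaker ``$\geq$'' conclusion (rather than the strict ``$>$'' of Theorem \ref{thm:HotterThan2AndTemp}) the natural one here. Once disjointness is in hand, the separation step is a verbatim application of Lemma \ref{lemma:FirstBigLem}, with no further work required.
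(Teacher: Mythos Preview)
Your proof is correct and follows essentially the same Hahn-Banach route the paper uses for its neighboring results (and to which the paper defers by citing \cite{feinberg1983thermodynamics}): the forward direction is a direct Clausius-Duhem computation, and the converse is obtained by verifying disjointness of $\scrK(0,\delta_{\sigma'}-\delta_{\sigma})$ from \scrPhat\ and invoking Lemma~\ref{lemma:FirstBigLem}. The only cosmetic point is that the balance condition in \eqref{eq:WorkingHeatTrans2} is visibly a typo for $\mu'(h')=\mu(h)>0$; you have (correctly) used the intended form.
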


\begin{rem} When $h'$ and $h$ constitute a \emph{fixed} pair of hotness levels with $h'\, _w{\succ}\; h$, there must be at least one Clausius-Duhem temperature scale $\bar{T}_*(\cdot)$ on \scrH\  such that  $\bar{T}_*(h') >  \bar{T}_*(h)$, for otherwise the two hotness levels would be identical (Theorem \ref{thm:EqualHotness}). In fact, the set of all members of $\scrT_{CD*}$ that distinguish between $h'$ and $h$ is dense in  $\scrT_{CD*}$ (in the sup-norm topology): For if $T_*(\cdot) \in  \scrT_{CD*}$ is such that $T_*(h') = T_*(h)$ then, by choosing $\varepsilon > 0$ sufficiently small, a distinguishing Clausius-Duhem temperature scale $T_*(\cdot) + \varepsilon \bar{T}_*(\cdot)$ can be made to lie any given neighborhood of $T_*(\cdot)$.   

This, however, leaves open the question of whether there is a \emph{single} Clausius-Duhem temperature scale $T^{\circ}_*(\cdot)$ such that    $T^{\circ}_*(h') > T^{\circ}_*(h)$ for \emph{every} pair of hotness levels with $h'\, _w{\succ}\; h$. Such a temperature scale will indeed exist so long as $\Sigma$ has a countable base of open sets, in particular if it is a metric space \cite{feinberg1983thermodynamics}.
\end{rem}
\medskip
\begin{rem}[\emph{An important consequence of Theorems \ref{thm:HotterThan2AndTemp} and \ref{thm:HotterThan1AndTemp} taken together}] Consider a Kelvin-Planck theory \theory \ with Clausius-Duhem temperature scales $\scrT_{CD*}$. Moreover, suppose that $h'$ and $h$ are hotness levels such that $T_*(h') > T_*(h)$ for all $T_*(\cdot)$ in  $\scrT_{CD*}$. Then Theorem \ref{thm:HotterThan2AndTemp} \emph{prohibits the existence} of a passive heat transfer from $h$ to $h'$, while Theorem  \ref{thm:HotterThan1AndTemp} \emph{requires the existence} in \scrPhat\  either a passive heat transfer from $h'$ to $h$ or, more generally, a member $(0, \scrq)$ of the form given by equations \eqref{eq:WorkingHeatTrans1} and \eqref{eq:WorkingHeatTrans2} (roughly, a cyclic-process transfer of heat from $h'$ to $h$ in which the body experiencing the process has no net work done on it).
\end{rem}

\begin{rem} If, in Definition \ref{def:HotterThan1}, \scrq\ can be chosen such that $\nu \neq 0$, then we say that $h'$ is \emph{strongly hotter than} $h$ (denoted $h'\, _s{\succ}\; h$).\footnote{Relation $_s{\succ}$ is the analog of $_3{\succ}$ in \cite{feinberg1983thermodynamics}.} In this case, $h'\, _s{\succ}\; h$ implies that $T_{*}(h') > T_{*}(h)$ for all $T_{*} \in \scrT_{CD*}$. The converse, however, is not generally true: Even when $h'$ has a higher temperature than $h$ on every Clausius-Duhem temperature scale, $h'$ and $h$ might not be $_s{\succ}$-comparable. Nevertheless, they will be $\succ$-comparable by virtue of Theorem \ref{thm:HotterThan2AndTemp}. (See Appendix \ref{app:ClausiusVsCD}, Example \ref{ex:ClScaleEx1}, in which there are only two states.) In any case, we have the implications $h'\, _s{\succ}\; h \; \Rightarrow \; h'\, {\succ}\; h \; \Rightarrow \; h'\, _w{\succ}\; h$.
\end{rem}

\section[Temperature Scale Uniqueness in Kelvin-Planck Theories]{Thermodynamic Temperature Scale Uniqueness in Kelvin-Planck Theories}
\label{sec:TempUniqueness}

	In this section our interest is in the precise connection between the supply of processes  a Kelvin-Planck theory contains and the essential uniqueness of a Clausius-Duhem temperature scale for the theory, first on the entire state space and then on sub-domains of it. Recall from Remark \ref{rem:CDPairNonUnique} that a Kelvin-Planck theory is said to have an \emph{essentially unique} Clausius-Duhem temperature scale if every such scale for the theory is a positive multiple of some fixed one.
	
	Classical arguments appearing in standard textbooks indicate that, if a thermodynamical theory is suitably well endowed with Carnot cycles, then essential uniqueness of a thermodynamic temperature scale is ensured. In the context of Kelvin-Planck theories, we will prove not only this \emph{but also the converse}: Essential uniqueness of a Clausius-Duhem temperature scale for a Kelvin-Planck theory \theory\ \emph{requires} that \scrPhat\ be suitably rich in what we shall call Carnot elements. Here again the proof relies on the Hahn-Banach Theorem in the form of Lemma \ref{lem:CDOpposite}.

\medskip 

	The following definition, already implicit in much that has already been said, will enable us to include in our discussion special ``idealized processes" that, although not among the actual processes represented in \scrP, might nevertheless be members of \scrPhat, in which case they are well-approximated by the actual processes (or by positive multiples of them).

\begin{definition}\label{def:reversible} A \textbf{reversible element} of a thermodynamical theory \theory\ is a member of \scrPhat\  such that its negative is also a member of \scrPhat.   An \textbf{irreversible element}  is a member of \scrPhat\  that is not reversible.\footnote{In particular, an \emph{irreversible process} is a member of \scrP\  that is not a reversible element of the theory.}  A \textbf{cyclic element} of \theory\ is a member of \scrPhat\ of the form $(0,\scrq)$.
\end{definition}

\begin{rem} In classical thermodynamics a reversible process is generally regarded to be one for which there is an associated ``path" that can be reversed in every detail along the path. In Definition \ref{def:reversible} there is no such insistence on detailed path reversal;  there is only the requirement that both \process\ and its negative be members of \scrPhat.
\end{rem}

\begin{definition}\label{def:CarnotElem} Let \theory\ be a thermodynamical theory. A \textbf{Carnot element} of the theory is a reversible cyclic element $(0, \scrq) \in \scrPhat$, with \scrq\ having a representation  of the following kind: There are hotness levels $h'$ and $h$ such that
\begin{equation}
\scrq = \mu' - \mu,
\end{equation}
where $\mu'$ and $\mu$ are non-zero measures in $\MSigmaPl$ satisfying
\begin{equation}
\supp\, \mu' \subset h' \quad\mathrm{and}\quad \supp\, \mu \subset h.
\end{equation}
In this case, the Carnot element  \textbf{operates between hotness levels ${\mathbf{h'}}$ and $\mathbf{h}$}. In the special case that $\scrq = c'\delta_{\sigma'} -  c\delta_{\sigma}$ where $c'$ and $c$ are positive constants and $\sigma'$ and $\sigma$ are members of $\Sigma$ we say that the Carnot element \textbf{operates between states $\sigma'$ and $\sigma$}.
\end{definition}

\begin{rem}\label{rem:CarnotCycles2States}
 Regarded in terms of the usual textbook picture, a Carnot element operating between states $\sigma'$ and $\sigma$ can be thought of as encoding the limit of extremely narrow Carnot cycles having two minuscule isothermal segments centered on $\sigma'$ and $\sigma$.
\end{rem}

\subsection[Temperature Uniqueness: The Inexorable Role of Carnot Cycles]{Essential Uniqueness of a Thermodynamic Temperature Scale: The Inexorable Role of Carnot Elements}

 In standard textbook arguments, the (tacitly assumed) presence of a large supply of Carnot cycles ensures not only the existence of a thermodynamic temperature scale but also its  \emph{essential uniqueness}.  With respect to uniqueness, Theorem \ref{thm:UniquenessTemp}\footnote{This theorem resembles Theorem 9.1 in \cite{feinberg1983thermodynamics}, but, in the broader setting of this article, a \emph{Clausius-Duhem} temperature scale has a meaning somewhat different from a \emph{Clausius} temperature scale in \cite{feinberg1983thermodynamics}. See Appendix \ref{app:ClausiusVsCD}.} also asserts the converse: For a Kelvin-Planck theory \theory\   to have an essentially unique Clausius-Duhem temperature scale on $\Sigma$, it is \emph{necessary} that the theory be so rich in Carnot elements that there is at least one operating between each pair of hotness levels. For proof of this converse, the Hahn-Banach theorem plays a critical role, again in the guise of  Lemma \ref{lem:CDOpposite}. 
\smallskip 

\begin{samepage}
\begin{theorem}\label{thm:UniquenessTemp}  Let \theory\ be a Kelvin-Planck theory with hotness levels \scrH, and let $T(\cdot)$ be a Clausius-Duhem temperature scale on $\Sigma$.  The following are equivalent:
\begin{enumerate}[(i)]
\item Every Clausius-Duhem temperature scale on $\Sigma$ is a positive multiple of $T(\cdot)$.
\item If \scrq\ is a member of $\scrM(\Sigma)$ that satisfies
\begin{equation}\label{eq:q/TIntegrates to zero}
\int_{\Sigma}\frac{d\scrq}{T} = 0
\end{equation}
then $(0,\scrq)$ is a member of \scrPhat.  
\item For each pair of hotness levels $h' \in \scrH$ and $h \in \scrH$ there is a Carnot element of \theory\  operating between $h'$ and $h$.
\item For each pair of states $\sigma' \in \Sigma$ and $\sigma \in \Sigma$ there is a Carnot element of \theory\ operating between them,  having the form $(0, \scrq)$ with
\begin{equation}\label{eq:qInTwoStateCarnotCycle}
\scrq = c' \,\delta_{\sigma'} -  c\,\delta_{\sigma} \quad \mathrm{and} \quad \frac{c'}{c} = \frac{T(\sigma')}{T(\sigma)}.
\end{equation} 
\end{enumerate}
\end{theorem}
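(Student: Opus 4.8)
The plan is to establish the cyclic chain of implications $(i)\Rightarrow(ii)\Rightarrow(iv)\Rightarrow(iii)\Rightarrow(i)$. Three of these four links are elementary, resting only on explicit two-state heating measures together with Remark \ref{rem:Reversible}; the single link carrying all the analytic weight is $(i)\Rightarrow(ii)$, where the Hahn-Banach Theorem enters through Lemma \ref{lem:CDOpposite}. Throughout I would fix the given scale $T(\cdot)$ and a companion entropy $\eta^{\circ}$ so that $(\eta^{\circ},T)$ is a Clausius-Duhem pair, and I would invoke Theorem \ref{thm:EqualHotness} to know that every Clausius-Duhem temperature scale is constant on each hotness level, so that an induced value $T_{*}(h)$ is well defined.

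For the crux $(i)\Rightarrow(ii)$, suppose $\scrq\in\MSigma$ satisfies $\int_{\Sigma}d\scrq/T=0$, and argue by contradiction, assuming $(0,\scrq)\notin\scrPhat$. With $(\scrv,\scrw)=(0,\scrq)$ one computes $\int_{\Sigma}\eta^{\circ}\,d\scrv-\int_{\Sigma}\frac{d\scrw}{T}=-\int_{\Sigma}\frac{d\scrq}{T}=0$, so the hypothesis of Lemma \ref{lem:CDOpposite} holds with $(\eta^{\circ},T^{\circ})=(\eta^{\circ},T)$. The lemma then produces a second Clausius-Duhem pair $(\eta,\tilde{T})$ with $\int_{\Sigma}d\scrq/\tilde{T}>0$. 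But essential uniqueness $(i)$ forces $\tilde{T}=\alpha T$ for some $\alpha\in\RP$, whence $\int_{\Sigma}d\scrq/\tilde{T}=\alpha^{-1}\int_{\Sigma}d\scrq/T=0$, a contradiction; therefore $(0,\scrq)\in\scrPhat$.

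For $(ii)\Rightarrow(iv)$, given states $\sigma'$ and $\sigma$, I would take $\scrq=T(\sigma')\delta_{\sigma'}-T(\sigma)\delta_{\sigma}$, for which $\int_{\Sigma}d\scrq/T=1-1=0$; the same holds for $-\scrq$. By $(ii)$ both $(0,\scrq)$ and $(0,-\scrq)$ lie in \scrPhat, so $(0,\scrq)$ is a reversible cyclic element, that is, a Carnot element operating between $\sigma'$ and $\sigma$ with the prescribed ratio $c'/c=T(\sigma')/T(\sigma)$ in the sense of Definition \ref{def:CarnotElem}. The implication $(iv)\Rightarrow(iii)$ is then immediate: choosing $\sigma'\in h'$ and $\sigma\in h$, the element supplied by $(iv)$ has heating measure supported in $\{\sigma'\}\subset h'$ and $\{\sigma\}\subset h$, so it operates between $h'$ and $h$.

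Finally $(iii)\Rightarrow(i)$: let $\tilde{T}$ be any Clausius-Duhem scale and pick $\sigma'\in h'$, $\sigma\in h$. A Carnot element $(0,\mu'-\mu)$ between $h'$ and $h$ is reversible, so Remark \ref{rem:Reversible} applied in turn with $(\eta^{\circ},T)$ and with the pair whose temperature is $\tilde{T}$ yields $\mu'(h')/T_{*}(h')=\mu(h)/T_{*}(h)$ and $\mu'(h')/\tilde{T}_{*}(h')=\mu(h)/\tilde{T}_{*}(h)$, using that each support lies in a single hotness level and that $\mu',\mu\neq 0$ prevents vanishing denominators. Dividing the two relations gives $T_{*}(h')/T_{*}(h)=\tilde{T}_{*}(h')/\tilde{T}_{*}(h)$, i.e. $\tilde{T}/T$ takes the same value at $\sigma'$ and $\sigma$; as these states were arbitrary, $\tilde{T}/T$ is a positive constant, which is $(i)$. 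The main obstacle is the lone nonconstructive step $(i)\Rightarrow(ii)$: one must fabricate, from the bare hypothesis of uniqueness, a genuine element of \scrPhat, and it is exactly here that the separation furnished by Lemma \ref{lem:CDOpposite} (hence the Hahn-Banach Theorem) is indispensable, whereas every other link merely manipulates integrals against measures already known to reside in \scrPhat.
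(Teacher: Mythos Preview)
Your proof is correct and follows essentially the same route as the paper: the paper proves $(i)\Rightarrow(ii)\Rightarrow(iii)\Rightarrow(i)$ and then separately $(ii)\Rightarrow(iv)$ and $(iv)\Rightarrow(iii)$, while you arrange these same implications into the single cycle $(i)\Rightarrow(ii)\Rightarrow(iv)\Rightarrow(iii)\Rightarrow(i)$, with the identical use of Lemma~\ref{lem:CDOpposite} for the Hahn--Banach step and the same explicit two-state heating measures and reversibility computations elsewhere.
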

\end{samepage}

\begin{rem}[\emph{Existence vs. Uniqueness, 1}]\label{rem:ExistVsUnique1} In the companion article \cite{feinberg-lavineEntropy1}, it was shown that for any Kelvin-Planck theory there invariably exists a pair of continuous functions on the state space, a specific entropy and a thermodynamic temperature, that complies with the Clausius-Duhem inequality for all processes the theory contains. The \emph{existence} of such a pair followed directly from the Hahn-Banach Theorem and did not require the presence in the theory of special processes such as Carnot cycles or, more generally, reversible processes. 

However, for the essential \emph{uniqueness} of the thermodynamic temperature function, the equivalence of $(i)$ and $(iv)$ in Theorem \ref{thm:UniquenessTemp} indicates that every state should be ``visited" by a Carnot cycle, in particular by a reversible process. Some readers might infer from this that, for the essential \emph{uniqueness} of a thermodynamic temperature scale for given Kelvin-Planck theory,  every member of that theory's state space should, in some sense, be an ``equilibrium" state. This is discussed further in Section \ref{sec:ConclRems,Part II}.
\end{rem}

\begin{proof}[Proof of Theorem \ref{thm:UniquenessTemp}]
We will first show that $(i) - (iii)$ are equivalent, and then we will show that (iv) is equivalent to these.

Proof that $(i)$ implies $(ii)$ is a straightforward application of Lemma \ref{lem:CDOpposite}. To prove that $(ii)$ implies $(iii)$, let $T_*(\cdot)$ be the temperature scale on \scrH\  induced by $T(\cdot$), and let $h'$ and $h$ be hotness levels. Furthermore, let $\scrq$ be a measure\footnote{An example is given by \eqref{eq:qInTwoStateCarnotCycle}, where $\sigma'$ and $\sigma$ are states in $h'$ and $h$.} in  $\scrM(\Sigma)$ of the form 
\begin{equation}
\scrq = \mu' - \mu,
\end{equation}
where $\mu'$ and  $\mu$ are nonzero measures in $\scrM_+(\Sigma)$ having support in $h'$ and $h$ respectively and satisfying the equation
\begin{equation}
\frac{\mu(h')}{\mu(h)} = \frac{T_*(h')}{T_*(h)}.
\end{equation}
 Note that \scrq, so chosen, satisfies \eqref{eq:q/TIntegrates to zero}, as does its negative, so $(ii)$ ensures that $(0,q)$ and $(0,-q)$ are both member of \scrPhat.  In fact, from its form, $(0,q)$ is a Carnot element operating between $h'$ and $h$.

To prove that $(iii)$ implies $(i)$ we suppose that $\bar{T}(\cdot)$ is another Clausius-Duhem temperature scale on $\Sigma$, different from $T(\cdot)$, and that $\sigma_0$ is some fixed state in $\Sigma$.  Our aim will be to show that
\begin{equation}
\bar{T}(\cdot) = \frac{\bar{T}(\sigma_0)}{T(\sigma_0)}T(\cdot).
\end{equation} 
For this purpose, let $\sigma$ be an arbitrary state and let $h$ and $h_0$ be the hotness levels containing $\sigma$ and $\sigma_0$. From $(iii)$ there is a Carnot element operating between $h$ and $h_0$. This is to say that \scrPhat\ contains a reversible element $(0, \mu - \mu_0)$
where $\mu$ and $\mu_0$ are non-zero measures in $\MSigmaPl$ satisfying
\begin{equation}
\supp\, \mu \subset h \quad\mathrm{and}\quad \supp\, \mu_0 \subset h_0.
\end{equation}
If $\bar{T}_*(\cdot)$ and $T_*(\cdot)$ are the temperature scales on \scrH\ corresponding to $\bar{T}(\cdot)$ and $T(\cdot)$, we can invoke the Clausius-Duhem inequality to write, for the reversible element $(0, \mu - \mu_0) \in \scrPhat$,
\begin{equation}
\frac{\mu(h)}{T_*(h)} - \frac{\mu_0(h_0)}{T_*(h_0)} = 0 \quad \textrm{and} \quad \frac{\mu(h)}{\bar{T}_*(h)} - \frac{\mu_0(h_0)}{\bar{T}_*(h_0)} = 0.
\end{equation}
From this it follows that
\begin{equation}
\bar{T_*}(h) = \frac{\bar{T_*}(h_0)}{T_*(h_0)}T_*(h).
\end{equation}
Since $\sigma \in \Sigma$ was arbitrary, we actually have
\begin{equation}
\bar{T}(\sigma) = \frac{\bar{T}(\sigma_0)}{T(\sigma_0)}T(\sigma), \quad \forall\ \sigma \in \Sigma,
\end{equation}
which is what $(i)$ asserts.

Having shown that $(i) - (iii)$ are equivalent, we will now show that these are equivalent to $(iv)$. It is easy to see that $(iv)$ implies $(iii)$. Next we show that $(ii)$ implies $(iv)$. Because \scrq\ given by \eqref{eq:qInTwoStateCarnotCycle} satisfies \eqref{eq:q/TIntegrates to zero}, $(ii)$ ensures that both $(0,\scrq)$ and its negative are members of \scrPhat. In this case, $(iv)$ is satisfied.
\end{proof}
\medskip

	We conclude this subsection with statements of two corollaries of Theorem \ref{thm:UniquenessTemp}, proofs of which (omitted here) are very much like the proofs of Corollaries 9.1 - 9.3 in  \cite{feinberg1983thermodynamics}, although the context there is different (Remark \ref{rem:ClausVsCDTempScales}).  Moreover, the proof of Corollary \ref{cor:UniqueTHalfSpaceCor}  resembles the proof given in the next section of the substantially broader Corollary \ref{cor:CDuniqueness&irreversibility}.

\bigskip
\begin{samepage}
\begin{corollary}\label{cor:UniqueTHalfSpaceCor}
 Let \theory\ be a Kelvin-Planck theory for which all Clausius-Duhem temperature scales on $\Sigma$ are positive multiples of some fixed one, $T(\cdot)$. The reversible cyclic elements of \theory\ are precisely those elements $(0,\scrq) \in \VSigma$ that satisfy
\begin{equation}\label{eq:q/TIntegrates to zero2}
\int_{\Sigma}\frac{d\scrq}{T} = 0.
\end{equation}
In particular, any $(0,\scrq) \in \VSigma$ that satisfies \eqref{eq:q/TIntegrates to zero2} is a member of \scrPhat. Of those $(0,\scrq) \in \VSigma$ that satisfy
\begin{equation}
\int_{\Sigma}\frac{d\scrq}{T} < 0,
\end{equation}
either all are contained in \scrPhat\ or none are. 
\end{corollary}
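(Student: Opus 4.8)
The plan is to read the hypothesis as condition $(i)$ of Theorem \ref{thm:UniquenessTemp}, so that all four equivalent conditions of that theorem are at our disposal; the argument will lean chiefly on $(ii)$. Throughout I would fix a specific-entropy function $\eta$ making $(\eta,T)$ a Clausius-Duhem pair, and recall that $\scrPhat$ is by definition a closed convex cone, hence closed under addition of its members and under multiplication by non-negative scalars.

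First I would establish the characterization of the reversible cyclic elements together with the ``in particular'' clause, which the two directions prove simultaneously. For the forward direction, suppose $(0,\scrq)$ is a reversible cyclic element, so that both $(0,\scrq)$ and $(0,-\scrq)$ lie in $\scrPhat$. Applying Remark \ref{rem:Reversible} to the pair $(\eta,T)$ with the reversible element $(0,\scrq)$ forces the equality \eqref{eq:MainThmCDEQUAL}; since the change of condition is $0$, its left-hand side vanishes and we obtain $\int_{\Sigma}\frac{d\scrq}{T}=0$, which is \eqref{eq:q/TIntegrates to zero2}. Conversely, if $(0,\scrq)\in\VSigma$ satisfies $\int_{\Sigma}\frac{d\scrq}{T}=0$, then $-\scrq$ satisfies the same equation, so condition $(ii)$ of Theorem \ref{thm:UniquenessTemp} places both $(0,\scrq)$ and $(0,-\scrq)$ in $\scrPhat$; hence $(0,\scrq)$ is a reversible cyclic element and, in particular, a member of $\scrPhat$.

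It remains to prove the dichotomy for elements with $\int_{\Sigma}\frac{d\scrq}{T}<0$. Here the plan is to show that if a single such element $(0,\scrq_0)$ lies in $\scrPhat$, then every such element does. Given an arbitrary $(0,\scrq)$ with $a:=\int_{\Sigma}\frac{d\scrq}{T}<0$, and writing $a_0:=\int_{\Sigma}\frac{d\scrq_0}{T}<0$, set $\lambda:=a/a_0>0$. Since $\scrPhat$ is a cone, $(0,\lambda\scrq_0)\in\scrPhat$, and its $T$-integral equals $\lambda a_0=a$, the same as that of $(0,\scrq)$; therefore $\int_{\Sigma}\frac{d(\scrq-\lambda\scrq_0)}{T}=0$, and the clause just proved gives $(0,\scrq-\lambda\scrq_0)\in\scrPhat$. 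Writing $(0,\scrq)=(0,\lambda\scrq_0)+(0,\scrq-\lambda\scrq_0)$ and using that the convex cone $\scrPhat$ is closed under addition yields $(0,\scrq)\in\scrPhat$, which establishes the dichotomy by contraposition. The only point demanding care is the additive closure of $\scrPhat$, the routine fact that a convex cone is closed under sums of its members, which I would state explicitly rather than belabor. Since Theorem \ref{thm:UniquenessTemp} does the heavy lifting, I anticipate no serious obstacle: the corollary is essentially a bookkeeping consequence of conditions $(i)$ and $(ii)$ together with the linear (convex-cone) structure of $\scrPhat$.
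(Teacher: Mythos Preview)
Your proposal is correct and follows essentially the same approach the paper indicates: it mirrors the proof of Corollary~\ref{cor:CDuniqueness&irreversibility}, specializing to cyclic elements and invoking condition $(ii)$ of Theorem~\ref{thm:UniquenessTemp} in place of the hyperplane condition of Theorem~\ref{thm:CDPairUniqueness}. The decomposition $(0,\scrq)=(0,\lambda\scrq_0)+(0,\scrq-\lambda\scrq_0)$ with $\lambda=a/a_0$ is exactly the analog of the paper's $\gamma$-scaling argument, and your handling of both directions of the reversibility characterization matches as well.
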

\end{samepage}
	A  component of standard textbook arguments underlying the foundations of classical thermodynamics, in  particular the existence of an entropy as a function of state, relies on an assertion to the effect that any cyclic reversible process can be approximated by combinations of Carnot cycles. (See, for example, page 35 in \cite{denbigh1981principles}.) The following ensures that, for any Kelvin-Planck theory in which there is an essentially unique Clausius-Duhem temperature scale, the supply of Carnot elements is sufficiently large as to make that assertion true. 

\begin{corollary}[Approximating reversible cyclic elements by combinations of Carnot elements] Let \theory\ be a Kelvin-Planck theory for which all Clausius-Duhem temperature scales on $\Sigma$ are positive multiples of some fixed one. The set of all linear combinations of Carnot elements of \theory\ is dense in the set of all reversible cyclic elements of \theory.
\end{corollary}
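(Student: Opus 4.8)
The plan is to combine Corollary \ref{cor:UniqueTHalfSpaceCor} with a single Hahn--Banach separation, in the spirit of Lemma \ref{lem:CDOpposite}. First I would use Corollary \ref{cor:UniqueTHalfSpaceCor} to trade the geometric description of the target set for an analytic one: under the hypothesis, the reversible cyclic elements of \theory\ are \emph{exactly} those $(0,\scrq)\in\VSigma$ with $\int_{\Sigma} d\scrq/T = 0$. Writing $L(\scrq):=\int_{\Sigma} d\scrq/T$, which is a weak-star continuous linear functional on \MSigma\ because $1/T\in\mathrm{C}(\Sigma,\RP)$, the target set is the slice $H:=\{(0,\scrq):L(\scrq)=0\}$, a weak-star closed linear subspace of \VSigma. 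Since every Carnot element $(0,\scrq)$ satisfies $L(\scrq)=0$ (it is reversible and cyclic, hence lies in $H$ by Corollary \ref{cor:UniqueTHalfSpaceCor}), the subspace $\mathcal{C}$ of all linear combinations of Carnot elements is contained in $H$. The goal is therefore to show $\mathrm{cl}\,\mathcal{C}=H$.

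For the density I would argue by contradiction, using the standard consequence of the Hahn--Banach Theorem (Theorem \ref{thm:HahnBanach}) that a point lying outside the closure of a linear subspace of a locally convex space can be separated from it by a continuous linear functional annihilating the subspace. Suppose $\mathcal{C}$ were not dense in $H$, so that some $(0,\scrq_0)\in H$ lies outside $\mathrm{cl}\,\mathcal{C}$. Because every element in sight has vanishing change of condition and the weak-star dual of \MSigma\ is $\mathrm{C}(\Sigma,\R)$, such a separating functional acts on the relevant slice as $\scrq\mapsto\int_{\Sigma} g\,d\scrq$ for some $g\in\mathrm{C}(\Sigma,\R)$, with $\int_{\Sigma} g\,d\scrq=0$ for all $(0,\scrq)\in\mathcal{C}$ but $\int_{\Sigma} g\,d\scrq_0\neq 0$.

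The crux is then to identify $g$. Here I would invoke Theorem \ref{thm:UniquenessTemp}, part $(iv)$: for every pair of states $\sigma',\sigma$ the theory contains a Carnot element of the form $(0,\,T(\sigma')\delta_{\sigma'}-T(\sigma)\delta_{\sigma})$ (take $c'=T(\sigma')$ and $c=T(\sigma)$ in \eqref{eq:qInTwoStateCarnotCycle}), and these lie in $\mathcal{C}$. Annihilation by $g$ gives $T(\sigma')g(\sigma')=T(\sigma)g(\sigma)$ for all $\sigma',\sigma$, so $Tg$ is constant, say $Tg\equiv\lambda$, whence $g=\lambda/T$. But then $\int_{\Sigma} g\,d\scrq_0=\lambda\,L(\scrq_0)=0$ since $\scrq_0\in H$, contradicting $\int_{\Sigma} g\,d\scrq_0\neq0$. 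Hence $\mathrm{cl}\,\mathcal{C}=H$, which is the assertion. I expect the only delicate points to be bookkeeping rather than substance: confirming that a continuous linear functional on \VSigma, restricted to elements with zero change of condition, is represented by integration against a single $g\in\mathrm{C}(\Sigma,\R)$ (exactly as in the existence proof of Theorem \ref{thm:ExistTempEnt}), and noting that the point-mass Carnot elements alone already pin $g$ down to a multiple of $1/T$, so the more general Carnot elements of Definition \ref{def:CarnotElem} need not be treated separately.
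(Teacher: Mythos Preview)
Your argument is correct and is precisely in the spirit indicated by the paper, which omits the proof but points to the analogous Corollaries 9.1--9.3 in \cite{feinberg1983thermodynamics}: identify the reversible cyclic elements with the hyperplane $\{(0,\scrq):\int_{\Sigma}d\scrq/T=0\}$ via Corollary \ref{cor:UniqueTHalfSpaceCor}, then separate by Hahn--Banach and use the point-mass Carnot elements from Theorem \ref{thm:UniquenessTemp}\,$(iv)$ to force the separating functional to be a multiple of $1/T$, yielding the contradiction. The bookkeeping points you flag (representation of the functional by $g\in\mathrm{C}(\Sigma,\R)$ on the $\scrq$-slot, and sufficiency of the Dirac-supported Carnot elements) are handled exactly as you say.
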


\subsection[Temperature Scale Uniqueness on a State Space Sub-domain]{Essential Uniqueness of a Thermodynamic Temperature on a State Space Sub-domain}

	Although a thermodynamic temperature scale for a Kelvin-Planck theory need not be essentially unique on the entire state space, there might be nontrivial sub-domains on which essential uniqueness is to be found. The connection to Carnot elements (and perhaps to notions of equilibrium states) is recorded in the following proposition.

\begin{proposition} \label{prop:SubdomainTempUniq}

 Let \theory\ be a Kelvin-Planck theory having $\scrT_{CD}$ as its set of  Clausius-Duhem temperature scales . Furthermore, let $\Sigma^{\,0}$ be a subset of $\Sigma$, and let $\scrT^{\,0}_{CD}$ be the set of restrictions of members of $\scrT_{CD}$ to $\Sigma^{\,0}$ . The following are equivalent:
\begin{enumerate}[(i)]
\item All member of $\scrT^{\,0}_{CD}$ are positive multiples of some fixed one.
\item For each pair of distinct states in $\Sigma^{\,0}$ there is a Carnot element operating between them.
\end{enumerate}
\end{proposition}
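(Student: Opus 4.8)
The plan is to reduce Proposition~\ref{prop:SubdomainTempUniq} to Theorem~\ref{thm:UniquenessTemp} by recognizing that essential uniqueness on a sub-domain $\Sigma^{\,0}$ is governed by exactly the same Hahn--Banach machinery, but now with test measures supported in $\Sigma^{\,0}$. The forward direction $(ii)\Rightarrow(i)$ should be nearly a transcription of the argument that $(iii)\Rightarrow(i)$ in Theorem~\ref{thm:UniquenessTemp}: given a Carnot element $(0,\mu'-\mu)\in\scrPhat$ operating between two states $\sigma',\sigma\in\Sigma^{\,0}$, the reversibility forces the equality in Remark~\ref{rem:Reversible} for every Clausius-Duhem pair, yielding $\int d\scrq/T=0$ for both $T$ and any competing scale $\bar T$. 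First I would fix a base point $\sigma_0\in\Sigma^{\,0}$ and, for an arbitrary $\sigma\in\Sigma^{\,0}$, use the hypothesized Carnot element between $\sigma$ and $\sigma_0$ to extract $\bar T(\sigma)/\bar T(\sigma_0)=T(\sigma)/T(\sigma_0)$, exactly as in the proof of Theorem~\ref{thm:UniquenessTemp}. Since $\sigma$ ranges over all of $\Sigma^{\,0}$, this shows every $\bar T|_{\Sigma^{\,0}}$ is a fixed positive multiple of $T|_{\Sigma^{\,0}}$, which is $(i)$.

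The harder direction is $(i)\Rightarrow(ii)$, and here the main obstacle is that Lemma~\ref{lem:CDOpposite} produces a Clausius-Duhem pair on all of $\Sigma$, not on $\Sigma^{\,0}$, so I must be careful that the contradiction I manufacture lives in $\scrT^{\,0}_{CD}$ rather than in $\scrT_{CD}$. The approach is to argue by contraposition: suppose $(ii)$ fails, so there exist distinct $\sigma',\sigma\in\Sigma^{\,0}$ with no Carnot element between them. Fix a Clausius-Duhem pair $(\etao,\To)$ and set $(\scrv,\scrw)=(0,\ \To(\sigma')\delta_{\sigma'}-\To(\sigma)\delta_{\sigma})$, normalized so that $\int d\scrw/\To = 0$, hence $\int\etao\,d\scrv-\int d\scrw/\To=0$. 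The absence of a Carnot element between $\sigma'$ and $\sigma$ means at least one of $(\scrv,\scrw)$, $-(\scrv,\scrw)$ is not in \scrPhat; Lemma~\ref{lem:CDOpposite} then supplies a Clausius-Duhem pair $(\eta,T)$ with $\int\eta\,d\scrv-\int d\scrw/T<0$, i.e. $T(\sigma')/T(\sigma)\neq\To(\sigma')/\To(\sigma)$ after unpacking the Dirac integrals. Since $\sigma',\sigma\in\Sigma^{\,0}$, this inequality is already visible in the restrictions $T|_{\Sigma^{\,0}}$ and $\To|_{\Sigma^{\,0}}$, so these two members of $\scrT^{\,0}_{CD}$ are not positive multiples of one another, contradicting $(i)$.

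The one subtlety I would flag is the direction of the Dirac-measure inequality: if $(\scrv,\scrw)\notin\scrPhat$ we get $T(\sigma')/T(\sigma)<\To(\sigma')/\To(\sigma)$, whereas if instead $-(\scrv,\scrw)\notin\scrPhat$ we apply Lemma~\ref{lem:CDOpposite} to $-(\scrv,\scrw)$ and obtain the reversed strict inequality; either way the ratios differ, so the precise sign is immaterial to the conclusion. Because everything hinges on evaluating measures concentrated at the two points $\sigma',\sigma$ of $\Sigma^{\,0}$, the distinction between $\scrT_{CD}$ and $\scrT^{\,0}_{CD}$ never actually bites: a discrepancy in the full-space scales at points of $\Sigma^{\,0}$ is automatically a discrepancy in the restricted scales. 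I expect the entire proof to mirror Theorem~\ref{thm:UniquenessTemp}'s equivalence of $(i)$, $(ii)$, and $(iv)$, with $\Sigma$ replaced by $\Sigma^{\,0}$ throughout, the role of the global condition~\eqref{eq:q/TIntegrates to zero} being played implicitly through the per-pair application of Lemma~\ref{lem:CDOpposite}; no genuinely new idea beyond the observation that Lemma~\ref{lem:CDOpposite}'s test vectors can be chosen supported in $\Sigma^{\,0}$ should be required.
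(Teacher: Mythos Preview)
Your proposal is correct and follows exactly the approach the paper sketches: the paper's proof says only that $(ii)\Rightarrow(i)$ ``is a direct consequence of the Clausius-Duhem inequality'' and that $(i)\Rightarrow(ii)$ ``amounts to an application of Lemma~\ref{lem:CDOpposite},'' and your argument is precisely a fleshed-out version of this, using the test vector $(0,\To(\sigma')\delta_{\sigma'}-\To(\sigma)\delta_{\sigma})$ and contraposition via Lemma~\ref{lem:CDOpposite}. Your handling of the two cases (which of $\pm(\scrv,\scrw)$ fails to lie in \scrPhat) and your observation that the resulting discrepancy is already visible in the restrictions to $\Sigma^{\,0}$ are exactly what is needed.
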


\begin{proof}
That (ii) implies (i) is a direct consequence of the Clausius-Duhem inequality. Proof that (i) implies (ii) amounts to an application of Lemma \ref{lem:CDOpposite}.
\end{proof}

\begin{rem}\label{CarnotEquiv} For a Kelvin-Planck theory \theory\ we will say that states $\sigma$ and $\sigma'$ in $\Sigma$ are \emph{Carnot-related}, denoted $\sigma \approx_{\,\scrC}\sigma'$, if $\sigma = \sigma'$ or if  \scrPhat\  contains a Carnot element in operating between them. It is not difficult to see that $\approx_{\,\scrC}$ is an equivalence relation. Proposition \ref{prop:SubdomainTempUniq} tells us that we have essential temperature-scale uniqueness on each nontrivial $\approx_{\,\scrC}$ equivalence class.
\end{rem}

\section{Uniqueness of Entropy-Temperature  Functions of State}
\label{sec:EntropyUniqueness}
	In this section we ask: For a Kelvin-Planck theory \theory, what must be true of \scrP\  beyond a rich supply of Carnot elements to ensure not only that there is an essentially unique Clausius-Duhem temperature scale but also that there be an essentially unique Clausius-Duhem specific-entropy function?

\subsection{Entropy-Temperature-Pair Uniqueness on the Entire State Space}

	The following theorem describes conditions under which, for a Kelvin-Planck theory, there is an essentially unique Clausius-Duhem pair on the entire state space.
\begin{samepage}
\begin{theorem}[Clausius-Duhem Pair Uniqueness]\label{thm:CDPairUniqueness}
 Let $(\etao,\To)$ be a Clausius-Duhem pair for a thermodynamical theory \theory. The following are equivalent:
\begin{enumerate}[(i)]
\item If $(\eta,T)$ is any other Clausius-Duhem pair for \theory, there are constants $\alpha$ and $\beta$, with $\alpha > 0$, such that
\begin{equation}
T(\cdot) = \alpha\To(\cdot)\quad and \quad \eta(\cdot) = \frac{1}{\alpha}\etao(\cdot)
 + \beta. 
\end{equation}
\item \scrPhat\  contains the hyperplane
\begin{equation} \label{eq:hyperplane}
\{\  \process \in \VSigma: \int_{\Sigma}\etao\, d\,(\Delta\scrm) = \int_{\Sigma}\frac{d\scrq}{
\To}\ \}.
\end{equation}

\item For each choice of $\sigma'$ and $\sigma$ in $\Sigma$, \scrPhat\   contains a reversible element with change of condition $\delta_{\sigma'} - \delta_{\sigma}$ and also a Carnot element operating between the hotness levels of $\sigma'$ and $\sigma$.

\end{enumerate}
\end{theorem}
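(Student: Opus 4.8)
The plan is to prove the cyclic chain $(i) \Rightarrow (ii) \Rightarrow (iii) \Rightarrow (i)$. The organizing observation is that the set in \eqref{eq:hyperplane} is precisely the kernel of the continuous linear functional $(\deltam,\scrq) \mapsto \int_{\Sigma}\etao\, d(\deltam) - \int_{\Sigma}\frac{d\scrq}{\To}$ on \VSigma; in particular it is a linear subspace, hence symmetric under negation and closed under the operations I will need below.

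For $(i) \Rightarrow (ii)$ I would argue by contradiction, and this is the one place where the Hahn-Banach Theorem genuinely enters, through Lemma \ref{lem:CDOpposite}. Suppose some $(\scrv,\scrw)$ lying on the hyperplane \eqref{eq:hyperplane} fails to belong to \scrPhat. Because $(\scrv,\scrw)$ satisfies \eqref{eq:intCD0} relative to $(\etao,\To)$, Lemma \ref{lem:CDOpposite} furnishes a Clausius-Duhem pair $(\eta,T)$ with $\int_{\Sigma}\eta\, d\scrv - \int_{\Sigma}\frac{d\scrw}{T} < 0$. Hypothesis $(i)$, however, forces $T = \alpha\To$ and $\eta = \frac{1}{\alpha}\etao + \beta$ for some $\alpha > 0$; substituting and using $\scrv(\Sigma) = 0$ (valid since $\scrv \in \MSigmaZ$) reduces the left side to $\frac{1}{\alpha}\big(\int_{\Sigma}\etao\, d\scrv - \int_{\Sigma}\frac{d\scrw}{\To}\big)$, which vanishes because $(\scrv,\scrw)$ lies on the hyperplane. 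This contradicts the strict inequality, so the entire hyperplane must be contained in \scrPhat.

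The implications $(ii) \Rightarrow (iii)$ and $(iii) \Rightarrow (i)$ are comparatively mechanical. For $(ii) \Rightarrow (iii)$, given states $\sigma'$ and $\sigma$ I exhibit the required elements directly on the hyperplane: choosing \scrq\ so that $\int_{\Sigma}\frac{d\scrq}{\To} = \etao(\sigma') - \etao(\sigma)$ (a suitable multiple of a single Dirac measure suffices) places $(\delta_{\sigma'} - \delta_{\sigma},\, \scrq)$ on \eqref{eq:hyperplane}, whence both it and its negative lie in \scrPhat, giving a reversible element with change of condition $\delta_{\sigma'} - \delta_{\sigma}$; and the cyclic vector $(0,\, c'\delta_{\sigma'} - c\,\delta_{\sigma})$ with $c'/c = \To(\sigma')/\To(\sigma)$ satisfies $\int_{\Sigma}\frac{d\scrq}{\To} = 0$, so it too lies on the hyperplane and is a reversible cyclic, hence Carnot, element operating between the hotness levels of $\sigma'$ and $\sigma$. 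For $(iii) \Rightarrow (i)$, letting $\sigma'$ and $\sigma$ range over $\Sigma$ yields a Carnot element between every pair of hotness levels, so condition $(iii)$ of Theorem \ref{thm:UniquenessTemp} holds and every Clausius-Duhem temperature scale is a positive multiple $\alpha\To$; then applying Remark \ref{rem:Reversible} to the reversible element with change of condition $\delta_{\sigma'} - \delta_{\sigma}$, for both $(\etao,\To)$ and any pair $(\eta,\alpha\To)$, gives $\eta(\sigma') - \eta(\sigma) = \int_{\Sigma}\frac{d\scrq}{\alpha\To} = \frac{1}{\alpha}\big(\etao(\sigma') - \etao(\sigma)\big)$, so that $\eta - \frac{1}{\alpha}\etao$ is constant, which is exactly $(i)$.

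The main obstacle is the implication $(i) \Rightarrow (ii)$. It is the only step relying on the separation (Hahn-Banach) machinery, and its crux is recognizing that the pair manufactured by Lemma \ref{lem:CDOpposite}, once essential uniqueness pins it to the canonical form $(\frac{1}{\alpha}\etao+\beta,\,\alpha\To)$, must evaluate to exactly zero on the hyperplane, in direct contradiction to the strict negativity the lemma guarantees. The remaining two implications rest only on the linearity of the hyperplane and on the already-established Theorem \ref{thm:UniquenessTemp}, and should require no further analytic input.
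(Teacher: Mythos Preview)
Your proposal is correct and follows essentially the same cyclic route $(i) \Rightarrow (ii) \Rightarrow (iii) \Rightarrow (i)$ as the paper, invoking Lemma \ref{lem:CDOpposite} for the first implication, constructing explicit hyperplane elements for the second, and appealing to Theorem \ref{thm:UniquenessTemp} together with the Clausius-Duhem equality on reversible elements for the third. Your write-up in fact supplies a bit more detail than the paper does in the $(i) \Rightarrow (ii)$ step, making explicit the computation (using $\scrv(\Sigma)=0$) that forces the contradiction.
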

\end{samepage}

\begin{rem}[\emph{Existence vs. Uniqueness, 2}] The comments made in Remark \ref{rem:ExistVsUnique1} pertain here too. Although the \emph{existence} of a Clausius-Duhem entropy-temperature pair for a Kelvin-Planck theory is ensured immediately by the Hahn-Banach Theorem without the requirement of special reversible processes \cite{feinberg-lavineEntropy1}, the equivalence of $(i)$ and $(iii)$ in Theorem \ref{thm:CDPairUniqueness} indicates that, for a Kelvin-Planck theory to have an essentially \emph{unique} Clausius-Duhem pair, not only must every state be visited by Carnot cycles, each must also be visited by other reversible processes as well. As in Remark \ref{rem:ExistVsUnique1}, some readers might infer that all states in such a Kelvin-Planck theory must be ``equilibrium" states. Again, this is discussed on Section \ref{sec:ConclRems,Part II}.
\end{rem}

\begin{proof}[Proof of Theorem \ref{thm:CDPairUniqueness}] To prove that $(i)$ implies $(ii)$ we suppose that $(i)$ holds and that  there exists $ \process \in \VSigma$ that satisfies the equation in \eqref{eq:hyperplane} but does belong to \scrPhat. Then Lemma \ref{lem:CDOpposite} ensures the existence of another Clausius-Duhem pair $(\eta, T)$ such that
\begin{equation}
\int_{\Sigma}\eta\; d\,(\Delta\scrm) < \int_{\Sigma}\frac{d\scrq}{
T}.
\end{equation}
Because \process\ satisfies the equation in \eqref{eq:hyperplane}, it is evident that $\eta(\cdot)$ and $T(\cdot)$ could not be of the form given in $(i)$. Thus, we have a contradiction.

	To prove that $(ii)$ implies $(iii)$, for an arbitrary choice of $\sigma'$ and $\sigma$ in $\Sigma$  we first let $\scrq$ be any member of \MSigma\ that satisfies the equation
\begin{equation}
\etao(\sigma') - \etao(\sigma) = \int_{\Sigma}\frac{d\,\scrq}{\To}.
\end{equation}
Then $(\delta_{\sigma'} - \delta_{\sigma},\scrq) \in \VSigma$ and its negative both satisfy the equation in \eqref{eq:hyperplane}, so from $(ii)$ both are members of \scrPhat. Finally, note that $(0,\scrq^*) \in \VSigma$, where $\scrq^*$ is any member of \MSigma\  of the form 
\begin{equation}\label{eq:qInTwoStateCarnotCycle2}
\scrq^* = c' \,\delta_{\sigma'} -  c\,\delta_{\sigma} \quad \mathrm{with} \quad \frac{c'}{c} = \frac{\To(\sigma')}{\To(\sigma)}.
\end{equation}
satisfies the equation in \eqref{eq:hyperplane}, as does its negative. From $(ii)$, then, both are members of \scrPhat, so $(0,\scrq^*)$ is the desired Carnot element. 

	We turn next to a proof that $(iii)$ implies $(i)$. When $(iii)$ holds it is evident that for any pair of hotness levels there is a Carnot element operating between them. From Theorem \ref{thm:UniquenessTemp}, if $T(\cdot)$ is a Clausius-Duhem temperature scale on $\Sigma$ we already have the existence of a positive $\alpha$ such that $T(\cdot) = \alpha \To(\cdot)$. 
	
	If $\eta$ is a specific-entropy function corresponding to $T$, it remains to be shown that, when $(iii)$ holds, $\eta$ is of the form given in $(i)$. Let $\sigma^* \in \Sigma$ be some fixed state, and let $\sigma$ be some other arbitrary state, Then from $(iii)$ there is in \scrPhat\  a reversible element of the form $(\delta_{\sigma} - \delta_{\sigma^*},\scrq)$, which must be consistent with the Clausius-Duhem inequality written in terms of both Clausius-Duhem pairs ($(\eta,T)$ and $(\etao,\To)$.  Thus, we have, for all choices of $\sigma$, 
\begin{equation}
\eta(\sigma) - \eta(\sigma^*) = \int_{\Sigma}\frac{d\scrq}{T} = \frac{1}{\alpha}\int_{\Sigma}\frac{d\scrq}{\To}  
\end{equation}
and
\begin{equation}
\etao(\sigma) - \etao(\sigma^*) = \int_{\Sigma}\frac{d\scrq}{\To}.
\end{equation}
Because $\sigma$ was arbitrary, it follows from these equations that
\begin{equation}
\eta(\cdot) = \frac{1}{\alpha}\etao(\cdot) + [\eta(\sigma^*) - \frac{1}{\alpha}\etao(\sigma^*)],
\end{equation}
which is in the form required by $(i)$.
\end{proof}

\begin{corollary}
\label{cor:CDuniqueness&irreversibility}
 Consider a Kelvin-Planck theory \theory\  having, in the sense of $(i)$, an essentially unique Clausius-Duhem pair, $(\etao,\To)$. The set of reversible elements in \scrPhat\  coincides with the set of all $\process \in \VSigma$\  that satisfy
\begin{equation} \label{eq:CDEquality}
 \int_{\Sigma}\etao\, d\,(\Delta\scrm) = \int_{\Sigma}\frac{d\scrq}{
\To}.
\end{equation} 
If even one member of the set 
\begin{equation} \label{eq:CDOpenHalf-space}
\{\  \process \in \VSigma: \int_{\Sigma}\etao\, d\,(\Delta\scrm) > \int_{\Sigma}\frac{d\scrq}{
\To}\ \}
\end{equation}
is an element of \scrPhat\  then all are. In particular, if \scrP\ contains even one irreversible process, then \scrP\ is so rich in processes that \scrPhat\  contains all members of \VSigma\  that are consistent with the (not necessarily strict) Clausius-Duhem inequality.
\end{corollary}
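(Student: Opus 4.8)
The plan is to reduce every assertion to the geometry of the single continuous linear functional
\[
\Phi(\deltam,\scrq) := \int_{\Sigma}\etao\, d(\deltam) - \int_{\Sigma}\frac{d\scrq}{\To}
\]
on \VSigma. Two facts anchor the argument. First, because $(\etao,\To)$ is a Clausius-Duhem pair, Remark \ref{rem:CD-clP} gives $\Phi \geq 0$ throughout \scrPhat. Second, the hypothesis is precisely condition $(i)$ of Theorem \ref{thm:CDPairUniqueness}, so the equivalence $(i)\Leftrightarrow(ii)$ there places the entire hyperplane $H := \{(\deltam,\scrq)\in\VSigma : \Phi(\deltam,\scrq)=0\}$ (the set in \eqref{eq:hyperplane}) inside \scrPhat. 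Finally, $\scrPhat = \textrm{cl}\,(\Cone(\scrP))$ is a closed convex cone, hence closed under addition and under multiplication by nonnegative scalars; these closure properties do the real work below.

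For the first assertion I would show that the reversible elements are exactly $H$. If $(\deltam,\scrq)$ is reversible, then it and its negative both lie in \scrPhat, so $\Phi(\deltam,\scrq)\geq 0$ and $-\Phi(\deltam,\scrq)\geq 0$, forcing $\Phi(\deltam,\scrq)=0$. Conversely, if $\Phi(\deltam,\scrq)=0$ then both $(\deltam,\scrq)$ and its negative lie in $H\subset\scrPhat$, so $(\deltam,\scrq)$ is reversible.

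For the second assertion (that the open half-space \eqref{eq:CDOpenHalf-space} is all-or-nothing), I would argue by a decomposition. Suppose some $(\deltam_1,\scrq_1)\in\scrPhat$ has $\Phi(\deltam_1,\scrq_1)=c>0$, and let $(\deltam_2,\scrq_2)$ be an arbitrary element with $\Phi(\deltam_2,\scrq_2)=d>0$. Write
\[
(\deltam_2,\scrq_2) = \tfrac{d}{c}\,(\deltam_1,\scrq_1) + w, \qquad w := (\deltam_2,\scrq_2)-\tfrac{d}{c}\,(\deltam_1,\scrq_1).
\]
Linearity gives $\Phi(w)=d-\tfrac{d}{c}\,c=0$, so $w\in H\subset\scrPhat$; the first summand is a nonnegative multiple of a member of \scrPhat, hence itself in \scrPhat; and since \scrPhat\ is a convex cone it is closed under addition, so $(\deltam_2,\scrq_2)\in\scrPhat$.

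The third assertion then falls out by combining the first two. An irreversible process belongs to $\scrP\subset\scrPhat$ but, by the first assertion, is not in $H$, so $\Phi\neq 0$ on it; since $\Phi\geq 0$ everywhere on \scrPhat, this forces $\Phi>0$ there. The second assertion then drags the whole open half-space $\{\Phi>0\}$ into \scrPhat, and together with $H=\{\Phi=0\}\subset\scrPhat$ this yields $\{\Phi\geq 0\}\subset\scrPhat$; the reverse inclusion is exactly the Clausius-Duhem inequality for $(\etao,\To)$, giving the stated equality. I expect no serious obstacle here: the only nonroutine input is the hyperplane containment $H\subset\scrPhat$ borrowed from Theorem \ref{thm:CDPairUniqueness}, and the only point needing a moment of care is that \scrPhat\ is a convex cone and therefore closed under addition, which is precisely what powers the decomposition in the second assertion.
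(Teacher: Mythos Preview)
Your proof is correct and follows essentially the same route as the paper's: both reduce to the linear functional $\Phi$, invoke Theorem \ref{thm:CDPairUniqueness}$(ii)$ for the hyperplane containment $H\subset\scrPhat$, and for the second assertion use the identical decomposition into a positive multiple of the known element plus a remainder lying in $H$ (your ratio $d/c$ is exactly the paper's $\gamma$). The only cosmetic difference is that you name $\Phi$ explicitly, which streamlines the exposition.
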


\begin{rem} In the context of Corollary \ref{cor:CDuniqueness&irreversibility}, if \scrPhat\ contains even one irreversible element, \scrPhat\ actually coincides with the closed half-space in \VSigma\ given by
\begin{equation} \label{eq:CDClosedHalf-space}
\{\  \process \in \VSigma: \int_{\Sigma}\etao\, d\,(\Delta\scrm) \geq \int_{\Sigma}\frac{d\scrq}{
\To}\ \}.
\end{equation}
\end{rem}
\medskip
\begin{proof}[Proof of Corollary \ref{cor:CDuniqueness&irreversibility}] 

	If $\process \in \VSigma$\  satisfies \eqref{eq:CDEquality} then so does its negative, in which case Theorem \ref{thm:CDPairUniqueness} requires that both be members of \scrPhat. Hence, \process\ is a reversible element of \scrPhat. On the other hand, if \process\ is a reversible element of \scrPhat\  then both it and its negative are members of \scrPhat. Because each is a member of \scrPhat, both must satisfy the Clausius-Duhem inequality, so the equality \eqref{eq:CDEquality} must obtain.

	To prove the remainder of the corollary, we let \process\ and $(\Delta\scrm^*,\scrq^*)$ be members of \VSigma\ that satisfy
\begin{equation}
\int_{\Sigma}\etao\, d\,(\Delta\scrm) > \int_{\Sigma}\frac{d\scrq}{
\To}\quad\textrm{and}\quad\int_{\Sigma}\etao\, d\,(\Delta\scrm^*) > \int_{\Sigma}\frac{d\scrq^*}{
\To},
\end{equation}
and we suppose that $(\Delta\scrm^*,\scrq^*)$  is a member of \scrPhat. Our aim is to show that \process\ is also a member of \scrPhat.

	Let $\gamma$ denote the positive number defined by
\begin{equation}
\gamma:= \frac{\int_{\Sigma}\etao\, d\,(\Delta\scrm) - \int_{\Sigma}\frac{d\scrq}{
\To}}{\int_{\Sigma}\etao\, d\,(\Delta\scrm^*) - \int_{\Sigma}\frac{d\scrq^*}{
\To}}\,.
\end{equation}
Note that
\begin{equation}\label{eq:TwoTermSum}
\process = \gamma(\Delta\scrm^*,\scrq^*) + [\process - \gamma(\Delta\scrm^*,\scrq^*)].
\end{equation}
Because \scrPhat\  is a cone, the first term on the right is a member of \scrPhat. To see that the second term is also a member of \scrPhat, note that the second term can be rewritten as
\begin{equation}\label{eq:SecondTerm}
(\Delta\scrm - \gamma\Delta\scrm^*\,,\  \scrq - \gamma\scrq^*).
\end{equation}
and that
\begin{align}
&\int_{\Sigma}\etao d(\Delta\scrm - \gamma\Delta\scrm^*) - \int_{\Sigma}\frac{d(\scrq - \gamma\scrq^*)}{\To}\nonumber\\
&= (\int_{\Sigma}\etao d(\Delta\scrm) - \int_{\Sigma}\frac{d\scrq}{\To}) - \gamma(\int_{\Sigma}\etao d(\Delta\scrm^*) - \int_{\Sigma}\frac{d\scrq^*}{\To})\\ 
&= 0\nonumber.
\end{align}

From  the equivalence of $(i)$ and $(ii)$ in Theorem \ref{thm:CDPairUniqueness}, the element \eqref{eq:SecondTerm} --- and therefore the second term on the right of \eqref{eq:TwoTermSum} ---  must be a member of \scrPhat. Because both terms on the right of \eqref{eq:TwoTermSum} are members of the convex cone \scrPhat, their sum \process\  is a member of \scrPhat.

	To prove the last sentence of the corollary, let $(\Delta\scrm^{\dagger},\scrq^{\dagger}) \in \scrP$ be an irreversible process. By the Clausius-Duhem inequality, we must have
\begin{equation} \label{eq:CDo}
 \int_{\Sigma}\etao\, d\,(\Delta\scrm^{\dagger}) \geq \int_{\Sigma}\frac{d\scrq^{\dagger}}{
\To}.
\end{equation}
Because $(\Delta\scrm^{\dagger},\scrq^{\dagger})$ is not reversible, strict inequality must hold in \eqref{eq:CDo}, so \scrPhat\  contains the entire open half-space \eqref{eq:CDOpenHalf-space}. Theorem \ref{thm:CDPairUniqueness} ensures that \scrPhat\ also  contains the hyperplane \eqref{eq:hyperplane}.	 
\end{proof}

\begin{rem}[\emph{Consequences of a change of condition that cannot be reversed}] Suppose that \theory\  is a Kelvin-Planck theory having an essentially unique Clausius-Duhem temperature scale, $T^0$. If there is even one \emph{change of condition} (as distinct from a process) that is not reversible --- that is, if there exists \mbox{$\Delta\scrm^0 \in \MSigmaZ$} such that for no choice of $\scrq$ is $(\Delta\scrm^0,\scrq)$ a reversible element of $\scrPhat$ --- then $T^0$ cannot have an essentially unique  Clausius-Duhem entropy partner. For if $\eta^{\,0}$ were such a partner then, for any $\scrq \in \MSigma$ that satisfies
\begin{equation}
\int_{\Sigma}\eta^0\, d\, \Delta\scrm^0 = \int_{\Sigma}\frac{d\, \scrq}{T^0},
\end{equation}
Corollary \ref{cor:CDuniqueness&irreversibility} would require that $(\Delta\scrm^0,\scrq)$ be a reversible element of \scrPhat.

	Of course the presence of an abundance of irreversible \emph{processes} in a Kelvin-Planck theory \theory\ does not preclude for it an essentially unique Clausius-Duhem entropy-temperature pair, as in Corollary \ref{cor:CDuniqueness&irreversibility} when \scrPhat\ is a half-space of \VSigma.
\end{rem} 

\begin{rem} Consider a Kelvin-Planck theory \theory\ for which the set of Clausius-Duhem entropy-temperature pairs is not essentially unique. If $(\Delta\scrm^0,\scrq^0)$ is an irreversible element of \scrPhat\ then there must exist at least one Clausius-Duhem pair for which the Clausius-Duhem inequality applied to $(\Delta\scrm^0,\scrq^0)$ is strict, for otherwise  $(\Delta\scrm^0,\scrq^0)$ would, by Corollary \ref{cor:CDuniqueness&irreversibility}, be reversible.

	This prompts the following question: Is there a \emph{single} Clausius-Duhem pair with respect to which the Clausius-Duhem inequality is strict when applied to \emph{every} irreversible element of \scrPhat\,? When $\Sigma$ is a metric space the answer is \emph{yes}. This follows from  an argument similar to the one given in the proof of Theorem 7.2 in \cite{feinberg1983thermodynamics}. 
\end{rem}

\subsection{Essential Uniqueness of Entropy on a State-Space Sub-domain}

	Even when, for a Kelvin Planck \theory, there is an essentially unique Clausius-Duhem temperature scale on a state-space sub-domain $\Sigma_0 \subset \Sigma$, we cannot expect in general that there will invariably be an essentially unique specific-entropy function on $\Sigma_0$. The following theorem describes precisely the circumstances under which such entropy-uniqueness on that sub-domain will obtain. Although the implication (ii) $\Rightarrow$ (i) is straightforward, the less obvious reverse implication follows from Hahn-Banach Theorem in the guise of Lemma \ref{lem:CDOpposite}. As might be expected, the situation is very much like that in Theorem \ref{thm:CDPairUniqueness}, but with some subtle differences.

\begin{theorem}
\label{thm:EntropyUniqueness}
 Let \theory\ be a Kelvin-Planck theory, and let $\Sigma_0$ be a subset of\, $\Sigma$ consisting of at least two states. Suppose that $T^{\dagger}_0: \Sigma_0 \to \RP$ is the restriction to $\Sigma_0$ of a Clausius-Duhem temperature scale $T^{\dagger}: \Sigma \to \RP$ and that every other restriction of a Clausius-Duhem temperature scale to $\Sigma_0$ is a positive multiple of $T^{\dagger}_0$. The following are equivalent:
\begin{enumerate}[(i)]
\item If $(\eta,T^{\dagger})$ and $(\bar{\eta},T^{\dagger}))$ are Clausius-Duhem pairs for \theory\ then, restricted to $\Sigma_0$, $\eta$ and $\bar{\eta}$  differ by at most a constant.
\item For each pair of distinct states $\sigma' \in \Sigma_0$ and $\sigma \in \Sigma_0$, there exists in \scrPhat\  a reversible element $(\delta_{\sigma'} - \delta_{\sigma}, \scrq)$, with  the support of \scrq\ contained in $\Sigma_0$.\footnote{The support of the signed measure \scrq\ is the union of the supports of its (Hahn-Jordan) positive and negative parts.} 
\end{enumerate}
\end{theorem}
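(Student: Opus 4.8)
The plan is to prove the two implications separately, with the routine direction $(ii)\Rightarrow(i)$ resting on the reversibility observation of Remark \ref{rem:Reversible} and the substantive direction $(i)\Rightarrow(ii)$ resting on Lemma \ref{lem:CDOpposite}, as the paragraph preceding the theorem anticipates. For $(ii)\Rightarrow(i)$, let $(\eta,T^{\dagger})$ and $(\bar\eta,T^{\dagger})$ be two Clausius-Duhem pairs sharing the scale $T^{\dagger}$, and fix distinct $\sigma',\sigma\in\Sigma_0$. By $(ii)$ there is a reversible element $(\delta_{\sigma'}-\delta_{\sigma},\scrq)\in\scrPhat$ with $\supp\scrq\subset\Sigma_0$, so Remark \ref{rem:Reversible}, applied to each pair and using that both carry the \emph{same} $T^{\dagger}$, yields $\eta(\sigma')-\eta(\sigma)=\int_{\Sigma}d\scrq/T^{\dagger}=\bar\eta(\sigma')-\bar\eta(\sigma)$. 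As $\sigma',\sigma$ were arbitrary in $\Sigma_0$, the difference $\eta-\bar\eta$ is constant on $\Sigma_0$, which is $(i)$.

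For $(i)\Rightarrow(ii)$ I would argue by contradiction. Fix a pair $(\eta^{\dagger},T^{\dagger})$ realizing the given scale, assume $(i)$, and suppose $(ii)$ fails for some distinct $\sigma',\sigma\in\Sigma_0$; set $d_0:=\eta^{\dagger}(\sigma')-\eta^{\dagger}(\sigma)$, which by $(i)$ is the common value of $\eta(\sigma')-\eta(\sigma)$ over all partners $\eta$ of $T^{\dagger}$. Choose $\scrq_{0}$ with $\supp\scrq_{0}\subset\Sigma_0$ and $\int_{\Sigma}d\scrq_{0}/T^{\dagger}=d_0$ (for instance $\scrq_{0}=d_0\,T^{\dagger}(\sigma')\,\delta_{\sigma'}$), so that $(\delta_{\sigma'}-\delta_{\sigma},\scrq_{0})$ lies on the Clausius-Duhem hyperplane of $(\eta^{\dagger},T^{\dagger})$. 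Since $(ii)$ fails, at least one of $(\delta_{\sigma'}-\delta_{\sigma},\scrq_{0})$ and its negative is absent from $\scrPhat$ — were both present, that element would be exactly the reversible element of the form demanded by $(ii)$. Applying Lemma \ref{lem:CDOpposite} to whichever is absent produces a Clausius-Duhem pair $(\eta_1,T_1)$ with $\eta_1(\sigma')-\eta_1(\sigma)\neq\int_{\Sigma}d\scrq_{0}/T_1$. Here the hypothesis enters decisively: $T_1$ restricted to $\Sigma_0$ equals $\alpha T^{\dagger}_0$ for some $\alpha>0$, and because $\supp\scrq_{0}\subset\Sigma_0$ the integral localizes, $\int_{\Sigma}d\scrq_{0}/T_1=\tfrac{1}{\alpha}\int_{\Sigma}d\scrq_{0}/T^{\dagger}=d_0/\alpha$. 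Rescaling by Remark \ref{rem:CDPairNonUnique} to $(\alpha\eta_1,\tfrac{1}{\alpha}T_1)$ — whose temperature agrees with $T^{\dagger}$ on $\Sigma_0$ — gives an entropy with $(\alpha\eta_1)(\sigma')-(\alpha\eta_1)(\sigma)\neq d_0$.

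I expect the main obstacle to be the gap between ``temperature agreeing with $T^{\dagger}$ on $\Sigma_0$'' and ``temperature equal to $T^{\dagger}$ on all of $\Sigma$'': the essential-uniqueness hypothesis pins $T_1$ only on $\Sigma_0$, whereas $(i)$ speaks of partners of the \emph{full} scale $T^{\dagger}$, so $\alpha\eta_1$ is not a priori a member of the set $(i)$ constrains. To close the argument one must promote $\alpha\eta_1$ to (the $\Sigma_0$-restriction of) a genuine Clausius-Duhem partner of $T^{\dagger}$ on all of $\Sigma$ without disturbing its values at $\sigma',\sigma$; the support restriction $\supp\scrq_{0}\subset\Sigma_0$ is precisely what certifies that only the behavior of the temperature \emph{on} $\Sigma_0$ was ever invoked, so that the promoted partner differs from $\eta^{\dagger}$ at $\sigma',\sigma$ and contradicts $(i)$. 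This localized promotion is the delicate point, and is the analogue — now confined to $\Sigma_0$ — of the global rescaling that closed the proof of Theorem \ref{thm:CDPairUniqueness}.

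If that identification needs reinforcing, my fallback would be to sidestep the off-$\Sigma_0$ temperature entirely by recasting the hard direction as a statement about the fixed scale $T^{\dagger}$: seek a perturbation $\eta^{\dagger}+t\zeta$ (with $t>0$ small) that remains a Clausius-Duhem partner of $T^{\dagger}$ throughout $\Sigma$ and satisfies $\zeta(\sigma')\neq\zeta(\sigma)$. The existence of such a $\zeta$ is exactly the separation dual to the \emph{absence} of the reversible element in $(ii)$: if the linear functional $\eta\mapsto\eta(\sigma')-\eta(\sigma)$ is constant on the convex set of $T^{\dagger}$-partners, then $(\delta_{\sigma'}-\delta_{\sigma},\scrq_{0})$ and its negative must both lie in $\scrPhat$, which is the conclusion of $(ii)$. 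In this formulation the temperature is held fixed at $T^{\dagger}$ by construction, and Lemma \ref{lem:CDOpposite} supplies the required separating direction, so the promotion difficulty above is dissolved at the cost of a slightly different bookkeeping.
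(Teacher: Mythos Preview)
Your approach is essentially the paper's: $(ii)\Rightarrow(i)$ via Remark~\ref{rem:Reversible}, and $(i)\Rightarrow(ii)$ by choosing $\scrq_0$ supported in $\Sigma_0$ with $\int d\scrq_0/T^{\dagger}=\eta(\sigma')-\eta(\sigma)$, invoking Lemma~\ref{lem:CDOpposite} on whichever of $\pm(\delta_{\sigma'}-\delta_{\sigma},\scrq_0)$ is absent from $\scrPhat$, and rescaling the resulting pair so its temperature matches $T^{\dagger}$ on $\Sigma_0$. The support restriction on $\scrq_0$ is used exactly as you say, to localize the integral after rescaling.

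The obstacle you flag is genuine---and the paper does not resolve it any more fully than you do. After rescaling, the paper simply asserts that ``$(\bar\eta,T^{\dagger})$ is a Clausius-Duhem pair for \theory,'' whereas what has actually been shown is that $(\bar\eta,\alpha^{*}T^{\#})$ is a Clausius-Duhem pair with $\alpha^{*}T^{\#}=T^{\dagger}$ only on $\Sigma_0$. On a literal reading of $(i)$, which fixes the full scale $T^{\dagger}$, this is not enough to derive a contradiction; the argument really contradicts the (slightly broader) statement that any two Clausius-Duhem pairs whose temperatures agree with $T^{\dagger}_0$ on $\Sigma_0$ have entropies differing on $\Sigma_0$ by a constant. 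So you are being more careful than the paper here, not less: your ``promotion'' worry is precisely the step the paper glosses over. Your fallback---working entirely within the set of $T^{\dagger}$-partners and separating there---would cleanly avoid the issue, but it is not what the paper does.
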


\begin{proof} To prove that $(i)$ implies $(ii)$ suppose, on the contrary, that $(i)$ holds but that $\sigma'$ and $\sigma$ are states in $\Sigma_0$ such that, for no choice of $\scrq \in \MSigma$ with $\supp \scrq \subset \Sigma_0$, are both $(\delta_{\sigma'} - \delta_{\sigma},\scrq)$ and its negative members of \scrPhat. In particular, $(\delta_{\sigma'} - \delta_{\sigma},\scrq^*)$ and its negative cannot both be members of \scrPhat, where  $\scrq^*$ is chosen to be a member of \MSigma\ that has support in $\Sigma_0$ and that satisfies the equation

\begin{equation}\label{eq:NoBarEqual}
\int_{\Sigma}\eta\ d(\delta_{\sigma'} - \delta_{\sigma}) = \eta\,(\sigma') -  \eta\,(\sigma) = \int_{\Sigma}\frac{d\scrq^*}{T^{\,\dagger}}.
\end{equation}
Here  $(\eta,T^{\dagger})$ is the first Clausius-Duhem pair in $(i)$.

	If $(\delta_{\sigma'} - \delta_{\sigma},\scrq^*)$ is not a member of \scrPhat, then  Lemma \ref{lem:CDOpposite} ensures that there is another Clausius-Duhem pair $(\eta^{\,\#}, T^{\,\#})$ such that
\begin{equation}\label{eq:SharpInequal}
 \eta^{\,\#}\,(\sigma') -  \eta^{\,\#}\,(\sigma) < \int_{\Sigma}\frac{d\scrq^*}{T^{\#}}.
\end{equation}
Recall from Remark \ref{rem:CDPairNonUnique} that, for any $\alpha > 0$, $(\frac{1}{\alpha}\,\eta^{\,\#},\alpha T^{\,\#})$ is again a Clausius-Duhem pair. In particular, from the hypothesis of the theorem, there is an $\alpha^{\,*} > 0$ such that $\alpha^{\,*}T^{\,\#}(\cdot)$ and $T^{\,\dagger}(\cdot)$ are identical on $\Sigma_0$. Thus, with
\begin{equation}
\bar{\eta}\,(\cdot) := \frac{1}{\alpha^*}\eta^{\,\#}(\cdot),
\end{equation}

\smallskip
\noindent
$(\bar{\eta},T^{\dagger})$ is a Clausius-Duhem pair for \theory. From \eqref{eq:SharpInequal} and the fact that $\scrq^*$ has support in $\Sigma_0$ it follows that
\begin{equation}\label{eq:BarInequal}
\bar{\eta}(\sigma') -  \bar{\eta}(\sigma) < \int_{\Sigma}\frac{d\scrq^*}{T^{\,\dagger}}.
\end{equation} 
Comparison with \eqref{eq:NoBarEqual} tells us that the specific entropy functions $\bar{\eta}$ and $\eta$, both corresponding to the temperature scale $T^{\,\dagger}$, cannot differ on $\Sigma_0$ by at most a constant, in contradiction to $(i)$. If $-\,(\delta_{\sigma'} - \delta_{\sigma},\scrq^*)$ is not a member of \scrPhat, proof of contradiction to $(i)$ is similar.

	To prove that $(ii)$ implies $(i)$ suppose that $\bar{\eta}$ and $\eta$ are specific-entropy functions on $\Sigma$ corresponding to the same Clausius-Duhem temperature scale $T^{\dagger}$. Let $\sigma_0$ be a fixed state in  $\Sigma_0$, and let $\sigma \in \Sigma_0$ be another state. From (ii) it follows that \scrPhat\ contains a reversible element $(\delta_{\sigma} - \delta_{\sigma_0},\scrq)$. Because the element is reversible, the Clausius-Duhem inequality requires that 
\begin{equation}
\bar{\eta}(\sigma) - \bar{\eta}(\sigma_0) = \eta\,(\sigma) - \eta\,(\sigma_0) = \int_{\Sigma}\frac{d\scrq}{T^{\,\dagger}}.
\end{equation}
Thus, for any choice of $\sigma \in \Sigma_0$ we have
\begin{equation}
\bar{\eta}(\sigma) - \eta(\sigma) = \eta(\sigma_0) - \eta(\sigma_0),
\end{equation}
which is to say that $\bar{\eta}$ and $\eta$, restricted to $\Sigma_0$, differ by at most a constant.
\end{proof}

\begin{rem} In the proof that $(ii)$ implies $(i)$ there was no need to require that the heating measure have support in $\Sigma_0$; any heating measure with arbitrary support would do. However, with $(i)$ satisfied, the deeper implication $(i) \Rightarrow (ii)$ indicates that there must \emph{also} exist in $\scrPhat$ a reversible element \mbox{$(\delta_{\sigma} - \delta_{\sigma_0},\scrq)$}, with  the support of \scrq\ contained in $\Sigma_0$.
\end{rem}
\medskip

\begin{rem}
		For a Kelvin-Planck theory \theory\ we will say that states $\sigma$ and $\sigma'$ in $\Sigma$ are \emph{reversibly-connected}, denoted $\sigma \approx_{\,\scrR} \sigma'$, if $\sigma = \sigma'$ or if there is a $\scrq \in \MSigma$ such that \scrPhat\ contains both $(\delta_{\sigma} - \delta_{\sigma'}, \scrq)$ and its negative. Like the Carnot relation $\approx_{\,\scrC}$, the relation $\approx_{\,\scrR}$ is an equivalence relation in $\Sigma$. On any intersection of a $\approx_{\,\scrC}$-equivalence-class and an $\approx_{\,\scrR}$-equivalence-class there is essential uniqueness of Clausius-Duhem entropy-temperature pairs. Because in such an intersection all states are visited by both reversible connections and (reversible) Carnot elements, some readers might infer that these could only be equilibrium states.  See, however, Section \ref{sec:ConclRems,Part II}.
\end{rem}

\subsection[The Set of Entropy-Temperature Pairs and the Set of Processes]{A Relationship Between the Supply of Entropy-Temperature Pairs and the Supply of Processes}\label{sec:CDPairsDeterminePhat} For a Kelvin-Planck theory \theory\  in which there is an essentially unique Clausius-Duhem pair, Corollary \ref{cor:CDuniqueness&irreversibility} tells us that knowledge of a Clausius-Duhem pair determines \scrPhat\  completely, so long as there is at least one irreversible process. When for the theory there is not an essentially unique Clausius-Duhem pair, we can still ask about the relationship between the full \emph{set} of Clausius-Duhem pairs and \scrPhat. In particular, we can ask about circumstances under which \scrPhat\ coincides with the set of all members of \VSigma\ that comply with the Clausius-Duhem inequality for every choice of Clausius-Duhem pair --- that is, circumstances under which \scrPhat\  is identical to the set
\begin{equation} \label{eq:CDCompliantSet}
\scrQ := \{\  (v,w) \in \VSigma: \int_{\Sigma}\eta\, d\,v \geq \int_{\Sigma}\frac{d\,w}{
T}\ ,\forall (\eta, T) \in  CD\thinspace(\Sigma,\scrP)\},
\end{equation} 
where $CD\thinspace(\Sigma,\scrP)$ is the set of all Clausius-Duhem entropy-temperature pairs for \theory.

	From the positivity of Clausius-Duhem temperature scales it follows easily that the set
\begin{equation}
(0,\minus\scrM_+(\Sigma)) := \{(0,\nu) \in \VSigma: -\nu \in \scrM_+(\Sigma)\} 
\end{equation}
is contained in \scrQ. Thus, for \scrPhat\  to coincide with \scrQ\ it is \emph{necessary} that \scrPhat\  contain $(0,\minus\scrM_+(\Sigma))$. Less obvious is the fact that \emph{for \scrPhat\  to coincide with \scrQ\ it is both necessary and sufficient that \scrPhat\  contain $(0,\minus\scrM_+(\Sigma))$}. (This last assertion is largely a consequence of Lemma \ref{lemma:FirstBigLem}.) 

	Thus, if \scrPhat\  contains the simplest elements of \VSigma\  that comply with the Clausius-Duhem inequality for every entropy-temperature pair --- those elements of the form $(0,- \nu),\  \nu \in \scrM_+(\Sigma)$ --- then \scrPhat\  must contain \emph{all} elements of \VSigma\  that comply with the Clausius-Duhem inequality for every entropy-temperature pair. 
	
	Viewed as a process,  $\process := (0,- \nu),\  \nu \in \scrM_+(\Sigma)$ represents one that is cyclic and in which for every Borel set of states there is only heat \emph{emission}. The First Law then indicates that the work done on the body suffering the process, $\nu(\Sigma)$, is converted entirely into heat emitted to the body's exterior. It is not unreasonable to suppose that physical processes of this kind, or approximations to them, are naturally abundant.
	
\begin{rem}\label{rem:CDPairsDeterminePhat}
 When \scrPhat\ does not contain $(0,\minus\scrM_+(\Sigma))$, it is a consequence of Lemma \ref{lemma:FirstBigLem} that for any  member  $(v,w)$ of \scrQ\  that is not a member of \scrPhat\ there will nevertheless exist $\nu \in \scrM_+(\Sigma)$ such that $(v,w + \nu)$ is a member of \scrPhat.
\end{rem}

\section[Conjoined Thermodynamical Theories and Thermometers]{Conjoined Thermodynamical Theories and\\ Thermometers}\label{sec:Conjoined}

For the sake of simplicity and motivation, a thermodynamical theory \theory\ has been mostly viewed  as a description of processes that bodies composed of a particular material might experience. From this viewpoint, derived functions of state for a Kelvin-Planck theory, such as a specific-entropy function  $\eta: \Sigma \to \mathbb{R}$, were deemed to be attributes of the particular  material under consideration. In this interpretation of \theory, hotness levels in $\Sigma$ and their comparability relative to the ``hotter than" relation $\succ$ in \theory\ were regarded to be \emph{intrinsic} to the theory, ascertained \emph{only} by appeals to the set 
 of processes the material itself can or cannot experience. Indeed, we admitted the possibility that, for a particular Kelvin-Planck theory, \scrP\ might not be sufficiently adequate as to render every pair of hotness levels \emph{intrinsically} $\succ$-comparable or to make all Clausius-Duhem temperature scales essentially identical.  

In this section we will expand that interpretation of a thermodynamical theory to accommodate the idea that bodies composed of a particular material inhabit a world containing bodies made of still other materials, and that these various bodies can exchange heat. Indeed, two bodies in contact, composed of different materials, can be viewed as a single compound body that exchanges heat with its exterior. In such a case, heat might be absorbed from the exterior (of the compound body) by the first body, passed to the second body, and emitted to that exterior by the second body.\footnote{Recall that, in a thermodynamical theory, a heating measure for a process suffered by a body (including such a compound one) takes account only of heat exchange of the body with its exterior, not heat flows internal to the (compound) body.} 

With such  processes in mind, we will introduce the idea of the \emph{conjunction} of two thermodynamical theories --- that is, a broader thermodynamical theory that embraces the two theories and that, in addition, allows for processes of the type just described. We will be especially interested in situations in which one of theories characterizes a material that has special \emph{thermometric} properties relative to the other. 

We shall see that the presence of the thermometric theory in the conjunction can, in a certain sense, impart to the other theory a total ``hotter than" relation or an essentially unique Clausius-Duhem temperature scale where neither existed before.

\subsection{Conjoined Thermodynamical Theories}
\begin{definition}\label{def:Conjoined}
 Let $(\Sigma_1,\scrP_1)$ and $(\Sigma_2,\scrP_2)$ be thermodynamical theories having disjoint state spaces. We say that a thermodynamical theory $(\Sigma_3,\scrP_3)$  is a \textbf{conjunction of $(\Sigma_1,\scrP_1)$ and $(\Sigma_2,\scrP_2)$} if $\Sigma_3 = \Sigma_1\cup\Sigma_2$, if both $\scrP_1$ and $\scrP_2$ are essentially contained in $\scrP_3$, and if, for each $\process \in \scrP_3$, $\deltam(\Sigma_1) = 0$ and $\deltam(\Sigma_2) = 0$.\footnote{It is understood that $\Sigma_3$ has the disjoint union topology inherited from $\Sigma_1$ and $\Sigma_2$.}
\end{definition}

\begin{rem}\label{rem:EssentiallyContains}
This requires an explanation of what it means to say, for example, that  $\scrP_1$ is essentially contained in $\scrP_3$.  Note that if $(\Delta\scrm_1,\scrq_1)$  is an element of $\scrP_1$, then $\Delta\scrm_1$ and $\scrq_1$\  are both signed regular Borel measures on $\Sigma_1$. On the other hand,  if $(\Delta\scrm_3,\scrq_3)$  is an element of $\scrP_3$, then $\Delta\scrm_3$ and $\scrq_3$\  are both signed Borel measures on $\Sigma_3 = \Sigma_1\cup\Sigma_2$. In formal terms, then, a process in $\scrP_1$ cannot be a member of $\scrP_3$. Nevertheless, we say that $(\Delta\scrm_1,\scrq_1)$ is \emph{essentially contained in $\scrP_3$}\ if there is in $\scrP_3$\  a process $(\Delta\scrm_3,\scrq_3)$ such that $\Delta\scrm_3$ and $\scrq_3$\ take the value zero on every Borel set in $\Sigma_2$ and agree in value with $\Delta\scrm_1$ and $\scrq_1$\ on every Borel set of $\Sigma_1$. 
\end{rem}

\subsection{Thermometers for a Kelvin-Planck Theory} \label{subsec:ThermomDef} Throughout remainder of Section \ref{sec:Conjoined}, \theory\  is a generic Kelvin-Planck theory with hotness levels \scrH. In particular, we do not presume that the hotness levels of \scrH\  are totally ordered by $\succ$, the ``hotter than" relation in \theory. \emph{However,  $(\Sigma_{\Theta},\scrP_{\Theta})$ will hereafter designate a different Kelvin-Planck theory with hotness levels $\scrH_{\Theta}$, this time totally ordered, according to Definition \ref{def:HotterThan2}, by the \emph{hotter than} relation $\succ_{\Theta}$ in $(\Sigma_{\Theta},\scrP_{\Theta})$.} It will be understood that $\Sigma$ and $\Sigma_{\Theta}$ are disjoint.

\begin{definition}\label{def:thermom} The Kelvin-Planck theory $(\Sigma_{\Theta},\scrP_{\Theta})$\ is a \textbf{thermometer} for \theory\  if there is a Kelvin-Planck conjunction of $(\Sigma_{\Theta},\scrP_{\Theta})$\ and \theory, say $(\Sigma_C,\scrP_C)$, having the following property: For each $\sigma \in \Sigma$ there is a $\sigma_{\theta} \in \Sigma_{\Theta}$ such that both $(0,\delta_{\sigma} - \delta_{\sigma_{\Theta}})$ and its negative are members of $\hat{\scrP_C} := \mathrm{cl}\ [\Cone(\scrP_C)]$. In this case  $(\Sigma_C,\scrP_C$) is a \textbf{thermometric conjunction} of $(\Sigma_{\Theta},\scrP_{\Theta})$\ and \theory.
\end{definition}

\begin{rem}  The defining property amounts to a requirement that for each $\sigma \in \Sigma$ there is a $\sigma_{\theta} \in \Sigma_{\Theta}$ such that, in the conjunction, $\sigma$ and $\sigma_{\theta}$ are of the same hotness. See Appendix A, in particular Remark \ref{rem:ThermomHeatTransfer}, for a description of how the required passive heat transfers might come about in a natural way.
\end{rem}

\begin{rem} Here it will be helpful to regard  $(\Sigma_{\Theta},\scrP_{\Theta})$\ as a mathematical encoding of the thermodynamic properties of a thermometric material --- that is, a material which, for the purposes of measuring hotness, can finely probe, by means of heat transfer processes, a different target material, characterized by \theory; the thermometric material assigns to each state of the target material a hotness level in $\scrH_{\Theta}$.  In turn, each such hotness level is, relative to some chosen Clausius-Duhem temperature scale for $(\Sigma_{\Theta},\scrP_{\Theta})$, associated with a numerical  value of temperature.
\end{rem}

\begin{rem}[\emph{Conditions sufficient to ensure that $(\Sigma_{\Theta},\scrP_{\Theta})$ is a thermometer for \theory}] Suppose that $(\Sigma_C,\scrP_C)$ is a Kelvin-Planck conjunction of $(\Sigma_{\Theta},\scrP_{\Theta})$\ and \theory. In Proposition \ref{prop:SuffCondsThermom}  below we  assert  that  $(\Sigma_{\Theta},\scrP_{\Theta})$\ is a thermometer for \theory\  if the conjunction satisfies some very weak and natural requirements: that $\Sigma_{\Theta}$ is connected and that in the conjunction there is a kind of universal comparability of the states in $\Sigma$ and those in $\Sigma_{\Theta}$ with respect to the weakly-hotter-than relation $_w{\succ_C}$ in the conjunction.\footnote{In the spirit of Definition \ref{def:HotterThan1}, we say that state $\sigma'_C$ is weakly hotter than state $\sigma_C$ if the two states are of different hotnesses and\ $\widehat{\scrP_C}$ contains an element of the form $(0,\delta_{\sigma'_C} - \delta_{\sigma_C} + \nu)$, with $\nu \in \scrM_+(\Sigma_C)$. In particular, $\nu$ can be the zero measure, in which case there is a passive heat transfer from  $\sigma'_C$ to $\sigma_C$.}

\begin{proposition}\label{prop:SuffCondsThermom} Let  $(\Sigma_C,\scrP_C)$ be a Kelvin-Planck conjunction of the Kelvin-Planck theories $(\Sigma_{\Theta},\scrP_{\Theta})$\ and  \theory. Then  $(\Sigma_{\Theta},\scrP_{\Theta})$ is a thermometer for \theory\  if the following conditions are satisfied:
\begin{enumerate}[(i)]
\item{$\Sigma_{\Theta}$ is connected.}
\item{Any two states, one in  $\Sigma$ and the other in $\Sigma_{\Theta}$, that are not of the same hotness in  $(\Sigma_C,\scrP_C)$\  are $_w{\succ_C}$-comparable.}
\item{For each $\sigma \in \Sigma$  there is a state $\sigma_{\theta}\in \Sigma_{\Theta}$  such that $\sigma_{\theta}$ $_w{\succ_C}$ $\sigma$ and also\\ a state $\sigma'_{\theta}\in \Sigma_{\Theta}$  such that $\sigma$ $_w{\succ_C}$ $\sigma'_{\theta}$ .}
\end{enumerate}
\end{proposition}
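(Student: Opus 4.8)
The plan is to fix an arbitrary $\sigma \in \Sigma$ and produce a state $\sigma_\theta \in \Sigma_\Theta$ that is of the same hotness as $\sigma$ in the conjunction $(\Sigma_C,\scrP_C)$; by Definition \ref{def:thermom} this is exactly what must be exhibited. The whole argument is an intermediate-value (connectedness) argument carried out on $\Sigma_\Theta$, with temperature serving as the continuous quantity and Theorem \ref{thm:EqualHotness} (applicable since the conjunction is, by hypothesis, a Kelvin-Planck theory) converting a temperature coincidence back into the required membership in $\widehat{\scrP_C}$.

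First I would record the elementary monotonicity of temperature under the conjunction's weakly-hotter-than relation. If $\sigma'_C \, _w{\succ_C}\, \sigma_C$, so that $\widehat{\scrP_C}$ contains $(0,\delta_{\sigma'_C}-\delta_{\sigma_C}+\nu)$ with $\nu\in\scrM_+(\Sigma_C)$, then applying the Clausius-Duhem inequality to this element (valid on $\widehat{\scrP_C}$ by Remark \ref{rem:CD-clP}) gives $0 \geq 1/T(\sigma'_C) - 1/T(\sigma_C) + \int_{\Sigma_C} d\nu/T$, whence the positivity of $\nu$ and $T$ forces $T(\sigma'_C) \geq T(\sigma_C)$ for every $T \in \scrT_{CD}(\Sigma_C,\scrP_C)$; moreover the two states are of different hotness by the definition of $_w{\succ_C}$. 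With this in hand I partition $\Sigma_\Theta$ relative to the fixed $\sigma$ into
\begin{align*}
S &:= \{\sigma_\theta \in \Sigma_\Theta : T(\sigma_\theta) = T(\sigma)\ \text{ for all } T\},\\
L &:= \{\sigma_\theta \in \Sigma_\Theta : T(\sigma_\theta) \leq T(\sigma)\ \text{ for all } T,\ \text{ strict for some } T\},\\
U &:= \{\sigma_\theta \in \Sigma_\Theta : T(\sigma_\theta) \geq T(\sigma)\ \text{ for all } T,\ \text{ strict for some } T\},
\end{align*}
the quantifiers ranging over $\scrT_{CD}(\Sigma_C,\scrP_C)$. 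These three sets are pairwise disjoint by inspection.

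The heart of the matter is to see that $S \sqcup L \sqcup U$ exhausts $\Sigma_\Theta$ and that $L,U$ are open and nonempty. Exhaustion is precisely where hypothesis (ii) enters: a state $\sigma_\theta \notin S$ differs from $\sigma$ in temperature on some scale, so by Theorem \ref{thm:EqualHotness} it is \emph{not} of the same hotness as $\sigma$ in the conjunction; by (ii) it is then $_w{\succ_C}$-comparable to $\sigma$, and the monotonicity just established places it in $U$ or in $L$ according to the direction of comparability. Nonemptiness of $U$ and $L$ comes from (iii): the state with $\sigma_\theta \, _w{\succ_C}\, \sigma$ lies in $U$ and the state with $\sigma \, _w{\succ_C}\, \sigma'_\theta$ lies in $L$. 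Openness I would obtain by writing $L\cup S = \bigcap_{T}\{\sigma_\theta\in\Sigma_\Theta : T(\sigma_\theta)\leq T(\sigma)\}$ and $U\cup S = \bigcap_{T}\{\sigma_\theta\in\Sigma_\Theta : T(\sigma_\theta)\geq T(\sigma)\}$; each is an intersection of closed sets (continuity of every $T$ on $\Sigma_C$), hence closed, so $U = \Sigma_\Theta\setminus(L\cup S)$ and $L = \Sigma_\Theta\setminus(U\cup S)$ are open.

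Finally I would invoke connectedness: were $S$ empty, $\{U,L\}$ would be a separation of $\Sigma_\Theta$ into disjoint nonempty open sets, contradicting hypothesis (i). Hence $S\neq\emptyset$, and any $\sigma_\theta\in S$ satisfies $T(\sigma)=T(\sigma_\theta)$ for every conjunction temperature scale; by the $(ii)\Rightarrow(i)$ direction of Theorem \ref{thm:EqualHotness} — the step in which the Hahn-Banach Theorem, via Lemma \ref{lem:CDOpposite}, does the real work — the states $\sigma$ and $\sigma_\theta$ are of the same hotness in $(\Sigma_C,\scrP_C)$, i.e.\ $(0,\delta_\sigma-\delta_{\sigma_\theta})$ and its negative both lie in $\widehat{\scrP_C}$. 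Since $\sigma$ was arbitrary, $(\Sigma_\Theta,\scrP_\Theta)$ is a thermometer for \theory. The main obstacle is conceptual rather than computational: recognizing that (ii) is exactly what excludes ``temperature-crossing'' thermometer states and thereby yields the clean partition, after which the closed/open bookkeeping and the connectedness contradiction are routine. I would also remark that the total ordering of $\scrH_\Theta$ is not actually used in this existence statement; only connectedness of $\Sigma_\Theta$ is needed.
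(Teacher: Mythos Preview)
Your proof is correct and follows essentially the same connectedness argument as the paper: partition $\Sigma_\Theta$ according to how every Clausius-Duhem temperature scale for the conjunction compares a given $\sigma_\theta$ with the fixed $\sigma$, use (ii) and (iii) to see that the ``above'' and ``below'' pieces are nonempty and jointly cover (modulo the ``equal'' set), observe they are open by continuity of temperatures, and invoke (i) to force the ``equal'' set to be nonempty, then appeal to Theorem~\ref{thm:EqualHotness}. The paper phrases this as a direct contradiction with two closed sets $U_{\geq}$ and $U_{\leq}$ rather than your three-set decomposition, and it cites Theorem~\ref{thm:HotterThan1AndTemp} for the temperature monotonicity that you instead derive by hand from the Clausius-Duhem inequality, but these are cosmetic differences; your closing remark that the total ordering of $\scrH_\Theta$ is not used here is a valid observation.
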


\begin{rem}[\emph{Pervasiveness of $_w{\succ_C}$-comparability}] \label{rem:PervasiveWHotterThan} It is important to note that in order for two states of different hotness to be $_w{\succ_C}$-comparable it is enough that there be a passive heat transfer from one to the other. Appendix \ref{app:TransientPassiveHeatTransfer} suggests that such a transfer will take place whenever material samples in the two different states are brought into contact, however briefly.
\end{rem}

\begin{proof}[Proof of Proposition \ref{prop:SuffCondsThermom}] It must be shown that for each $\sigma \in \Sigma$ there is a $\sigma_{\theta} \in \Sigma_{\Theta}$ such that $\pm(0,\delta_{\sigma} - 
\delta_{\sigma_{\theta}})$ are members of 
$\widehat{\scrP_C}$. With $\scrT_C\, $ denoting the set of all Clausius-Duhem temperature scales for $(\Sigma_C,\scrP_C)$, this is equivalent by Theorem \ref{thm:EqualHotness} to showing that for each $\sigma \in \Sigma$ there is a $\sigma_{\theta} \in \Sigma_{\Theta}$ such that $T(\sigma) = T(\sigma_{\theta})$ for all $T \in \scrT_C$.

	Suppose on the contrary that there is a $\sigma^* \in \Sigma$ such that for each $\sigma_{\theta} \in \Sigma_{\Theta}$ there is a $\bar{T} \in \scrT_C$ such that $\bar{T}(\sigma^*) \neq \bar{T}(\sigma_{\theta})$. Let 
\begin{equation}
U_{\geq} := \{\sigma_{\theta} \in \Sigma_{\Theta}: T(\sigma_{\theta}) \geq T(\sigma^*), \forall T \in \scrT_C\} 
\end{equation}
and
\begin{equation}
U_{\leq} := \{\sigma_{\theta} \in \Sigma_{\Theta}: T(\sigma_{\theta}) \leq T(\sigma^*), \forall T \in \scrT_C\} .
\end{equation}
By supposition these sets are disjoint. From Theorem \ref{thm:HotterThan1AndTemp} and \emph{(iii)} both sets are non-empty. From the same theorem and \emph{(ii)}, the union of the two sets is $\Sigma_{\Theta}$. Because $U_{\geq}$ and $U_{\leq}$ are both closed and each is the (open) complement of the other, $\Sigma_{\Theta}$ is the union of two disjoint open sets, in violation of $(i)$.
\end{proof}

\end{rem}

	In preparation for the next section we posit the following definition:
	
\begin{definition}\label{def:IdealThermom} A thermometer $(\Sigma_{\Theta},\scrP_{\Theta})$  for a given Kelvin-Planck theory is an \textbf{ideal thermometer} for it if all Clausius-Duhem temperature scales for $(\Sigma_{\Theta},\scrP_{\Theta})$ are positive multiples of some fixed one.
\end{definition}

\begin{rem}[\emph{Approximate realization of ideal thermometers}]\label{rem:IdealThermomRealization} If $(\Sigma_{\Theta},\scrP_{\Theta})$ is an ideal thermometer for a Kelvin-Planck theory \theory\  then, in addition to its thermometric properties, $(\Sigma_{\Theta},\scrP_{\Theta})$ must satisfy all of the equivalent conditions stipulated in Theorem \ref{thm:UniquenessTemp}. In particular, $\scrPhat_{\Theta}$ must contain a rich supply of Carnot elements. This might be the case, for example, when $(\Sigma_{\Theta},\scrP_{\Theta})$ describes the thermodynamics of a gas such as nitrogen or helium, with $\Sigma$ consisting of pairs of the form $(p,v)$, with $p$ denoting the local pressure and $v$ denoting the local specific volume. In this case,  Carnot elements in $\scrPhat_{\Theta}$ might derive from  Carnot cycles specified by paths in $\Sigma_{\Theta}$, as depicted in standard text books.\footnote{For an ideal gas with processes as indicated in textbooks, the empirical ideal gas temperature scale, given by $T(p,v) := \frac{M}{R}{pv}$, has the properties of a Clausius-Duhem temperature scale. Here {M} is the molecular weight  of the gas and $R$ is the ideal gas constant. Under wide-ranging conditions, helium approximates an ideal gas very well.}

	Of course, the gas described by $(\Sigma_{\Theta},\scrP_{\Theta})$ must also satisfy the requirements of a thermometer for  \theory. The latter might, for example, describe bodies consisting of liquid mixtures in which chemical reactions occur among  a collection of several specified molecular species. In that case, most elements of $\Sigma$ would correspond to local mixture states in which chemical reaction equilibrium does not prevail. Nevertheless, Proposition \ref{prop:SuffCondsThermom}, Appendix \ref{app:TransientPassiveHeatTransfer}, and Remark \ref{rem:PervasiveWHotterThan} indicate how, for $(\Sigma_{\Theta},\scrP_{\Theta})$, the thermometric requirements of Definition \ref{def:thermom} might be satisfied by means of brief contacts between the gas and samples of the reacting liquid mixture. 
\end{rem}
	
\subsection[Properties Imparted by the Existence of a Thermometer]{Properties Imparted to a Kelvin-Planck Theory by the Existence of a Thermometer} 
\label{subsec:ThermomResults}
The following theorem describes a sense in which the existence of thermometer for a Kelvin-Planck  system \theory\ can impart to it properties that were not there intrinsically.

\begin{theorem}\label{thm:ThermomTheorem} Let \theory\  be a Kelvin-Planck theory in which the hotness levels in $\Sigma$ are not necessarily totally ordered by\  $\succ$, the \emph{hotter than} relation in \theory. Moreover, suppose that $(\Sigma_{\Theta},\scrP_{\Theta})$\ is a thermometer for \theory. If  $(\Sigma_C,\scrP_C)$ is a thermometric conjunction of  \theory\ and $(\Sigma_{\Theta},\scrP_{\Theta})$, then
\begin{enumerate}[(i)]

\item the hotness levels in $\Sigma_C$  are totally ordered by $\succ_C$, the \emph{hotter than} relation in $(\Sigma_C,\scrP_C)$. As a result, any two states of $\Sigma$ not in the same $C$-hotness level are, in the sense of Definition \ref{def:HotterThanStates},  $\succ_C$-comparable. 

\item If the thermometer is ideal, then all Clausius-Duhem temperature scales for the conjunction are positive multiples of some fixed one. In particular, the restrictions to  $\Sigma$ of all Clausius-Duhem temperature scales for the conjunction differ by at most a positive multiple. 
\end{enumerate}
\end{theorem}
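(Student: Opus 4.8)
The plan is to reduce every assertion to the temperature-scale characterizations of hotness already in hand, namely Theorem~\ref{thm:EqualHotness}, Theorem~\ref{thm:HotterThan2AndTemp}, and Corollary~\ref{cor:HotterThanStates}, all applied to the conjunction $(\Sigma_C,\scrP_C)$. The single ingredient that couples the three theories is a restriction fact that I would record first. Since $\scrP_{\Theta}$ is essentially contained in $\scrP_C$, any Clausius-Duhem pair $(\eta_C,T_C)$ for $(\Sigma_C,\scrP_C)$ satisfies the Clausius-Duhem inequality on all of $\widehat{\scrP_C}$ by Remark~\ref{rem:CD-clP}, hence in particular on the images in $\scrP_C$ of the $\scrP_{\Theta}$-processes, whose change-of-condition and heating measures vanish on $\Sigma$; consequently the restriction $T_C|_{\Sigma_{\Theta}}$ is a Clausius-Duhem temperature scale for $(\Sigma_{\Theta},\scrP_{\Theta})$. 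Writing $\scrT_C$ for the set of Clausius-Duhem temperature scales of the conjunction, this restriction map from $\scrT_C$ into the Clausius-Duhem temperature scales of $(\Sigma_{\Theta},\scrP_{\Theta})$ is the workhorse of both parts.

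For part (i), I would first observe that every $C$-hotness level meets $\Sigma_{\Theta}$: for a state already in $\Sigma_{\Theta}$ this is trivial, and for a state $\sigma\in\Sigma$ it is exactly the thermometric property of Definition~\ref{def:thermom}, which places $\sigma$ in the same $C$-hotness level as some $\sigma_{\theta}\in\Sigma_{\Theta}$. Next I would show that $\succ_{\Theta}$ is inherited by $\succ_C$: if $\sigma_1,\sigma_2\in\Sigma_{\Theta}$ satisfy $\sigma_1\succ_{\Theta}\sigma_2$, then by Corollary~\ref{cor:HotterThanStates} applied to $(\Sigma_{\Theta},\scrP_{\Theta})$ one has $T_{\Theta}(\sigma_1)>T_{\Theta}(\sigma_2)$ for every $\Theta$-scale; applying this to each restriction $T_C|_{\Sigma_{\Theta}}$ gives $T_C(\sigma_1)>T_C(\sigma_2)$ for every $T_C\in\scrT_C$, whence $\sigma_1\succ_C\sigma_2$ by Corollary~\ref{cor:HotterThanStates} applied to $(\Sigma_C,\scrP_C)$. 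Totality now follows: given two distinct $C$-hotness levels, choose representatives $\sigma_1,\sigma_2\in\Sigma_{\Theta}$; they carry different $C$-temperatures, hence different $\Theta$-temperatures, so they lie in different $\Theta$-hotness levels, which are $\succ_{\Theta}$-comparable by hypothesis, and the inheritance just proved makes them $\succ_C$-comparable. Combined with the corollary that $\succ_C$ is a strict partial order, this shows $\scrH_C$ is totally ordered by $\succ_C$; the assertion about states of $\Sigma$ is then immediate from Definition~\ref{def:HotterThanStates}.

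For part (ii), I would take two Clausius-Duhem temperature scales $T_C$ and $\bar{T}_C$ for the conjunction and restrict them to $\Sigma_{\Theta}$. Because the thermometer is ideal (Definition~\ref{def:IdealThermom}), the two restrictions are positive multiples of one another, say $\bar{T}_C=\alpha\,T_C$ on $\Sigma_{\Theta}$ for some $\alpha>0$. Replacing $\bar{T}_C$ by $\frac{1}{\alpha}\bar{T}_C$, which is again a Clausius-Duhem temperature scale for the conjunction by Remark~\ref{rem:CDPairNonUnique}, I may assume $\bar{T}_C=T_C$ on $\Sigma_{\Theta}$. It remains to extend this equality to $\Sigma$. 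For $\sigma\in\Sigma$, the thermometric property supplies $\sigma_{\theta}\in\Sigma_{\Theta}$ of the same $C$-hotness as $\sigma$, so Theorem~\ref{thm:EqualHotness} yields $T_C(\sigma)=T_C(\sigma_{\theta})$ and $\bar{T}_C(\sigma)=\bar{T}_C(\sigma_{\theta})$; since the two scales already agree on $\Sigma_{\Theta}$, they agree at $\sigma$. Hence $\bar{T}_C=T_C$ throughout $\Sigma_C$, so every Clausius-Duhem temperature scale for the conjunction is a positive multiple of $T_C$, and the ``in particular'' clause follows by restricting this proportionality to $\Sigma$.

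The main obstacle is not any single deep step but rather the correct bookkeeping of the restriction map under the ``essentially contained'' convention of Definition~\ref{def:Conjoined} and Remark~\ref{rem:EssentiallyContains}: one must verify that restricting a Clausius-Duhem pair of the conjunction to $\Sigma_{\Theta}$ genuinely lands among the Clausius-Duhem pairs of $(\Sigma_{\Theta},\scrP_{\Theta})$, using Remark~\ref{rem:CD-clP} together with the fact that the conjunction's measures associated with $\scrP_{\Theta}$ vanish on $\Sigma$. Once that is secured, both parts are formal consequences of the temperature characterizations of hotness, with the thermometric property of Definition~\ref{def:thermom} serving as the bridge that transports comparability and uniqueness from $\Sigma_{\Theta}$ to $\Sigma$.
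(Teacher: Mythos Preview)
Your proposal is correct and follows essentially the same approach as the paper: both establish the restriction map $\scrT_C \to \scrT_{CD}(\Sigma_{\Theta},\scrP_{\Theta})$, use the thermometer property to place a $\Sigma_{\Theta}$-representative in every $C$-hotness level, and then invoke the temperature characterizations (Theorem~\ref{thm:EqualHotness} and Corollary~\ref{cor:HotterThanStates}) to transport totality and uniqueness from $\Sigma_{\Theta}$ to $\Sigma_C$. The only cosmetic differences are that the paper argues by contradiction (same $\Theta$-hotness $\Rightarrow$ same $C$-hotness) where you argue directly (distinct $C$-hotness $\Rightarrow$ distinct $\Theta$-hotness), and the paper uses ratios in part~(ii) where you normalize; one small wording slip is ``they carry different $C$-temperatures,'' which should read ``some $T_C\in\scrT_C$ distinguishes them,'' but the intended logic is sound.
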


\smallskip

 The theorem tells us that, for any two states  $\sigma, \sigma' \in \Sigma$ that are not of the same hotness in $(\Sigma_C,\scrP_C)$, we either have $\sigma' \succ_C  \sigma$ or $\sigma \succ_C  \sigma'$, this despite the fact that the same two states might not be intrinsically $\succ$-comparable in \theory. The enhanced comparability results from the presence of the thermometer in the larger conjoined theory $(\Sigma_C,\scrP_C)$, a presence that provides for more processes with which hotness comparisons can be made.

	Similarly, even when the Clausius-Duhem temperature scales for \theory\ are not all positive multiples of some fixed one (reflecting the absence of a sufficiently rich supply of Carnot elements in $\hat{\scrP}$), it will nevertheless be the case that, restricted to $\Sigma$, all Clausius-Duhem scales for the larger conjunction will be a positive multiple of some fixed one, so long as the thermometer is ideal --- that is, so long as the thermometer itself has an essentially unique Clausius-Duhem temperature scale. The essential uniqueness of Clausius-Duhem temperature scales for the conjunction derives from the richer supply of Carnot elements in $\hat{\scrP}_C$.  In Appendix \ref{app:NewCarnotElements} we describe a hypothetical physical scenario in which $\hat{\scrP}_C$ contains a Carnot element operating between two states of $\Sigma$ while $\hat{\scrP}$\ contains no such Carnot element.

\smallskip
\begin{proof}[Proof of Theorem \ref{thm:ThermomTheorem}]  We begin with some preliminary remarks: Because $(\Sigma_C,\scrP_C)$ is a Kelvin-Planck theory, there exists for it an entropy-temperature pair, both functions having domain $\Sigma_C$, that satisfies the Clausius-Duhem inequality for all processes in $\scrP_C$. Let $(\eta_{\,C},T_{C})$ be any such pair. Because $\scrP_{\Theta}$ is essentially contained in $\scrP_C$, it is apparent that the restrictions of $\eta_{\,C}$ and $T_C$ to $\Sigma_{\Theta}$ constitute a Clausius-Duhem pair for $(\Sigma_{\Theta},\scrP_{\Theta})$. In particular, the restriction of $T_C$ to $\Sigma_{\Theta}$ is a Clausius-Duhem temperature scale for $(\Sigma_{\Theta},\scrP_{\Theta})$. Therefore, whenever $h'_{\Theta}$ and $h_{\Theta}$ are hotness levels  in $\Sigma_{\Theta}$ such that $h'_{\Theta} \succ_{\theta} h_{\Theta}$ we must have $T_C(\sigma'_{\theta}) > T_C(\sigma_{\theta})$ for all $\sigma'_{\theta} \in h'_{\Theta}$  and   $\sigma_{\theta} \in h_{\Theta}$. Moreover, if $\sigma'_{\theta}$ and $\sigma_{\theta}$ are of the same hotness in $(\Sigma_{\Theta},\scrP_{\Theta})$, we must have $T_C(\sigma'_{\theta}) = T_C(\sigma_{\theta})$.

\smallskip

\noindent
\emph{Proof of (i)}.	We need to show that, if $h'_C$ and $h_C$ are distinct hotness levels for $(\Sigma_C,\scrP_C)$, then $h'_C$ and $h_C$ are $\succ_C$-comparable in the sense of Definition \ref{def:HotterThan2}. From properties of the thermometer,  every state in $\Sigma$ is of the same $\succ_C$-hotness as some state in $\Sigma_{\Theta}$. From this it follows that every hotness level for $(\Sigma_C,\scrP_C)$ contains a representative from $\Sigma_{\Theta}$. Suppose, then, that $\sigma'_{\theta}$ and $\sigma_{\theta}$ are such representatives taken from $h'_C$ and $h_C$,  respectively. Again from properties of $(\Sigma_{\Theta},\scrP_{\Theta})$, it must be the case that, relative to $(\Sigma_{\Theta},\scrP_{\Theta})$, the hotness levels $h'_{\Theta} \subset \Sigma_{\Theta}$ and $h_{\Theta}\subset \Sigma_{\Theta}$,  containing  $\sigma'_{\theta}$ and $\sigma_{\theta}$, are either $\succ_{\Theta}$-comparable or else they coincide.
		
		If  $\sigma'_{\theta}$ and $\sigma_{\theta}$ are of the same $\succ_{\Theta}$-hotness, then from the preliminary remarks above we have $T_C(\sigma'_{\theta}) = T_C(\sigma_{\theta})$ for every $T_C(\cdot)$ in the set of Clausius-Duhem temperature scale for $(\Sigma_C,\scrP_C)$. From Theorem \ref{thm:EqualHotness} it follows that $\sigma'_{\theta}$ and $\sigma_{\theta}$ are of the same hotness in $(\Sigma_C,\scrP_C)$. This, however, contradicts the supposition that $h'_C$ and $h_C$ are distinct.
		
		Suppose, then, that $h'_{\Theta}$ and $h_{\Theta}$ are $\succ_{\Theta}$-comparable, with $h'_{\Theta}\succ_{\Theta} h_{\Theta}$. From Theorem \ref{thm:HotterThan2AndTemp} and the preliminary remarks above, we have $T_C(\sigma'_{\theta}) > T_C(\sigma_{\theta})$ for each choice $T_C(\cdot)$ of Clausius-Duhem temperature scale for $(\Sigma_C,\scrP_C)$. From Definition \ref{def:HotterThanStates} and Corollary \ref{cor:HotterThanStates} if follows that $h'_C$  is $\succ_C$-comparable to  $h_C$, with  $h'_C\succ_C$ $h_C$.
\medskip

\noindent
\emph{Proof of (ii)}. Suppose that all Clausius-Duhem temperature scales for $(\Sigma_{\Theta},\scrP_{\Theta})$ are positive multiples of some fixed one. We want to show that the same is true of all Clausius-Duhem temperature scales for $(\Sigma_C,\scrP_C)$. Let $\bar{T}_C: \Sigma_C \to \RP$ and $T_C: \Sigma_C \to \RP$ be Clausius-Duhem temperature scales for $(\Sigma_C,\scrP_C)$. Moreover, let $\sigma^*_{\theta}$ be a fixed state in $\Sigma_{\Theta}$. It will be enough to show that
\begin{equation}\label{eq:TempsOnSigmaC}
\frac{\bar{T}_C(\sigma)}{T_C(\sigma)} = \frac{\bar{T}_C(\sigma^*_{\theta})}{T_C(\sigma^*_{\theta})},\quad \forall \sigma \in \Sigma_C.
\end{equation}
From properties of the thermometer $(\Sigma_{\Theta},\scrP_{\Theta})$, each $\sigma \in \Sigma_C$ is of the same $(\Sigma_C,\scrP_C)$-hotness as a state in $\Sigma_{\Theta}$, denoted here as $\sigma_{\theta}$. Because Clausius-Duhem temperature scales for  $(\Sigma_C,\scrP_C)$ assign the same value to all states in $\Sigma_C$ of the same $(\Sigma_C,\scrP_C)$\,-\,hotness, \eqref{eq:TempsOnSigmaC} is equivalent to 

\begin{equation}\label{eq:TempsOnSigmaC2}
\frac{\bar{T}_C(\sigma_{\theta})}{T_C(\sigma_{\theta})} = \frac{\bar{T}_C(\sigma^*_{\theta})}{T_C(\sigma^*_{\theta})},\quad \forall \sigma_{\theta} \in \Sigma_{\Theta}.
\end{equation}
From the preliminary remarks at the very beginning of the proof, the restriction to $\Sigma_{\Theta}$ of any Clausius-Duhem temperature scale for $(\Sigma_C,\scrP_C)$ is a Clausius-Duhem temperature scale for $(\Sigma_{\Theta},\scrP_{\Theta})$. That \eqref{eq:TempsOnSigmaC2} holds follows from the fact that all Clausius-Duhem temperature scales for $(\Sigma_{\Theta},\scrP_{\Theta})$ are identical up to a positive multiple.
\end{proof}

\subsection[Consistency of All Thermometers for a Kelvin-Planck Theory]{Ensured Consistency of All Thermometers for a Kelvin-Planck Theory}

 For the Kelvin-Planck theory  \theory,  we will suppose throughout this subsection that    $(\Sigma_{\Theta1}, \scrP_{\Theta1})$ and $(\Sigma_{\Theta2}, \scrP_{\Theta2})$ are two different thermometers (with $\Sigma_{\Theta1} \cap \Sigma_{\Theta2} = \emptyset$) and that $(\Sigma_{C1},\scrP_{C1})$ and $(\Sigma_{C2},\scrP_{C2})$ are, respectively, thermometric conjunctions of the two thermometers with \theory. 

	We want to show that, if the co-existence of the two (Kelvin-Planck) thermometric conjunctions does not, by virtue of that coexistence, conflict with the Kelvin-Planck Second Law, then the two thermometric conjunctions, each derived from a different thermometer, will impart to $\Sigma$ precisely the same hotter-than relations.  Moreover, if both thermometers are ideal,  then both conjunctions will impart the same (essentially unique) Clausius-Duhem temperature scale on $\Sigma$.

\begin{samepage}	
\begin{definition}	\label{def:KPCompatible}
	The thermometric conjunctions  $(\Sigma_{C1},\scrP_{C1})$ and $(\Sigma_{C2},\scrP_{C2})$ are \textbf{Kelvin-Planck compatible} if there is at least one Kelvin-Planck theory $(\Sigma_{C3},\scrP_{C3})$ in which $\Sigma_{C3} = \Sigma\cup \Sigma_{\Theta1}\cup   \Sigma_{\Theta2}$ and $\scrP_{C3}$ essentially contains $\scrP_{C1}$ and $\scrP_{C2}$ (in the sense of Remark \ref{rem:EssentiallyContains}). 
\end{definition}	
\end{samepage}
	
\begin{samepage}
\begin{theorem}[Consistency of Thermometers]\label{thm:ConsistencyOfTherm} Suppose that thermometric conjunctions $(\Sigma_{C1},\scrP_{C1})$ and $(\Sigma_{C2},\scrP_{C2})$  for \theory, corresponding to two different thermometers $(\Sigma_{\Theta1}, \scrP_{\Theta1})$ and $(\Sigma_{\Theta2}, \scrP_{\Theta2})$, are Kelvin-Planck compatible. 
\begin{enumerate}[(i)]
\item On $\Sigma$, the hotter-than relations derived from  $(\Sigma_{C1},\scrP_{C1})$ and $(\Sigma_{C2},\scrP_{C2})$  are identical. That is, if $\sigma'$ and $\sigma$ are states in $\Sigma$, then
\begin{equation}
 \sigma' \succ_{C1} \sigma \,\,\,\Leftrightarrow\,\,\,\sigma' \succ_{C2} \sigma.
\end{equation}
\item Suppose that for $j=1,2$ all Clausius-Duhem temperature scales for $(\Sigma_{Cj},\scrP_{Cj})$ are positive multiples of some fixed one, $T^*_{Cj}:\Sigma_{Cj} \to \RP$. Then, restricted to $\Sigma$, all Clausius-Duhem temperature scales for the two thermometric conjunctions are essentially identical. In particular, if  $\bar{T}^*_{Cj}:\Sigma \to \RP$ is the restriction of $T^*_{Cj}$ to $\Sigma$, then there is a positive number $\alpha$ such that $\bar{T}^*_{C2}(\cdot) = \alpha \bar{T}^*_{C1}(\cdot)$.
\end{enumerate}
\end{theorem}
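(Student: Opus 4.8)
The plan is to route every comparison on $\Sigma$ through the single Kelvin-Planck extension $(\Sigma_{C3},\scrP_{C3})$ that Kelvin-Planck compatibility supplies, and to lean on one elementary but decisive fact established once at the outset: because $\scrP_{C3}$ essentially contains $\scrP_{C1}$ and $\scrP_{C2}$, the restriction to $\Sigma_{Cj}$ of any Clausius-Duhem temperature scale for $(\Sigma_{C3},\scrP_{C3})$ is a Clausius-Duhem temperature scale for $(\Sigma_{Cj},\scrP_{Cj})$, for $j=1,2$. This is precisely the ``preliminary remark'' used in the proof of Theorem~\ref{thm:ThermomTheorem}: if $(\eta_{C3},T_{C3})$ is a Clausius-Duhem pair for the larger theory, then for any process of $\scrP_{Cj}$ its essential image in $\scrP_{C3}$ (in the sense of Remark~\ref{rem:EssentiallyContains}) carries measures that vanish off $\Sigma_{Cj}$, so the Clausius-Duhem inequality for $(\Sigma_{C3},\scrP_{C3})$ reads, on that image, exactly as the Clausius-Duhem inequality for $(\Sigma_{Cj},\scrP_{Cj})$ with the restricted pair.

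For part (i) I would show that on $\Sigma$ each of $\succ_{C1}$ and $\succ_{C2}$ coincides with $\succ_{C3}$, whence they coincide with one another. Fix $\sigma',\sigma \in \Sigma$. One direction is immediate from Corollary~\ref{cor:HotterThanStates}: if $\sigma' \succ_{Cj} \sigma$ then $T(\sigma') > T(\sigma)$ for \emph{every} Clausius-Duhem temperature scale $T$ of $(\Sigma_{Cj},\scrP_{Cj})$, in particular for every restriction to $\Sigma_{Cj}$ of a scale for $(\Sigma_{C3},\scrP_{C3})$, so $T_{C3}(\sigma') > T_{C3}(\sigma)$ for all such $T_{C3}$ and hence $\sigma' \succ_{C3} \sigma$. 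The converse is the delicate step. Here I would invoke Theorem~\ref{thm:ThermomTheorem}(i): since $(\Sigma_{C1},\scrP_{C1})$ is a thermometric conjunction its hotness levels are totally ordered by $\succ_{C1}$, so if $\sigma' \succ_{C3}\sigma$ while $\sigma' \succ_{C1}\sigma$ fails, then either $\sigma \succ_{C1}\sigma'$ or $\sigma$ and $\sigma'$ lie in the same $C1$-hotness level. The first case yields $\sigma \succ_{C3}\sigma'$ by the easy direction, contradicting $\sigma' \succ_{C3}\sigma$ through the antisymmetry inherent in the temperature characterization of Corollary~\ref{cor:HotterThanStates}; the second yields $T_{C3}(\sigma) = T_{C3}(\sigma')$ for all $T_{C3}$ (via Theorem~\ref{thm:EqualHotness} applied in $(\Sigma_{C1},\scrP_{C1})$ and then restriction), again contradicting $\sigma' \succ_{C3}\sigma$. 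Thus $\succ_{C3}$ restricted to $\Sigma$ equals $\succ_{C1}$, and symmetrically equals $\succ_{C2}$.

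For part (ii) the same restriction fact does the work almost at once. By Theorem~\ref{thm:ExistTempEnt} the Kelvin-Planck theory $(\Sigma_{C3},\scrP_{C3})$ admits some Clausius-Duhem temperature scale $T_{C3}$. Its restriction to $\Sigma_{C1}$ is a Clausius-Duhem scale for $(\Sigma_{C1},\scrP_{C1})$, hence by hypothesis equals $a\,T^*_{C1}$ for some $a>0$; its restriction to $\Sigma_{C2}$ equals $b\,T^*_{C2}$ for some $b>0$. Since $\Sigma \subset \Sigma_{C1}\cap\Sigma_{C3}$ and $\Sigma \subset \Sigma_{C2}\cap\Sigma_{C3}$, evaluating on $\Sigma$ gives $a\,\bar{T}^*_{C1}(\cdot) = T_{C3}|_{\Sigma}(\cdot) = b\,\bar{T}^*_{C2}(\cdot)$, so $\bar{T}^*_{C2}(\cdot) = (a/b)\,\bar{T}^*_{C1}(\cdot)$ with $\alpha := a/b > 0$, which is the asserted essential identity.

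The main obstacle is the converse inclusion in part (i): the restrictions of $(\Sigma_{C3},\scrP_{C3})$-scales form only a subset of the scales for $(\Sigma_{Cj},\scrP_{Cj})$, so comparability in the large theory cannot be transferred back by a bare temperature inequality. The total ordering guaranteed by Theorem~\ref{thm:ThermomTheorem}(i) is exactly what rules out the troublesome case in which $\sigma'$ and $\sigma$ are $\succ_{Cj}$-incomparable, after which the antisymmetry of $\succ_{C3}$ together with Theorem~\ref{thm:EqualHotness} closes the argument. Throughout, I would take care that the essential-containment bookkeeping of Remark~\ref{rem:EssentiallyContains} is handled so that restricting a pair defined on $\Sigma_{C3}$ genuinely produces a pair on the smaller state spaces.
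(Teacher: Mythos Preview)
Your proposal is correct and uses the same essential ingredients as the paper: the restriction of Clausius-Duhem scales from $(\Sigma_{C3},\scrP_{C3})$ to each $(\Sigma_{Cj},\scrP_{Cj})$, the total ordering of $\succ_{Cj}$ from Theorem~\ref{thm:ThermomTheorem}(i), and the temperature characterizations of Theorem~\ref{thm:EqualHotness} and Corollary~\ref{cor:HotterThanStates}. Part~(ii) is essentially identical to the paper's argument.

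The only organizational difference is in part~(i). You factor through the relation $\succ_{C3}$, showing that each $\succ_{Cj}$ coincides with $\succ_{C3}$ on $\Sigma$ and concluding by transitivity. The paper is slightly more direct: it never introduces $\succ_{C3}$ at all, but instead assumes $\sigma' \succ_{C1} \sigma$ while $\sigma' \not\succ_{C2} \sigma$, invokes the total ordering of $\succ_{C2}$ to get $\sigma \succ_{C2} \sigma'$ or $\sigma \sim_{C2} \sigma'$, and then uses a \emph{single} Clausius-Duhem scale $T_{C3}$ restricted to both $\Sigma_{C1}$ and $\Sigma_{C2}$ to obtain the contradiction $T_{C3}(\sigma') > T_{C3}(\sigma)$ and $T_{C3}(\sigma') \leq T_{C3}(\sigma)$ simultaneously. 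Your route proves a little more (each $\succ_{Cj}$ equals $\succ_{C3}$ on $\Sigma$) at the cost of one extra case analysis; the paper's route is leaner but less symmetric. Either way, the substance is the same.
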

\end{samepage}

\begin{proof} Throughout the proof, $(\Sigma_{C3},\scrP_{C3})$ is a fixed Kelvin-Planck theory satisfying the requirements of Definition \ref{def:KPCompatible}.

To prove $(i)$ we let $\sigma'$ and $\sigma$ be states in $\Sigma$ such that $\sigma' \succ_{C1} \sigma$. Corollary \ref{cor:HotterThanStates} then ensures that $T_{C1}(\sigma') > T_{C1}(\sigma)$ for every $T_{C1}$ that is a Clausius-Duhem temperature scale for $(\Sigma_{C1},\scrP_{C1})$. Contrary to what is to be proved, suppose that either $\sigma' \prec_{\,C2} \sigma$ or $\sigma' \sim_{C2} \sigma$. In these two cases,  we have, respectively, $T_{C2}(\sigma') < T_{C2}(\sigma)$ and $T_{C2}(\sigma') = T_{C2}(\sigma)$ for every $T_{C2}$ that is a Clausius-Duhem temperature scale for $(\Sigma_{C2},\scrP_{C2})$.
	
	The Kelvin-Planck theory $(\Sigma_{C3},\scrP_{C3})$ has at least one Clausius-Duhem temperature scale, say $T_{C3}$.  Because $\scrP_{C1}$ is essentially contained in $\scrP_{C3}$, it follows that the restriction of $T_{C3}$ to $\Sigma_{C1}= \Sigma \cup \Sigma_{\Theta1}$ is a Clausius-Duhem temperature scale for $(\Sigma_{C1},\scrP_{C1})$, in which case $T_{C3}(\sigma') > T_{C3}(\sigma)$.  Because $\scrP_{C2}$ is essentially contained in $\scrP_{C3}$, it also follows that the restriction of $T_{C3}$ to $\Sigma_{C2}= \Sigma \cup \Sigma_{\Theta2}$ is a Clausius-Duhem temperature scale for $(\Sigma_{C2},\scrP_{C2})$, in which case $T_{C3}(\sigma') < T_{C3}(\sigma)$ or  $T_{C3}(\sigma')= T_{C3}(\sigma)$. Thus, we have a contradiction. Proof that $\sigma' \succ_{C2} \sigma$ implies $\sigma' \succ_{C1} \sigma$ is similar.
	
	To prove $(ii)$ we again note, as in the proof of $(i)$, that for $j=1,2$ the restriction of $T_{C3}$ to $\Sigma_{Cj}= \Sigma \cup \Sigma_{\Theta_j}$ is a Clausius-Duhem temperature scale for $(\Sigma_{Cj},\scrP_{Cj})$. Given the hypothesis of (ii), then, $T^*_{Cj}:\Sigma_{Cj} \to \RP$ must, for $j=1,2$, be a positive multiple of the restriction of  $T_{C3}$ to $\Sigma_{Cj}$.  For this reason, $\bar{T}^*_{C2}(\cdot)$  must be a positive multiple of $\bar{T}^*_{C1}(\cdot)$.		
\end{proof}

\section[Equilibrium vs. Nonequilibrium Thermodynamics]{Concluding Remarks: Equilibrium vs. Non-equilibrium Thermodynamics} 
\label{sec:ConclRems,Part II}
In an attempt to  clarify
 and soften distinctions that are usually drawn between ``equilibrium" and ``nonequilibrium"  thermodynamics, we  review here what the theorems in this article and its precursor tell us about (the sometimes conflated) necessary and sufficient conditions for the very separate (also sometimes conflated) questions of existence and uniqueness of Clausius-Duhem entropy-temperature pairs.\footnote{Similar but less extensive observations were made in \cite{feinberg1983thermodynamics} and \cite{feinberg1986foundations}.}

The most important theorem of this two-part series is Theorem \ref{thm:ExistTempEnt}. It asserts that, for any thermodynamical theory consistent with the Kelvin-Planck Second Law, there exists a pair of continuous functions of state --- a specific entropy function and a thermodynamic temperature scale --- that, taken together, satisfy the Clausius-Duhem inequality for all processes the theory contains. This follows immediately from the Hahn-Banach Theorem. There is no requirement, either tacit or explicit, that the theory contain special processes, in particular reversible ones such as Carnot cycles or reversible processes that transform one state into another. Although brilliant classical textbook arguments do indeed show that a (presumed) abundance of  reversible processes is \emph{sufficient} to arrive at the \emph{existence} of a Clausius-Duhem pair, Theorem  \ref{thm:ExistTempEnt} tells us that reversible processes are \emph{not necessary} for that purpose.

However, \emph{existence} of these functions for a given Kelvin-Planck theory and their \emph{uniqueness} are very different matters. The larger the supply of processes, the smaller will be the set of Clausius-Duhem entropy-temperature pairs that comply with the Clausius-Duhem inequality for every process the theory contains.  Thus, if the set of entropy-temperature pairs for a given Kelvin-Planck theory is to be unique, either with respect to the temperature scale alone or with respect to both functions, then the set of processes extant in the theory must be sufficiently large as to ensure that the theory's set of entropy-temperature pairs is suitably narrow. 

Theorem \ref{thm:UniquenessTemp} indicates that, if a Kelvin-Planck theory is to have an essentially \emph{unique} temperature scale on its entire state space domain, it is \emph{necessary} that the theory   contain an abundance of (reversible) Carnot elements; in fact, there \emph{must} be a Carnot element operating between each pair of distinct states. If, in addition, the theory is to have a specific entropy function that is essentially unique on the entire state space, Theorem \ref{thm:CDPairUniqueness} \emph{requires}  that each pair of states also be connected by a reversible process. In each case, for \emph{uniqueness} on the entire state-space domain it is \emph{necessary} that \emph{every} state be``visited" by a \emph{reversible} process.

	In  the classical textbook picture, a reversible process has associated with it a path through a state space that can be traversed in both directions and in every detail. Such processes are usually regarded as ones that proceed so slowly that at each instant the body suffering the process can be regarded to be in a condition of equilibrium (or arbitrarily close to one). From this very classical perspective, an essentially unique Clausius-Duhem pair (or merely an essentially unique temperature scale) on the entire state space of a Kelvin-Planck theory would seem to require that all states in the theory be ``equilibrium" states.

	In this article, however, there is no notion of equilibrium.\footnote{Although the word \emph{equilibrium} is used often in thermodynamics textbooks, it is usually invoked intuitively and left without a precise definition, at least in a dynamical system sense.}  A reversible element of a Kelvin-Planck theory \theory\  is a mathematical object specified by Definition \ref{def:reversible}. It carries no requirement of a path through $\Sigma$ that is traversable in both directions, slowly or otherwise, nor is there any requirement of a two-way path through \MSigmaPl\ traversed by a body's   condition measure. 	Although the abundance  of reversible elements required by the uniqueness Theorems \ref{thm:UniquenessTemp} or \ref{thm:CDPairUniqueness} might indeed derive in one application or another from consideration of the idealized slow near-equilibrium processes depicted in textbooks,  \emph{that same abundance might derive from other sources and in other ways}.
		
	This is discussed in two appendices, described  in the following remarks.

\begin{rem}[\emph{Temperature scale uniqueness imparted to a Kelvin-Planck theory by the existence of an ideal thermometer}]	 Appendix \ref{app:NewCarnotElements} is meant as a companion to the more general \S s  \ref{subsec:ThermomDef} and \ref{subsec:ThermomResults}. In consideration of a Kelvin-Planck theory \theory\ that describes a hypothetical chemically reacting solution, we argue in Appendix \ref{app:NewCarnotElements} that, even when two  (nonequilbrium) states  $\sigma$ and $\sigma'$ in $\Sigma$ are unconnected by a (reversible) Carnot element $(0,c'\delta_{\sigma'} - c\delta_{\sigma})$ in \scrPhat, the existence\footnote{Recall Remark \ref{rem:IdealThermomRealization}.} of an ideal thermometer for \theory\  invariably gives rise to such a Carnot element in the conjunction of \theory\ with the thermometer. In that case,  Theorem \ref{thm:ThermomTheorem} ensures essential Clausius-Duhem temperature-scale uniqueness for the conjunction, in particular on all of $\Sigma$. 

	Note that \emph{it is in the conjunction that temperature-scale uniqueness on $\Sigma$ takes on its meaning and it is there that the presence of the Carnot element $(0,c'\delta_{\sigma'} - c\delta_{\sigma})$ is to be found}. It is in this broadened sense, involving the presumed availability of an ideal thermometer,  that Clausius-Duhem temperature-scale uniqueness becomes more universal in character than Clausius-Duhem entropy-function uniqueness, discussed in Section \ref{sec:EntropyUniqueness}. 
	
	Indeed, if members of a collection of distinct Kelvin-Planck theories, corresponding perhaps to a great variety of different materials, each had the same ideal thermometer, then for each of the pairwise thermometric conjunctions there would be  an essentially unique temperature scale, universally imposed across the collection by a single thermometer, regardless of whether state spaces of the individual Kelvin-Planck theories were restricted solely to states of equilibrium.
\end{rem}	

\begin{rem}[\emph{A reversible element realized in the limit by hypothetical physical processes that are very fast}] Theorems \ref{thm:CDPairUniqueness} and \ref{thm:EntropyUniqueness} indicate that, for a given Kelvin-Planck theory \theory\  with an essentially unique Clausius-Duhem temperature scale, essential uniqueness of a corresponding Clausius-Duhem specific-entropy function requires that every pair of states in $\Sigma$ be connected by a reversible element of \scrPhat. However, this   does not, by itself, require that all members of $\Sigma$ are, in some sense, equilibrium states. 

	Appendix \ref{App:ChemReactor} is intended to indicate that reversible elements in \scrPhat\  are not \emph{inextricably} linked to slow transitions along paths in \MSigmaPl \ consisting entirely of equilibrium conditions. There we suggest how  a reversible element of the form $(\delta_{\sigma'}-\delta_{\sigma}, \scrq) \in \scrPhat$ might arise in consideration of an idealized  chemical reactor, where neither  $\sigma$, $\sigma'$, nor the support of \scrq,  need be restricted to states of chemical equilibrium. Indeed, we indicate how such a reversible element might be the limit of a sequence of processes in \scrP\  corresponding to hypothetical physical realizations that occur at increasingly rapid rates.
\end{rem}
\bigskip
	
	In summary, then, if for a Kelvin-Planck theory \emph{existence} of Clausius-Duhem entropy and temperature functions of state are at issue, Theorem \ref{thm:EqualHotness} would seem to provide little support for those who might argue, perhaps based on standard textbook derivations, that the domains of those functions should be limited to equilibrium states. 
	
	If, however, essential \emph{uniqueness} of those same functions assumes critical importance in particular applications, then Theorems \ref{thm:UniquenessTemp} and \ref{thm:CDPairUniqueness} might, in some contexts, lend support to the claim that the domain of those functions should indeed be restricted to states of equilibrium. However, Appendices \ref{app:NewCarnotElements} and \ref{App:ChemReactor} should be kept in mind: reversible processes in the sense of those theorems might have a variety of physical origins, some involving nonequilibrium states.
	
	In any case we suspect that, in most applications, uniqueness of Clausius-Duhem entropy-temperature pairs will be considerably less consequential than their existence.

\appendix
\appendixpage
\addappheadtotoc

\section{An Example of Passive Unsteady Heat Transfer} 
\label{app:TransientPassiveHeatTransfer}

	For the purpose of motivation, we provided in Example \ref{ex:SimpleRodHeatTransfer} a hypothetical physical situation that, in a thermodynamical theory \theory, gave rise to a cyclic process in \scrPhat\  of the form $(0,\alpha(\delta_{\sigma'} - \delta_{\sigma}))$, with $\alpha > 0$, where $\sigma'$ and $\sigma$ are states in $\Sigma$. That the change of condition was the zero measure on $\Sigma$ (i.e., $\Delta \scrm = 0$) resulted from the fact that, in the example, the body suffering the process was in a temporally steady condition (as distinct from traditional thermodynamic equilibrium), so there was no change in the condition of the body between the process's inception and its termination. 

	Again for the purpose of motivation, it is our intent in this appendix to show, by means of a different hypothetical physical situation, that the same element in \scrPhat\ can derive from consideration of dynamic processes in which a steady condition is never present. The example is a simple toy model (e.g., one-dimensional, no motion), but it can be generalized to contain more complex and more natural features, suggesting that such processes will appear in \scrPhat\ whenever bodies come into momentary thermal contact.
	
	Consider, then, two samples of material, both samples described by the thermodynamical theory \theory, filling two slender tubes, each of length $L$ and small cross sectional area $A$, insulated along their extent, but not at their ends. The two samples are aligned along the $x$-axis, from $x = -L$ to $x = L$. The samples abut at $x=0$, separated by a perfectly heat-conducting barrier of negligible thickness.  A continuous function $r:[-L,L]\times [-t^*,t^*] \to \mathbb{R}$ describes the heat flux through tube cross-sections; that is, $r(x,t)$ is the rate of heat flow per unit cross-sectional area, in the positive $x$-direction,  through the cross-section at position $x$ and at time $t$. We will assume that $r(0,0)$ is positive. 
	
	We will also assume that there are two continuous functions, $\hat{\sigma}':[-L,0]\times [-t^*,t^*] \to \Sigma$ and $\hat{\sigma}:[0,L]\times [-t^*,t^*] \to \Sigma$ that give the point-wise state of the material  on the two sides of the barrier at each instant. We denote by $\sigma'$ and $\sigma$ material states, assumed to be different, contiguous to the two sides of the barrier at $t=0$.  That is,
\begin{equation}
\sigma' = \hat{\sigma}'(0,0)\quad  \mathrm{and} \quad \sigma = \hat{\sigma}(0,0).
\end{equation}

\begin{rem}\label{rem:ThermomHeatTransfer}
This picture is especially apt in our consideration of thermometric conjunctions in Section \ref{sec:Conjoined}, in which case \theory, the Kelvin-Planck theory considered here, would be replaced by the thermometric conjunction $(\Sigma_C,\scrP_C)$.  In such a context, $\sigma$ might represent a state of the thermometric material, while $\sigma'$ might represent a state of the material sample being probed.
\end{rem}
\medskip

	We will argue that, given the physical situation described, consideration of physical processes suffered by a sequence of sub-bodies along the tubes will give rise to a corresponding sequence of elements in $\scrPhat$\    that, for any $\alpha > 0$ having units of energy, converges in $\scrV(\Sigma)$ to  $(0,\alpha(\delta_{\sigma'} - \delta_{\sigma}))$. (It is assumed that physical processes suffered by all such sub-bodies are accounted for separately in \scrP.) For simplicity,\footnote{When these assumptions are dropped the outcome is essentially the same, but the analysis becomes more cumbersome.} we suppose that there is no motion and that the local material density on each side of the barrier is independent of spatial position and time, with $\rho'$ the density in the region $x \in [-L,0)$ and $\rho$ the density  for $x \in [0,L]$.
	
	For any $\xi$ and $\tau$, with $L > \xi > 0$ and $t^* > \tau > 0$,  we can calculate the process descriptor $\scrp(\xi,\tau) = (\Delta \scrm(\xi,\tau),\scrq(\xi,\tau)) \in \scrP$ that derives from consideration of the physical process suffered by the sub-body contained in the spatial interval $-\xi \leq x \leq \xi$  over the course of the time interval $[-\tau, \tau]$. 

	To specify a measure $\mu \in \scrM(\Sigma)$ it is enough to specify how $\mu$ integrates all continuous functions on $\Sigma$; that is, it is enough to specify the bounded linear functional  $\Gamma_{\mu}: C(\Sigma,\mathbb{R}) \to \mathbb{R}$ given by
\begin{equation}
\Gamma_{\mu}(f) = \int_{\Sigma} f\, d\mu,\quad  \forall f \in C(\Sigma,\mathbb{R}).
\end{equation}
The heating measure, $\scrq(\xi,\tau)$, for the process under consideration is given by 
\begin{align}\label{eq:q2ndAppendix}
&\Gamma_{\scrq(\xi,\tau)}(f) = \int_{\Sigma} f\, d\scrq(\xi,\tau) :=\\ &A\int_{-\tau}^{\tau}[f(\hat{\sigma}'(-\xi,t))\,r(-\xi,t) - f(\hat{\sigma}(\xi,t))\, r(\xi,t)]\ dt
,\ \   \forall f \in C(\Sigma,\mathbb{R}).\nonumber
\end{align}
The change of condition measure, $\Delta \scrm(\xi,\tau)$, is specified by the stipulation that, for all $g \in C(\Sigma,\mathbb{R})$,
\begin{align}\label{eq:DeltaM2ndAppendix}
&\Gamma_{\Delta \scrm(\xi,\tau)}(g) = \int_{\Sigma} g\, d\Delta \scrm(\xi,\tau) :=\\ &\rho'A\int_{-\xi}^{0}[g(\hat{\sigma}'(x,\tau)) - g(\hat{\sigma}'(x,-\tau))]\ dx +\rho A\int_{0}^{\xi}[g(\hat{\sigma}(x,\tau)) - g(\hat{\sigma}(x,-\tau))]\ dx \nonumber
\end{align}

Thus, if $\scrP \subset \scrV(\Sigma)$  contains descriptors of all physical processes our toy model admits, then, from consideration of the physical process corresponding to $\xi >0$ and $\tau>0$, we can conclude that $\scrP$ contains the process descriptor 
\begin{equation}\label{eq:ProcessXiTau}
\scrp(\xi,\tau) := (\Delta \scrm(\xi,\tau)),\scrq(\xi,\tau)),
\end{equation}
with $\Delta \scrm(\xi,\tau)$ and $\scrq(\xi,\tau)$ given by \eqref{eq:DeltaM2ndAppendix} and \eqref{eq:q2ndAppendix}. Therefore, if $\alpha$ is a positive constant (carrying units of energy) 
\begin{equation}\label{eq:ScaledProcessXiTau}
\alpha[2A\; r(0,0)\,\tau]^{-1}\scrp(\xi,\tau) := \alpha[2A\; r(0,0)\,\tau]^{-1}(\Delta \scrm(\xi,\tau),\scrq(\xi,\tau)),
\end{equation}	
is a member of $\Cone(\scrP)$. Our aim is to show that by judiciously taking a sequence of values of $\xi$ and $\tau$, shrinking to zero, \eqref{eq:ScaledProcessXiTau} will converge in $\scrV(\Sigma)$ to  \mbox{$(0,\alpha(\delta_{\sigma'} - \delta_{\sigma}))$}, which is to say that \mbox{$(0,\alpha(\delta_{\sigma'} - \delta_{\sigma}))$} is a member of $\scrPhat : = \mathrm{cl}\,[\Cone(\scrP)]$.

To show convergence in $\scrV(\Sigma)$ of \eqref{eq:ScaledProcessXiTau} to $(0,\alpha(\delta_{\sigma'} - \delta_{\sigma}))$ as  $\xi_n$ and $\tau_n$ approach $0$ in  at least certain selected ways,  we will argue that, for every choice of $f$ and $g$ in $C(\Sigma,\mathbb{R})$,
\begin{align}
\lim_{n \to \infty}\ \alpha [2A\; r(0,0)\,\tau_n]^{-1}(\int_{\Sigma} g\, d \Delta \scrm(\xi_n,\tau_n)  + \int_{\Sigma} f\, d\scrq(\xi_n,\tau_n)) =\\\int_{\Sigma} f\, d\,[\alpha (\delta_{\sigma'} - \delta_{\sigma})] = \alpha(f(\sigma') - f(\sigma)),\nonumber
\end{align}
provided that we take $\tau_n = \frac{1}{n}$ and $\xi_n = (\frac{1}{n})^2$. We first note from \eqref{eq:DeltaM2ndAppendix} that
\begin{align}
&\alpha\, [2A\; r(0,0)\,\tau_n]^{-1}|\int_{\Sigma} g\, d \Delta \scrm(\xi_n,\tau_n)|\\ &\leq \alpha\, [2A\; r(0,0)\,\tau_n]^{-1}[2A\;\xi_n\,(\ \rho' +\rho\;)\,|\,g\,|_{\,max}\,]\\
& = \alpha\, [\,r(0,0)\,]^{-1\,}[\,(\ \rho' +\rho\;)\,|\,g\,|_{\,max}\,|\,]\ \frac{\xi_n}{\tau_n}, \label{eq:SecondApp1stTerm}
\end{align}
where 
\begin{equation}
|\,g\,|_{\,max} := \max\{\, |\,g(\sigma)\,|\, :\, \sigma \in \Sigma\,\}
\end{equation} 
It is evident that, so long as we take $\tau_n = \frac{1}{n}$ and $\xi_n = (\frac{1}{n})^2$, the quantity shown in \eqref{eq:SecondApp1stTerm} will approach zero as $n \to \infty$.

	It remains to be argued that, with this same choice for $\tau_n$ and $\xi_n$,
\begin{equation} \label{eq:FinalGoal2ndApp}
\lim_{n \to \infty}\ \alpha\,[2A\; r(0,0)\,\tau_n]^{-1}( \int_{\Sigma} f\, d\scrq(\xi_n,\tau_n)) =\ \alpha(f(\sigma') - f(\sigma)).
\end{equation} 
From \eqref{eq:q2ndAppendix} it follows that
\begin{align}\label{eq:Finale2ndApp1}
&\alpha\,[2A\; r(0,0)\,\tau_n]^{-1}( \int_{\Sigma} f\, d\scrq(\xi_n,\tau_n)) =\\&\alpha\,[\,r(0,0)\,]^{-1}\{ \int_{-\tau_n}^{\tau_n}f(\hat{\sigma}'(-\xi_n,t))\,r(-\xi_n,t)\ \frac{dt}{2
\tau_n} - \int_{-\tau_n}^{\tau_n}f(\hat{\sigma}(\xi_n,t))\, r(\xi_n,t)\ \frac{dt}{2
\tau_n}\}.\nonumber
\end{align}

Therefore, to show that \eqref{eq:FinalGoal2ndApp} holds, with $\tau_n = \frac{1}{n}$ and $\xi_n = (\frac{1}{n})^2$,  it is enough to show that

\begin{equation}\label{eq:Finale2ndApp2a}
\lim_{n \to \infty}\ | \int_{-\tau_n}^{\tau_n}f(\hat{\sigma}'(-\xi_n,t))\,\frac{r(-\xi_n,t)}{r(0,0)}\ \frac{dt}{2
\tau_n} - f(\sigma')\,| = 0
\end{equation}
and
\begin{equation}\label{eq:Finale2ndApp2b}
\lim_{n \to \infty}\ | \int_{-\tau_n}^{\tau_n}f(\hat{\sigma}(\xi_n,t))\,\frac{r(\xi_n,t)}{r(0,0)}\ \frac{dt}{2
\tau_n} - f(\sigma)\,| = 0.
\end{equation}
However, these follow from continuity of the functions $r$, $f\circ\hat{\sigma}'$ and $f\circ\hat{\sigma}$.

\section[New Carnot Elements in a Thermometric Conjunction]{New Carnot Elements Arising in a Thermometric Conjunction}\label{app:NewCarnotElements}

	As a companion to \S\,\ref{subsec:ThermomResults}, we provide here a discussion, supplemented by a toy physical picture, to suggest how, for a Kelvin-Planck theory \theory\  endowed with an ideal thermometer $(\Sigma_{\Theta},\scrP_{\Theta})$,  their thermometric conjunction $(\Sigma_C,\scrP_C)$ can contain a Carnot element operating between two specified (perhaps non-equilibrium) states of $\Sigma$ even when \theory\  itself contains no such Carnot element. 

	For this purpose we suppose that \theory\ is a Kelvin-Planck theory of  liquid solutions composed of certain molecular species among which chemical reactions occur. We suppose also that $(\Sigma_{\Theta},\scrP_{\Theta})$ is an ideal thermometer for \theory, encoding the behavior of a thermometric material, which we will presume to be a perfect gas.

	As a preamble to the discussion, consider a single physical process involving heat transfer between two bodies --- one composed of the reacting liquid solution described by \theory\  and the other composed of the thermometric gas described by $(\Sigma_{\Theta},\scrP_{\Theta})$. In the theory \theory, the process will have associated with it a heating measure \scrq, defined on the 
Borel subsets of $\Sigma$. That same physical process, viewed from the perspective of the thermometric conjunction $(\Sigma_C,\scrP_C)$ will also have a heating measure $\scrq_C$ defined on the Borel subsets of $\Sigma_C = \Sigma \cup \Sigma_{\Theta}$. \emph{It should be clearly understood that the restriction of $\scrq_C$ to the Borel sets of $\Sigma$ can be very different from \scrq.}  This is because the heating measure in \theory\  captures details of heat exchange between the reacting solution and its exterior, \emph{an exterior that includes the gas thermometer}. For that same physical process, the corresponding heating measure in $(\Sigma_C,\scrP_C)$ captures the details of heat transfer between a composite body (the  solution sample taken with the thermometer) \emph{and the exterior of that composite body}. That is, in $(\Sigma_C,\scrP_C)$ the heating measure takes no account of heat transfer between the reacting liquid solution and the thermometric gas.
	
	Now let $\sigma \in \Sigma$ and $\sigma'  \in \Sigma$ be states of the reacting solution, not necessarily  states of chemical equilibrium. By properties of the thermometer, there are gas states $\sigma_{\theta}$ and $\sigma'_{\theta}$ in $\Sigma_{\Theta}$ (and therefore in $\Sigma_C$) such that, in the conjoined theory $(\Sigma_C,\scrP_C)$, $\sigma$ and $\sigma_{\theta}$  are of the same hotness, as are $\sigma'$ and $\sigma'_{\theta}$. Therefore, $\hat{\scrP}_C$\,  contains the passive heat transfers required by Definition \ref{def:thermom} between these liquid solution states and their corresponding gas states. 
	
	Because the ideal thermometer $(\Sigma_{\Theta},\scrP_{\Theta})$ has a unique Clausius-Duhem temperature scale, Theorem \ref{thm:UniquenessTemp} requires that  $\hat{\scrP}_{\Theta}$ contain a (reversible) Carnot element, say $(0,c\,\delta_{\sigma_{\theta}}-c'\,\delta_{\sigma'_{\theta}}) \in \scrV(\Sigma_{\Theta})$, operating between $\sigma_{\theta}$ and $\sigma'_{\theta}$. In physical terms, this Carnot element can be regarded as the limit of representations in $\scrPhat_{\Theta}$ of a sequence of classical ideal gas Carnot cycles (as usually depicted in pressure-volume space)  traversing two (decreasingly small) isothermal segments, one centered at $\sigma_{\theta}$ and the other at $\sigma'_{\theta}$.\footnote{In the sequence, the amount of gas experiencing each cycle needn't be the same.}
	
	Because $\scrP_{\Theta}$ is, in the sense of Remark \ref{rem:EssentiallyContains}, essentially contained in $\scrP_C$,   $(0,c\,\delta_{\sigma_{\theta}}-\,c'\,\delta_{\sigma'_{\theta}})$, viewed as a member of $\scrV(\Sigma_C)$, is a (reversible) Carnot element of $\hat{\scrP}_C$. As we indicated above, $\hat{\scrP}_C$ also contains (reversible) passive-heat-transfer elements of the form $(0,c'\,\delta_{\sigma'_{\theta}}-\,c'\,\delta_{\sigma'})$ and $(0,c\,\delta_{\sigma}-\,c\,\delta_{\sigma_{\theta}})$. Because $\hat{\scrP}_C$ is a convex cone the sum, \mbox{$(0,c\,\delta_{\sigma}-\,c'\,\delta_{\sigma'})$}, of these three members of $\hat{\scrP}_C$ having support entirely in $\Sigma$, is also a member of $\hat{\scrP}_C$.
	
	\emph{This is to say that in the thermometric conjunction $(\Sigma_C,\scrP_C)$ there is invariably a Carnot element operating between two (arbitrary) states  of the reacting liquid solution, $\sigma \in \Sigma$ and $\sigma'  \in \Sigma$, whether or not these be states of chemical equilibrium and even when the theory \theory \ of the reacting solution alone contains no such Carnot element.} 
	
	This is a consequence of the mathematics, deriving from the suppositions with which we began. To understand in more physical terms how such a Carnot element in $(\Sigma_C,\scrP_C)$ can emerge, even when absent in \theory, it will be useful to consider a toy  physical  picture meant to reflect the mathematics. The cartoon, like all cartoons, is imperfect, but it is only meant to be suggestive.  At the end of this appendix we will make two remarks about  how, in a much more extended exposition, certain of those imperfections might be mitigated. These remarks will draw on Appendix \ref{app:TransientPassiveHeatTransfer} and the appendix of this article's companion \cite{feinberg-lavineEntropy1}.
	 
	 In the cartoon, we imagine the Carnot  element  \mbox{$(0,c\,\delta_{\sigma}-\,c'\,\delta_{\sigma'})$} in $\hat{\scrP}_C$ to derive from a (limit) process of the following kind: A  solution sample in state $\sigma$, contiguous to the thermometric gas, rapidly absorbs a very small amount of heat, say $c$ calories, from an external bath while simultaneously passing that same small amount of heat to the thermometric  gas in state $\sigma_{\theta}$, all without appreciable changes to the solution sample. That heat is used  to drive a small isothermal segment of a Carnot cycle in the gas, that segment containing state $\sigma_{\theta}$. A small amount of heat, in the amount of $c'$ calories, is removed from the gas during the cycle's second small isothermal segment, that segment containing gas state $\sigma'_{\theta}$. The removed heat is rapidly passed to a different sample of the reacting solution, this one in state $\sigma'$, while an equal amount of heat is  simultaneously passed from there to a second external bath.
	
	Note that in this overall hypothetical physical process, \emph{viewed as one experienced by a physical conjunction of liquid solution and thermometric gas taken together}, the only heat exchange between the conjunction and \emph{the conjunction's exterior} is in the form of heat passage from the first external bath to the first solution sample (while in state $\sigma$) and then from the second solution sample (while in state $\sigma'$) to the second external bath. This is reflected in the process's codification as \mbox{$(0,c\,\delta_{\sigma}-\,c'\,\delta_{\sigma'})$} in $\hat{\scrP}_C$.
	
	However, viewed from the perspective of the reacting solution alone, described by the Kelvin-Planck theory \theory\  (as distinct from $(\Sigma_C,\scrP_C)$, the overall physical process indicated does not manifest itself as a  Carnot element. If it is kept in mind that the thermometer is part of the solution's exterior, as are the baths, it becomes apparent that there is no net absorption of heat from the solution's exterior by solution in either states $\sigma$ or $\sigma'$. This is to say that, in \theory, the heating measure for the overall physical process indicated is the zero measure in \MSigma.
	
\begin{rem}[\emph{Transient passive heat transfers between the baths and the solution samples}] In the cartoon, there is a transfer of a small amount of heat from the reacting-solution sample, while the sample is in a perhaps nonequilibrium state $\sigma$, to the thermometric material, while the thermometric material is in state $\sigma_{\theta}$. Because the reacting sample might be in a rapidly changing composition state, there arises the question of how the passive heat transfer $(0,c(\,\delta_{\sigma}-\,\delta_{\sigma_{\theta}})) \in \scrPhat_C$ could be realized. This was the general subject of Appendix \ref{app:TransientPassiveHeatTransfer}, with special reference to the thermometric setting in Remark \ref{rem:ThermomHeatTransfer}.  In rough terms, that element in $\scrPhat_C$ is  derived (in Appendix \ref{app:TransientPassiveHeatTransfer}) from consideration of very narrow material region straddling the sample-thermometer interface during a time interval of vanishingly small duration.

 Within the toy picture offered in this appendix, the initial reacting-liquid sample considered might be identified, in the sense of Appendix \ref{app:TransientPassiveHeatTransfer} (in particular Remark \ref{rem:ThermomHeatTransfer}), with a very thin sliver of liquid in the region $[-\varepsilon, 0]$ abutting the liquid-gas boundary, while the heat bath transmitting heat to that sample might be identified with the remaining liquid, residing in the region $[-L, -\varepsilon)$ exterior to the sliver.\footnote{In Remark \ref{rem:ThermomHeatTransfer}, $\sigma'$ would be identified with $\sigma$ here, while $\sigma$ there would be identified with $\sigma_{\theta}$ here.}
\end{rem}
\smallskip

\begin{rem}[\emph{About the addition of processes}] Prior to the introduction of the physical cartoon, the Carnot element $(0,c\,\delta_{\sigma}-\,c'\,\delta_{\sigma'})$ in $\scrPhat_C$\ derived mathematically as the sum of three other elements in $\scrPhat_C$, namely the passive heat transfers $(0,c'\,\delta_{\sigma'_{\theta}}-\,c'\,\delta_{\sigma'})$, $(0,c\,\delta_{\sigma}-\,c\,\delta_{\sigma_{\theta}})$, and the Carnot element in the thermometric gas, $(0,c\,\delta_{\sigma_{\theta}}-c'\,\delta_{\sigma'_{\theta}})$, viewed as a member of $\scrPhat_C$.

	That, for a natural thermodynamical theory, the closure of the cone of the process set should be closed under addition is a consequence of reasoning given in the appendix of \cite{feinberg-lavineEntropy1}. For the most part---but not entirely---this results from the supposition that two processes occurring in nature, suffered by different bodies, can be run in remote locations simultaneously to give a new natural process, suffered by the union of the two bodies, \emph{provided that the durations of the two separate processes are identical}. However, this was just one supposition in the appendix of \cite{feinberg-lavineEntropy1}. In light of still other natural suppositions, analysis in the appendix of \cite{feinberg-lavineEntropy1} indicates that, for the purpose of the additivity result, the simultaneity requirement is, in effect, inconsequential.
	
	This is mentioned here because, in the invocation of the physical cartoon, we have been casual about timing related to the two passive heat transfers between liquid and gas and also about timing related to the Carnot cycle in the gas. To be more precise, we have been casual about the timing of the physical processes (corresponding to members of $\scrP_C$) that \emph{approximate} those three limiting elements of $\scrPhat_C$.
	
	Discussion of such considerations would have made invocation of the cartoon significantly more complex than its didactic purpose warrants, but readers might want to keep in mind the appendix of \cite{feinberg-lavineEntropy1}.
\end{rem}

\section{ A Reversible Element in \scrPhat\  Involving Non\-equi\-lib\-rium States}
\label{App:ChemReactor}

It is the purpose of this appendix to indicate   how, in a theory \theory\  of  reacting mixtures,  there might arise in $\scrPhat\ := \mathrm{cl}\,[\Cone(\scrP)]$ a \emph{reversible} element of the form $(\delta_{\sigma'} - \delta_{\sigma}, \scrq)$, where neither $\sigma' \in \Sigma$ nor $\sigma \in \Sigma$ is a state of chemical equilibrium. 

We suppose that \theory\   describes  gaseous mixtures of $n$ molecular species $A_1, A_2,\dots, A_n$ that participate in a perhaps complex network of chemical reactions. The local states will be regarded to be elements of the form $(c, \theta) \in \R^{n+1}$, where $c := [c_1, c_2,\dots, c_n] \in \R^n$ is the vector of local molar concentrations of the $n$ species (moles per unit volume) and $\theta$ is the local temperature (perhaps on an empirical temperature scale). 

	To describe $\Sigma$, the full set of states for the theory, we first denote by $M := [M_1,M_2,\dots,M_n]$  the vector of molecular weights (mass per mole) of the species. For a fixed chosen positive value of $\rho^*$ (having units of mass per volume), the compact set
\begin{equation} \label{eq:OmegaDef}
\Omega : = \{ c \in \R^n : M\cdot c \leq \rho^*,\ c_i \geq 0,\  i = 1,2,\dots,n\} 
\end{equation}
is the set of all local molar concentration vectors consistent with a local mass density less than or equal to $\rho^*$. Hereafter we take $\Sigma = \Omega \times I$, where $I$ is a closed (temperature) interval of positive real numbers, perhaps very large. We suppose that, in the theory,  $\rho^*$ and $I$ are chosen to preclude from $\Sigma$ density and temperature extremes that are inappropriate to the model gaseous material under consideration.

For the mixture we presume that there are two smooth functions of state, \mbox{$\tilde{u}: \Sigma \to \R$} and $\tilde{f}:\Sigma \to \R^n$ with the following interpretations:  When $(c,\theta)$ is a local state in the mixture,  $\tilde{u}(c, \theta)$ is the local internal energy per unit volume, and $\tilde{f}(c, \theta) = [\tilde{f}_1(c,\theta),\dots,\tilde{f}_n(c,\theta)]$ is the vector of the local net molar production rates per unit volume of the $n$ species  due to the occurrence of all chemical reactions.
 
 Consider a mixture sample that fills a rigid closed vessel of constant volume, $V$, and suppose that the mixture remains spatially homogeneous at all times. That is,  at each instant the local state is the same everywhere. We presume that the local state is governed by the system of ordinary differential equations \eqref{eq:ReactorODEs}. 
 
 \begin{eqnarray}
\label{eq:ReactorODEs}
 \dot{c}_1 &=&  \tilde{f}_1(c,\theta)\nonumber\\
 \dot{c}_2 &=&  \tilde{f}_2(c,\theta)\nonumber\\
 &\vdots&\\
  \dot{c}_n &=&  \tilde{f}_n(c,\theta)\nonumber\\
 \frac{\partial \tilde{u}}{\partial \theta}(c,\theta)\,\dot{\theta} &=& - \nabla_c\, \tilde{u}(c,\theta) \cdot \tilde{f}(c,\theta) + Q(t)\nonumber
 \end{eqnarray}
 
 The overdot indicates differentiation with respect to time, and $Q(t)$ is the  rate  per unit volume at time $t$ of heat addition to the mixture within the reactor vessel. The first $n$ equations are molar balances of the species. The last equation reflects the First Law of Thermodynamics applied to the reactor under consideration: The rate of change of internal energy of mixture filling the rigid reactor vessel is equal to the rate at which heat is supplied to it.
 \medskip

\begin{rem}\label{rem:MixtureMassCons} Because the mass of the mixture filling the closed vessel is conserved and because the volume of the vessel is fixed, the density of the mixture remains constant in time, even as reactions cause the concentrations of the various species to change. If $\rho$ is the fixed density of mixture in the vessel, presumed less than $\rho^*$, then the evolving vector of molar concentrations, governed by \eqref{eq:ReactorODEs}, will forever remain in the set 
\begin{equation} \label{eq:ConservedMassSet}
\Gamma : = \{ c \in \R^n : M\cdot c = \rho,\ c_i \geq 0,\  i = 1,2,\dots,n\},  
\end{equation}
which is clearly contained in $\Omega$.
 
As a consequence,  $\tilde{f}$ must be such that \mbox{$M \cdot \tilde{f}(c,\theta)$} $= 0$ for all $c$ and $\theta$.
Under very weak assumptions  about the kinetics of the various reactions, the function $\tilde{f}$ also has the property that 
$\tilde{f}_i(c,\theta) \geq 0$ whenever $c_i = 0$, which is to say that the production rate of an absent species is not negative \cite{feinberg2019foundations}. Hereafter, we assume that $\tilde{f}$ has the property that any solution of \eqref{eq:ReactorODEs} that begins with $c$ initially in $\Gamma$ will be such that $c(t)$ remains in $\Gamma$, and therefore in $\Omega$, for all later times along the solution.
\end{rem}

\bigskip\smallskip

	By a solution of \eqref{eq:ReactorODEs} we will mean a set of $n+2$  functions of time, $c_1(\cdot)$, $c_2(\cdot)$,$\dots$, $c_n(\cdot)$, $\theta(\cdot)$, $Q(\cdot)$, that satisfy \eqref{eq:ReactorODEs} in some time interval (particular to that solution). So that we can focus specifically on the toy reactor described, we will confine our attention to solutions such that,
at the initial time, $[c(\cdot),\theta(\cdot)]$ takes values in $\Gamma \times I$. 
	
	Each such solution gives rise to a process $\process \in \VSigma$ in the following way: Let $[t_i,t_f]$ be the time interval of the solution, and let $\alpha:= V\rho$ be the mass of the mixture in the vessel. Then the change of condition induced by the solution is 
\begin{equation}
\deltam := \alpha(\delta_{[c(t_f),\,\theta(t_f)]} - \delta_{[c(t_i),\,\theta(t_i)]}).
\end{equation}
The  heating measure \scrq\  induced by the solution is defined by its action on continuous functions: For each continuous $\varphi\,: \Sigma \to \R$, 
\begin{equation}
\int_{\Sigma}\varphi\,\, d\scrq = \int_{t_i}^{t_f}\varphi\,\,(c(t),\theta(t))V\,Q(t)\,dt.
\end{equation}
Hereafter, we suppose that \scrP\  contains as a subset all processes  corresponding  to those solutions of \eqref{eq:ReactorODEs} that are consistent with the fixed mass $\alpha$ of the mixture  under consideration and initial conditions within $\Gamma \times I$.

	Let $c^{\,0}$ be a mixture composition in $\Gamma$ and let $\theta^{\,0}$ and $\theta^{\,^*}$ be temperatures such that $[c^{\,0},\,\theta^{\,0}]$ and $[c^{\,0},\theta^{\,^*}]$ are both in the interior of $\Sigma$. Moreover, for small $\varepsilon > 0$, let $\theta_{\varepsilon}(\cdot)$ be the temperature history defined by \begin{equation}\label{eq:TempProfile}
\theta_{\,\varepsilon}(t) := \theta^{\,0} + (\theta^* - \theta^{\,0})\frac{t}{\varepsilon},  \quad \forall t \in [0,\varepsilon].
\end{equation}
Consider the first $n$ equations of \eqref{eq:ReactorODEs}, with the temperature given by \eqref{eq:TempProfile} on the time interval $[0,\varepsilon]$. From Remark \ref{rem:MixtureMassCons} and the smoothness of $\tilde{f}$, the resulting $n$ equations admit a solution $c_{\,\varepsilon}(\cdot)$ on $[0,\varepsilon]$ satisfying the initial condition $c_{\,\varepsilon}(0) = c^{\,0}$. The full system \eqref{eq:ReactorODEs} of $n+1$ differential equations then admits the solution $c_{\,\varepsilon}(\cdot),\  \theta_{\,\varepsilon}(\cdot),\  Q_{\,\varepsilon}(\cdot)$, with $Q_{\,\varepsilon}(\cdot)$ calculated from $c_{\,\varepsilon}(\cdot)$, $\theta_{\,\varepsilon}(\cdot)$, and the last equation of \eqref{eq:ReactorODEs}.

	This solution  gives rise to the process $\scrp_{\varepsilon} = (\Delta \scrm_{\varepsilon}, \scrq_{\,\varepsilon})$, where
\begin{equation}
\deltam_{\varepsilon} := \alpha(\delta_{[c_{\varepsilon}(\varepsilon),\,\theta^*]} - \delta_{[c^{\,0},\,\theta^{\,0}]}),
\end{equation}
and $\scrq_{\varepsilon}$ is given by the requirement that, for every continuous $\varphi: \Sigma \to \R$,
\begin{equation}\label{eq:q_epsDef}
\int_{\Sigma}\varphi\, d\scrq_{\,\varepsilon} = \int_{0}^{\varepsilon}\varphi\ (c_{\,\varepsilon}(t),\theta_{\,\varepsilon}(t))V\,Q_{\,\varepsilon}(t)\,dt= \int_{0}^{1}\varphi\ (c_{\,\varepsilon}(\varepsilon s),\theta_{\,\varepsilon}(\varepsilon s))V\,Q_{\,\varepsilon}(\varepsilon s)\,\varepsilon ds.
\end{equation}	

	Because $\Sigma$ is compact, there is a number $A$ such that, for all $(c,\theta) \in \Sigma$, $\Vert\tilde{f}(c,\theta)\,\Vert \leq A$. From the first $n$ equations in \eqref{eq:ReactorODEs} it follows that $\Vert\, c_{\varepsilon}(t) - c^0\,\Vert \leq A\,\varepsilon$ for all $t \in [0, \varepsilon]$. As a result $\deltam_{\varepsilon}$ converges to
\begin{equation}\label{eq:deltam_0Def}
\deltam_{0} := \alpha(\,\delta_{\,[\,c^{0},\,\theta^*]} - \delta_{\,[\,c^{0},\,\,\theta^{\,0}\,]}\ )
\end{equation}
as $\varepsilon$ approaches $0$. Moreover, compactness of $\Sigma$ ensures that there is  number B such that, on $\Sigma$, $\vert\nabla_c\, \tilde{u} \cdot \tilde{f}\,\vert \leq B$. From the last equation in \eqref{eq:ReactorODEs} and \eqref{eq:TempProfile} it follows that, for all $s \in [0,1]$,
\begin{equation}\label{eq:QepsEstimate}
\vert\, \varepsilon\, Q_{\,\varepsilon}(\varepsilon s) - (\theta^* -\  \theta^{\,0}) \frac{\partial \tilde{u}}{\partial \theta}(c_{\,\varepsilon}(\varepsilon\,s)\,,\,\theta^{\,0} + (\theta^* - \theta^{\,0})\, s)\,\vert\ \leq\ \varepsilon B.
\end{equation}
Note that as $\varepsilon$ approaches $0$ the second term on the left of \eqref{eq:QepsEstimate} approaches
\begin{equation}
(\theta^* -\  \theta^{\,0}) \frac{\partial \tilde{u}}{\partial \theta}(c^{\,0}\,,\,\theta^{\,0} + (\theta^* - \theta^{\,0})\, s).
\end{equation}
From this it follows that, as $\varepsilon$ approaches $0$, the heating measure $\scrq_{\varepsilon}$ given by \eqref{eq:q_epsDef} converges to $\scrq_{0}$ defined by the requirement that, for each continuous $\varphi: \Sigma \to \R$,

\begin{equation}\label{eq:q_0Def}
\int_{\Sigma}\varphi\, d\scrq_{0} =  \int_{\theta_0}^{\theta^*}\varphi\,(c^0,\,\theta')\, \frac{\partial \tilde{u}}{\partial \theta}(c^{\,0}\,,\,\theta') V\,d\theta'.
\end{equation}	

	As $\varepsilon$ approaches $0$, then,  the family of processes $\scrp_{\varepsilon} = (\Delta \scrm_{\varepsilon}, \scrq_{\,\varepsilon})$ in \scrP\ converges to $\scrp_{\,0} =  (\Delta \scrm_{0}, \scrq_{0})$ in \scrPhat, with $\Delta \scrm_{0}$ and $\scrq_{0}$ given by \eqref{eq:deltam_0Def} and \eqref{eq:q_0Def}. To see that $-\scrp_{\,0}$ is also a member of \scrPhat\ it suffices to reverse the roles of $\theta^0$ and $\theta^*$. 
	
	Thus, we have in \scrPhat\  a \emph{reversible} element of the form
\begin{equation}\label{eq:RevReactorProc}
	(\delta_{[c^0,\ \theta^*]} - \delta_{[c^0,\,\theta^0]}\, ,\,\scrq_0\,).
\end{equation}
Note that $c^{\,0}$, $\theta^{\,*}$, and 	$\theta^{\,0}$ were chosen arbitrarily. \emph{Neither $[c^{\,0}, \theta^{\,*}]$ nor $[c^{\,0}, \theta^{\,0}]$	 need be a state of chemical equilibrium --- that is, a stationary solution of \eqref{eq:ReactorODEs} with $Q = 0$.}

\begin{rem} If \theory\ in our example is a Kelvin-Planck theory, and if $\bar{\eta}\,(\cdot)$ and $\eta\,(\cdot)$ are Clausius-Duhem specific-entropy functions corresponding to the same Clausius-Duhem temperature scale, then the Clausius-Duhem inequality and the presence of $(\Delta \scrm_{0}, \scrq_{0})$ in \scrPhat\ require that
\begin{equation}\label{eq:EntropyDiffEquality}
\bar{\eta}\,(c^0, \theta^*) - \bar{\eta}\,(c^0, \theta^0) =\eta\,(c^0, \theta^*) - \eta\ (c^0, \theta^0).
\end{equation}
Because $c^{\,0}$, $\theta^{\,0}$ and $\theta^*$ (and $\rho$) were chosen arbitrarily,  \eqref{eq:EntropyDiffEquality} indicates that for each $c \in \Omega$, the functions $\bar{\eta}\,(c, \cdot)$ and $\eta\,(c, \cdot)$ differ by at most a constant. In fact, if $T: \Sigma \to \RP$ is the Clausius-Duhem temperature scale to which the specific-entropy function $\eta(\cdot)$ corresponds, then for each $c \in \Omega$ the Clausius-Duhem inequality, \eqref{eq:deltam_0Def}, and \eqref{eq:q_0Def} require that $\eta(c,\cdot)$ be a function of the form

\begin{equation}
\eta\,(c,\theta) =  \frac{1}{M\cdot c}\int_{\theta_0}^{\theta}\frac{1}{T(c,\theta')}\, \frac{\partial \tilde{u}}{\partial \theta}(c,\,\theta') \,d\theta' + \gamma(c),
\end{equation}
where $\theta^{\,0}$ is some fixed value in $I$.
\end{rem}

\bigskip
\begin{rem} Reversible processes in the canonical picture are fictitious ones that proceed so slowly, and with such small changes, that they could never be completed. Nevertheless, they are regarded as processes that, in principle, can be approximated by real ones sufficiently well that a complete theory should embrace them in the limit.

	In the context of the reacting-mixture theory  considered  in this appendix, the limiting reversible process (corresponding to $\varepsilon = 0$) is approximated by processes of a very different kind: As $\varepsilon$ approaches zero, they complete increasingly quickly, with increasingly rapid changes in temperature, and with increasingly higher rates of heat transfer  (all sustained over  vanishingly small time intervals).
\end{rem}
\bigskip

\begin{rem}	
	In the hypothetical $\varepsilon$-parameterized processes described, temperature is presumed to be spatially uniform despite very rapid rates of heat transfer. In the case of conductive heat transfer to the mixture at the reactor wall, large values of heat \emph{flux} are associated with large values of spatial temperature gradients in the mixture at the mixture boundary.  However, even in the case of conductive heat transfer from the  exterior at the mixture boundary, a large value of $Q_{\varepsilon}(t)$ (rate of heat receipt per unit reactor volume) does not necessitate a large heat flux (rate of heat receipt per unit area) at the reactor walls.   For each $\varepsilon > 0$ we can imagine the reactor vessel to be a tall narrow circular cylinder of fixed radius $R_{\varepsilon}$, in which case the  heat flux at the cylinder wall would be $Q_{\epsilon}(t)R_{\varepsilon}/2$. By choosing  $R_{\varepsilon}$ sufficiently small, the instantaneous heat fluxes at the wall  (and presumably the temperature gradients there) can be kept as small as we wish.
\end{rem}

\section{Clausius vs$.$ Clausius-Duhem Temperature Scales} \label{app:ClausiusVsCD}

In a 1983 article \cite{feinberg1983thermodynamics} we examined the existence and properties of \emph{Clausius temperature scales} (as distinct from Clausius-Duhem temperature scales) for thermodynamic theories that respect the Kelvin-Planck Second Law. In that context, for a theory with state space $\Sigma$, a Clausius temperature scale $T: \Sigma \to \mathbb{R}_+$ was a continuous function that satisfies, for all \emph{cyclic} processes, the \emph{Clausius inequality}. The Clausius inequality is just the form that the Clausius-Duhem inequality takes for cyclic processes.  Thus, $T(\cdot)$ is a Clausius temperature scale if it is continuous and satisfies the Clausius inequality condition 
\begin{equation}\label{eq:ClInequality}
0 \geq \int_{\Sigma}\frac{d\scrq}{T},\quad \forall \scrq \in \scrC,
\end{equation}
where $\scrC \subset \MSigma$  is the set of heating measures associated with the theory's cyclic processes. In \cite{feinberg1983thermodynamics}, \scrC\ was called the set of \emph{cyclic heating measures}.

	Because the focus of \cite{feinberg1983thermodynamics} was entirely on cyclic processes, there was nothing in \cite{feinberg1983thermodynamics} corresponding to \scrP, the full set of processes central to this article (apart from an anticipatory description of \scrP\  in the concluding remarks of \cite{feinberg1983thermodynamics}). Instead, \scrC\ was taken in \cite{feinberg1983thermodynamics} as a primitive notion, part of the description of a \emph{cyclic heating system}, $(\Sigma,\scrC)$.\footnote{There the Kelvin-Planck Second Law took the form $\textrm{cl}\,(\Cone(\scrC))\; \cap\; \MSigmaPl = \{0\}$.}
	
	In light of the interpretation given to \scrC\  in \cite{feinberg1983thermodynamics}, we hereafter define  \scrC\  in this appendix as follows: For a Kelvin-Planck theory \theory,
\begin{equation}\label{eq:DefScrC}
\scrC := \{\scrq \in \MSigma: (0,\scrq) \in \scrP\}.
\end{equation}
By a \emph{Clausius temperature scale for a Kelvin-Planck theory \theory}, we mean a continuous function $T: \Sigma \to \mathbb{R}_+$ that satisfies  the condition \eqref{eq:ClInequality}, with \scrC\  as in \eqref{eq:DefScrC}. For \theory\  we denote by $\scrT_{Clausius}$ the set of all its Clausius temperature scales.

	Our interest is in the relationship between the set of Clausius temperature scales for a Kelvin-Planck theory \theory\ and its set $\scrT_{CD}$ of Clausius-Duhem temperature scales, described in Definition \ref{def:CDpair}.  Recall that a continuous function $T: \Sigma \to \mathbb{R}_+$ is a Clausius-Duhem temperature scale for \theory\ if there is a continuous (specific entropy) function $\eta: \Sigma \to \mathbb{R}$ such that
\begin{equation}\label{eq:CDIneq}
\int_{\Sigma}\eta\: d(\deltam)\  \geq\  \int_{\Sigma}\frac{d\scrq}{T}, \quad \forall\  (\deltam,\scrq) \in \scrP.
\end{equation}
Because \eqref{eq:CDIneq} is a more demanding requirement than is \eqref{eq:ClInequality} we always have
\begin{equation}
\scrT_{CD} \subset \scrT_{Clausius}.
\end{equation}
The following example demonstrates that	$\scrT_{Clausius}$ can in fact be larger than $\scrT_{CD}$ even when \scrP\ (as distinct from \scrPhat) is a closed convex cone.

\begin{example}\label{ex:ClScaleEx1} Here we consider a Kelvin-Planck theory \theory\  with state space $\Sigma$ consisting of just two states, labeled $1$ and $2$; $\Sigma$ is given the discrete topology. The process set \scrP\  is the closed convex cone consisting of all $\process \in \VSigma$ such that, with $\alpha$ taking all real values,
\begin{equation}\label{eq:scrPforEx1}
\Delta\scrm = \lambda\,\xi\,(\delta_2 - \delta_1),\quad \scrq = \lambda\,[(\alpha-1)\,\delta_1 + (\alpha+1)\,\delta_2], \quad \xi \geq \alpha^2,\  \lambda \geq 0.
\end{equation}

We will consider first the nature of Clausius-Duhem entropy-temperature pairs for the Kelvin-Planck theory \theory. In this case, a Clausius-Duhem entropy function $\eta: \{1,2\} \to \mathbb{R}$ amounts to a specification of two numbers $\eta_{\,1}$ and  $\eta_{\,2}$.  A Clausius-Duhem temperature scale $T: \{1,2\} \to \mathbb{R}_+$ amounts to a specification of two positive numbers $T_1$ and  $T_2$. For $(\eta,T)$ to constitute a Clausius-Duhem pair, it must satisfy the Clausius-Duhem inequality for all $\process \in \scrP$. If we let $\beta_1 = 1/T_1$ and $\beta_2 = 1/T_2$, this amounts to the requirement that
\begin{equation}
\xi\,(\eta_{\,2} - \eta_{\,1}) \geq (\alpha-1)\beta_1 + (\alpha +1)\beta_2,\quad \forall \alpha,\  \forall\; \xi \geq \alpha^2.
\end{equation}
From this it is apparent that we must have $\eta_{\,2}- \eta_{\,1} > 0$ and
\begin{equation}\label{eq:CDalpha1}
\alpha^2\,(\eta_{\,2} - \eta_{\,1}) -(\beta_1 + \beta_2)\alpha + (\beta_1-\beta_2) \geq 0,\quad \forall \alpha.
\end{equation}
It is not difficult to see that \eqref{eq:CDalpha1} can be satisfied only if
\begin{equation}\label{eq:beta1>beta2}
\beta_1 > \beta_2 \quad \mathrm{or,\, equivalently,}\quad T_2 > T_1.
\end{equation}
In fact, so long as \eqref{eq:beta1>beta2} is satisfied, the Clausius-Duhem inequality will be satisfied for all members of \scrP\ with $\eta$ chosen such as to have
\begin{equation}\label{eq:EtaCond}
\eta_{\,2} - \eta_{\,1} \geq \frac{(\beta_1 + \beta_2)^2}{4\,(\beta_1-\beta_2)}.
\end{equation}
Thus, for the Kelvin-Planck theory under consideration the set of Clausius-Duhem temperature scales (Definition \ref{def:CDpair}) is given by 
\begin{equation}
\scrT_{CD} := \{T: \{1,2\} \to \mathbb{R}_+: T_2 > T_1\}.
\end{equation}

We turn next to consideration of the set of \emph{Clausius temperature scales} for the same Kelvin-Planck theory \theory. In this case, the set of cyclic heating measures (corresponding to $\xi =0,\  \alpha = 0$) is given by\footnote{Note that in this case $\scrC := \textrm{cl}\,(\Cone(\scrC))$.}
\begin{equation}
\scrC := \{\scrq \in \MSigma: \scrq := \lambda\,(\delta_2 - \delta_1), \ \lambda\  \geq\ 0\}.
\end{equation}
Thus, a temperature function satisfies the Clausius requirement \eqref{eq:ClInequality} precisely when
\begin{equation}
0 \geq \frac{1}{T_2} - \frac{1}{T_1}.
\end{equation}
This is to say that the set of Clausius temperature scales is given by
\begin{equation}\label{eq:ClScaleforEx1}
\scrT_{Clausius} := \{T: \{1,2\} \to \mathbb{R}_+: T_2 \geq T_1\}.
\end{equation}
Note that $\scrT_{CD}$ is contained in $\scrT_{Clausius}$ but is not identical to it.
\end{example}
\medskip

In Example \ref{ex:ClScaleEx1}, \scrP\  was a closed convex cone. In the following example, which is highly similar to the preceding one, \scrP\ is not closed, and the distinction between $\scrT_{CD}$ and $\scrT_{Clausius}$ becomes substantially more pronounced.

\begin{example}\label{ex:ClScaleEx2} In this example we consider a Kelvin-Planck theory \theory\ that is identical to the one in Example \ref{ex:ClScaleEx1} apart from one difference: Whereas in Example \ref{ex:ClScaleEx1} the parameter $\alpha$ was permitted to take on all real values, here we restrict $\alpha$ to the \emph{nonzero} real values. Despite the difference, the set of Clausius-Duhem pairs here remains what it was in Example \ref{ex:ClScaleEx1}. In particular, we again have
\begin{equation}
\scrT_{CD} := \{T: \{1,2\} \to \mathbb{R}_+: T_2 > T_1\}.
\end{equation}

In this case, however, $\alpha \neq 0$, so \scrP\  contains no cyclic processes at all, apart from the trivial one in which $\process = (0,0)$. Thus, we have $\scrC = \{0\}$ and, as a result,
\begin{equation}
\scrT_{Clausius} := \{T: \{1,2\} \to \mathbb{R}_+: T_1 > 0,  T_2 > 0 \}.
\end{equation}
\end{example}
\begin{center}
***
\end{center}
\medskip

	Note that in the definition of \scrC, given in this appendix by \eqref{eq:DefScrC}, the cyclic heating measures derive from the cyclic processes contained only within the true process set \scrP, as distinct from the sometimes larger set $\scrPhat := \textrm{cl}\,(\Cone(\scrC))$. This definition of \scrC\ was motivated entirely by the physical interpretation  given to \scrC\ in \cite{feinberg1983thermodynamics}, where \scrC\  was merely described as the set of heating measures associated with cyclic processes. There was no mention or even a description of the fuller set of all processes (except in the concluding remarks).
	
	In the main body of this article a \emph{cyclic element} of a thermodynamic theory \theory\  is a member \process\ of \scrPhat\  such that $\Delta\scrm =0$. The set of  cyclic elements of \theory\ contains not only all the cyclic processes in \scrP\  but also members of \VSigma\ that are approximated arbitrarily closely by ``almost cyclic" processes (or positive multiples of them). Leaving the interpretation of \scrC\  in \cite{feinberg1983thermodynamics} aside, we could just as well have defined a Clausius temperature scale for a Kelvin-Planck theory to be a continuous function  $T: \Sigma \to \mathbb{R}_+$ such that	
\begin{equation}\label{eq:ClInequalityStrong}
0 \geq \int_{\Sigma}\frac{d\scrq}{T},\quad \forall \scrq \in \scrC^*,
\end{equation}
where $\scrC^{*} \in \MSigma$  is the set of heating measures associated with the theory's cyclic elements. More precisely,\footnote{Were \scrC\ in \cite{feinberg1983thermodynamics} identified with $\scrC^*$ as defined here, all mathematics would remain the same; only the interpretation would be different.}
\begin{equation}\label{eq:DefScrCStar}
\scrC^* := \{\scrq \in \MSigma: (0,\scrq) \in \scrPhat\}.
\end{equation}
For \theory\ we denote the set of Clausius temperature scales defined in this way by $\scrT^*_{Clausius}$. Because the condition \eqref{eq:ClInequalityStrong} is more demanding than \eqref{eq:ClInequality} we call members of $\scrT^*_{Clausius}$ the \emph{strong Clausius temperature scales} for \theory.

	We will now reconsider Examples \ref{ex:ClScaleEx1} and \ref{ex:ClScaleEx2} in light of these ideas. Because in Example \ref{ex:ClScaleEx1} \scrPhat\ is identical to \scrP, we have $\scrC^* = \scrC$, so there is no distinction between $\scrT^*_{Clausius}$ and $\scrT_{Clausius}$. Thus, for Example \ref{ex:ClScaleEx1} we have 	
\begin{equation}
\scrT^*_{Clausius}=\scrT_{Clausius} := \{T: \{1,2\} \to \mathbb{R}_+: T_2 \geq T_1\}.
\end{equation}
Every Clausius temperature scale is also a strong Clausius temperature scale. 
	
	In the case of Example \ref{ex:ClScaleEx2} we noted that there are no cyclic processes in \scrP\  apart from the trivial one, so $\scrC = \{0\}$, and 
\begin{equation}
\scrT_{Clausius} := \{T: \{1,2\} \to \mathbb{R}_+: T_1 > 0,  T_2 > 0 \}.
\end{equation}	
	 However, for the process set \scrP\ described in Example \ref{ex:ClScaleEx2},\   $\scrPhat := \textrm{cl}\,(\Cone(\scrP))$ is identical to the process set \scrP\  of Example \ref{ex:ClScaleEx1}, given by \eqref{eq:scrPforEx1} with $\alpha$ taking all real values. Thus, for Example \ref{ex:ClScaleEx2}, $\scrC^*$ is identical to \scrC\  in Example \ref{ex:ClScaleEx1}, and
\begin{equation}
\scrT^*_{Clausius} := \{T: \{1,2\} \to \mathbb{R}_+: T_2 \geq T_1\}.
\end{equation} 
Example \ref{ex:ClScaleEx2} indicates that for a Kelvin-Planck theory the set of strong Clausius temperature scales can be very different from the set of Clausius temperature scales. 

	In both examples the sets of Clausius scales and strong Clausius scales are different from the set of Clausius-Duhem temperature scales, which in both examples is given by 
\begin{equation}
\scrT_{CD} := \{T: \{1,2\} \to \mathbb{R}_+: T_2 > T_1\}.
\end{equation}
Note that in Example \ref{ex:ClScaleEx2} the set of Clausius-Duhem scales is very different from the set of Clausius temperature scales, but in both examples the set of \emph{strong} Clausius temperature scales resembles very closely the set of Clausius-Duhem temperature scales.	

	In some ways this last observation is surprising, for the Clausius-Duhem temperature scale requirement (Definition \ref{def:CDpair}) must take cognizance of the entire process set \scrP, not just the cyclic ones, while the strong Clausius temperature scale requirement \eqref{eq:ClInequalityStrong} takes cognizance only of the heating measures for Kelvin-Planck theory's cyclic elements. In fact, though, Theorem \ref{thm:DensityThm}\footnote{In the theorem statement it is understood that $C(\Sigma,\mathbb{R})$ is given the sup norm topology.} below indicates that the phenomenon exhibited by the examples is general.

\begin{theorem}\label{thm:DensityThm} For any Kelvin-Planck theory the set of Clausius-Duhem temperature scales is dense in the set of strong Clausius temperature scales.
\end{theorem}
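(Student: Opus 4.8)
The plan is to pass to reciprocal temperatures $\beta := 1/T$, which converts both $\scrT_{CD}$ and $\scrT^{*}_{Clausius}$ into convex cones of continuous functions and exposes the relevant duality with $\MSigma$. Since $\Sigma$ is compact, $T \mapsto 1/T$ is a sup-norm homeomorphism of the strictly positive cone of $C(\Sigma,\R)$ onto itself, so it suffices to prove the density statement for reciprocals. Let $\scrB$ denote the cone of \emph{completable} reciprocal temperatures, $\scrB := \{\beta \in C(\Sigma,\R) : \exists\, \eta \in C(\Sigma,\R)\ \text{with}\ \int_{\Sigma}\eta\, d(\Delta\scrm) \ge \int_{\Sigma}\beta\, d\scrq\ \forall (\Delta\scrm,\scrq) \in \scrPhat\}$; then $\{1/T : T \in \scrT_{CD}\}$ is exactly the strictly positive part of $\scrB$. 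With $\scrC^{*} = \{\scrq \in \MSigma : (0,\scrq) \in \scrPhat\}$ as in the appendix, the reciprocals of the strong Clausius scales are the strictly positive part of the polar cone $\scrC^{*\circ} := \{\beta \in C(\Sigma,\R) : \int_{\Sigma}\beta\, d\scrq \le 0\ \forall \scrq \in \scrC^{*}\}$. Thus the whole theorem reduces to showing that $\scrB$ is sup-norm dense in $\scrC^{*\circ}$, after which one restricts to positive parts.

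The heart of the argument is the single identity $\scrB^{\circ} = \scrC^{*}$, the polar being taken in the $C(\Sigma,\R)$–$\MSigma$ duality. Recall that the weak-star continuous linear functionals on $\VSigma$ are represented by pairs $(\eta,\beta) \in C(\Sigma,\R) \times C(\Sigma,\R)$ via the pairing $\langle (\eta,\beta),(\Delta\scrm,\scrq)\rangle := \int_{\Sigma}\eta\, d(\Delta\scrm) - \int_{\Sigma}\beta\, d\scrq$. The functionals nonnegative on $\scrPhat$ form precisely the dual cone $\scrPhat^{*}$, whose projection onto the $\beta$-coordinate is $\scrB$. For any $\scrq$ one computes $\langle (\eta,\beta),(0,\scrq)\rangle = -\int_{\Sigma}\beta\, d\scrq$, so that $\scrq \in \scrB^{\circ}$ holds iff $\langle (\eta,\beta),(0,\scrq)\rangle \ge 0$ for every $(\eta,\beta) \in \scrPhat^{*}$, i.e.\ iff $(0,\scrq) \in (\scrPhat^{*})^{*}$. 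Because $\scrPhat$ is a closed convex cone (Definition of a thermodynamical theory), the bipolar theorem gives $(\scrPhat^{*})^{*} = \scrPhat$, whence $\scrq \in \scrB^{\circ}$ iff $(0,\scrq) \in \scrPhat$, that is $\scrB^{\circ} = \scrC^{*}$. Taking polars once more and using that a convex cone has the same weak and norm closures, $\overline{\scrB} = \scrB^{\circ\circ} = \scrC^{*\circ}$, which is the required density of the full cones.

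I expect the main obstacle to be the one nonelementary inclusion $\scrB^{\circ} \subseteq \scrC^{*}$, equivalently the bipolar identity $(\scrPhat^{*})^{*} = \scrPhat$: that a measure $\scrq$ penalized (in the sense $\int_{\Sigma}\beta\, d\scrq \le 0$) by \emph{every} completable reciprocal temperature must itself satisfy $(0,\scrq) \in \scrPhat$. If the abstract bipolar theorem is to be avoided, this inclusion can be proved by hand in the paper's idiom: if $(0,\scrq) \notin \scrPhat$, then since $\scrPhat$ is closed, convex, and a cone, the Hahn–Banach Theorem \ref{thm:HahnBanach} (with $A = \scrPhat$ and the compact convex set $B = \{(0,\scrq)\}$) produces a continuous functional $(\eta,\beta)$ nonnegative on $\scrPhat$ with $\int_{\Sigma}\beta\, d\scrq > 0$; this $\beta$ lies in $\scrB$ and witnesses $\scrq \notin \scrB^{\circ}$. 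It is worth stressing that the separating $\beta$ need not be positive here, which is exactly why it is cleanest to argue with the full cones $\scrB$ and $\scrC^{*\circ}$ and to impose positivity only at the very end.

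The remaining steps are routine. Given a strong Clausius scale $T$ with $\beta_{*} := 1/T \ge m > 0$ on the compact $\Sigma$, density of $\scrB$ in $\scrC^{*\circ}$ supplies $\beta_{n} \in \scrB$ with $\beta_{n} \to \beta_{*}$ uniformly; for large $n$ one has $\beta_{n} \ge m/2 > 0$, so each such $\beta_{n}$ is the reciprocal of a genuine Clausius–Duhem scale $T_{n} := 1/\beta_{n} \in \scrT_{CD}$, and $T_{n} \to T$ uniformly because inversion is continuous on functions bounded away from zero. Hence $\scrT_{CD}$ is sup-norm dense in $\scrT^{*}_{Clausius}$, as claimed.
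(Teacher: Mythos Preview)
Your proof is correct and takes a genuinely different route from the paper's. The paper argues by direct contradiction: given a strong Clausius reciprocal $\beta_0$ and a neighborhood $N$, it supposes $C_0(\Sigma)\oplus N$ is disjoint from the set $CD$ of Clausius--Duhem pairs, separates these with the open-set version of Hahn--Banach, and derives a contradiction by invoking Remark~\ref{rem:CDPairsDeterminePhat} (that any element of \scrQ\ can be pushed into \scrPhat\ by adding $(0,\nu)$ with $\nu\in\MSigmaPl$). You instead establish the structural identity $\overline{\scrB}=\scrC^{*\circ}$ in one stroke via the bipolar theorem applied to the closed convex cone \scrPhat, then restrict to strictly positive $\beta$ at the very end. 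Your approach is arguably cleaner and more conceptual---it exposes that the gap between Clausius--Duhem and strong Clausius scales is exactly a closure gap between dual cones---while the paper's direct separation argument stays closer to its earlier machinery and avoids the bipolar theorem and Mazur's lemma as named tools.

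Two minor points worth tightening. First, the weak-star dual of $\VSigma=\MSigmaZ\oplus\MSigma$ is $C_0(\Sigma)\oplus C(\Sigma,\R)$ (with $C_0(\Sigma):=C(\Sigma,\R)/\text{constants}$), not $C(\Sigma,\R)\times C(\Sigma,\R)$, since constants annihilate \MSigmaZ; this does not affect your projection onto $\beta$, but the bipolar step should be phrased in the correct pairing. Second, in your ``by hand'' alternative, be careful with the sign convention when applying Theorem~\ref{thm:HahnBanach}: the separating functional $f$ satisfies $f\le 0$ on the cone and $f>0$ on the singleton, so it is $-f$ that represents the desired $(\eta,\beta)\in\scrPhat^{*}$ with $\int_\Sigma\beta\,d\scrq>0$.
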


	In the proof Theorem of \ref{thm:DensityThm}, a Hahn-Banach separation theorem (but a different version \cite{robertson1980topological}) will again play a central role: \emph{Let $V$ be a real locally convex topological vector space. If $A$ and $B$ are disjoint nonempty convex subsets of $V$ and $A$ is open, then there is a continuous linear function \mbox{$f:V \to \mathbb{R}$} and a constant $\alpha$ such that $f(x) < \alpha$ for all $x \in A$ and $f(x) \geq \alpha$ for all $x \in B$.} 
	
	Note that if $B$ is closed under positive multiplication --- that is, if $\lambda b$ is a member of $B$ for every $b \in B$ and every $\lambda > 0$ --- then  $\alpha$ can be taken to be zero. In the proof below the set labeled $CD$, which will play the role of $B$, is closed under positive multiplication.

\begin{proof}[Proof of Theorem \ref{thm:DensityThm}] 
	We consider a Kelvin-Planck theory \theory. Let $K$ be the linear subspace of $C(\Sigma,\mathbb{R})$ consisting of all constant functions. The equivalence relation $\sim$ in $C(\Sigma,\mathbb{R})$ defined by $f \sim g$ if and only if $f-g \in K$  gives rise to the quotient vector space $C_0(\Sigma) := C(\Sigma,\mathbb{R})/K$, with vectors consisting of the equivalence classes and vector space operations inherited from $ C(\Sigma,\mathbb{R})$ in the usual way.  We give  $C_0(\Sigma)$ the usual quotient topology, in which case $C_0(\Sigma)$\  and \MSigmaZ\  are mutually dual spaces.
	
	The equivalence class in $C_0(\Sigma)$ containing $f \in C(\Sigma,\mathbb{R})$  is denoted $[f]$. Note that for every measure 
$\mu \in \MSigmaZ$ and every $g\in [f]$ we have
\begin{equation}\label{eq:Intf}
\int_{\Sigma}g\,d\mu = \int_{\Sigma}f\,d\mu,
\end{equation}
so there is no ambiguity in the definition
\begin{equation}\label{eq:Intf2}
\int_{\Sigma}[f]\,d\mu := \int_{\Sigma}f\,d\mu.
\end{equation}

	Notwithstanding a slight abuse of language and identification of $\beta$ with $1/T$, we will say that $[\eta ]\in C_0(\Sigma)$ and  $\beta \in C(\Sigma,\mathbb{R}_+)$ constitute a Clausius-Duhem pair $([\eta],\beta)$ for \theory\ if 
\begin{equation}\label{eq:CDIneq[]}
\int_{\Sigma}[\eta]\: d(\deltam)\  \geq\  \int_{\Sigma}\beta\,d\scrq, \quad \forall\  (\deltam,\scrq) \in \scrPhat.
\end{equation}
In this way we can identify the set of Clausius-Duhem pairs for \theory\ with a subset $CD$ of the locally convex topological vector space
\begin{equation}
 \scrV^*(\Sigma) := C_0(\Sigma) \oplus C(\Sigma,\mathbb{R}). 
\end{equation}
What we have called \VSigma\   and $\scrV^*(\Sigma)$  are mutually dual.

Let $\beta_0$ be the reciprocal of a strong Clausius temperature scale for the Kelvin-Planck theory \theory, and let  $N$ be an open convex neighborhood of $\beta_0$ in $C(\Sigma,\mathbb{R}_+)$. We will show that $N$ contains a $\beta$ such that, for some $[\eta] \in C_0(\Sigma)$, the pair $([\eta],\beta)$ is a member of $CD$. Suppose on the contrary that the open convex set $C_0(\Sigma)\oplus N$ is disjoint from the convex  set $CD$, which is invariant under positive multiplication. Then, from the Hahn-Banach separation theorem stated just above, there is a vector $(\Delta \scrm^*,\scrq^*) \in \VSigma$ such that

\begin{enumerate}[(i)]
\item $\int_{\Sigma}\,[\eta]\,d\Delta \scrm^* - \int_{\Sigma}\beta\;d\scrq^* \geq 0,\quad \forall \ ([\eta],\beta) \in CD$.
\item $\int_{\Sigma}\,[f]\,d\Delta \scrm^* - \int_{\Sigma}g\;d\scrq^* <0,\quad \forall \ [f] \in C_0(\Sigma),\ g \in N$.
\end{enumerate}
Note that (ii) cannot be satisfied unless $\Delta \scrm^* = 0$. Therefore, since $
\beta_0$ is a member of $N$, we have
\begin{equation}\label{eq:IntBeta0Dq*>0}
\int_{\Sigma}\,\beta_0\,d\scrq^* >0.
\end{equation}
From (i) and Remark \ref{rem:CDPairsDeterminePhat} it follows that there exists $\nu \in \MSigmaPl$ such that $(\Delta \scrm^*,\scrq^* + \nu)= (0,\scrq^* + \nu)$ is a member of \scrPhat, whereupon $\scrq^* + \nu$ is a member of $\scrC^*$. Because $\beta_0$ is the reciprocal of a strong Clausius temperature scale for \theory, we must have
\begin{equation}
\int_{\Sigma}\,\beta_0\,d\,(\scrq^* + \nu) = \int_{\Sigma}\,\beta_0\,d\scrq^*  + \int_{\Sigma}\,\beta_0\,d\nu \ \leq\  0.
\end{equation}
Since $\beta_0$ takes positive values and $\nu$ is a member of \MSigmaPl, we have
\begin{equation}
 \int_{\Sigma}\,\beta_0\,d\scrq^*  \ \leq\  0,
\end{equation}
which contradicts \eqref{eq:IntBeta0Dq*>0}. Therefore $N$ contains the reciprocal of a Clausius-Duhem temperature scale. 
\end{proof}

\bibliographystyle{spmpsci}
\bibliography{MonoLibrary-Thermo}

\end{document}